\tikzset{snake it/.style={decorate, decoration=snake}}
\newtheorem{theorem}[thm]{Theorem}
\newtheorem{lemma}[thm]{Lemma}
\newtheorem{proposition}[thm]{Proposition}
\theoremstyle{definition}
\newtheorem{example}[thm]{Example}
\newtheorem{remark}[thm]{Remark}
\definecolor{light-gray}{gray}{0.95}
\definecolor{mygreen}{rgb}{0,0.6,0}
\definecolor{mygray}{rgb}{0.5,0.5,0.5}
\definecolor{mymauve}{rgb}{0.58,0,0.82}
\definecolor{lightblue}{rgb}{0.8,0.9,1.0}
\definecolor{darkblue}{rgb}{0.6,0.8,1.0}
\definecolor{myyellow}{rgb}{1.0,0.95,0.6}
\newcommand{\OO}{\mathcal{O}}
\newcommand{\Aa}{\mathcal{A}}
\providecommand{\comment}[1]{}
\newcommand{\slim}[1]{{\scalebox{1.10}{$\scriptstyle #1$}}}
\tikzset{AUT style/.style={>=angle 60,every edge/.append style={},every state/.style={minimum size=16,inner sep=0}}}
\DeclareMathOperator{\im}{im}
\DeclareMathOperator{\row}{row}
\DeclareMathOperator{\supp}{supp}
\newcommand{\rk}{\operatorname{rk}}
\newcommand{\COL}{\mathsf{COL}}
\newcommand{\GL}{\mathrm{GL}}
\newcommand{\cL}{\mathcal{L}}
\newcommand{\M}{\mathcal{M}}
\newcommand{\N}{\mathbb{N}}
\newcommand{\NE}{\mathsf{NE}}
\newcommand{\Q}{\mathbb{Q}}
\newcommand{\ROW}{\mathsf{ROW}}
\newcommand{\U}{\mathcal{U}}
\newcommand{\UNIT}{\mathsf{UNIT}}
\newcommand{\V}{\mathcal{V}}
\newcommand{\Z}{\mathbb{Z}}
\def\WA{2.6} \def\WB{1.5} \def\WC{1.8} \def\WD{1.1} \def\WE{1.0}
\def\HA{2.6} \def\HB{1.8} \def\HC{2.0} \def\HD{1.6}
\pgfmathsetmacro{\Xzero}{0}
\pgfmathsetmacro{\Xone}{\WA}
\pgfmathsetmacro{\Xtwo}{\WA+\WB}
\pgfmathsetmacro{\Xthree}{\WA+\WB+\WC}
\pgfmathsetmacro{\Xfour}{\WA+\WB+\WC+\WD}
\pgfmathsetmacro{\Xfive}{\WA+\WB+\WC+\WD+\WE}
\pgfmathsetmacro{\Ytop}{0}
\pgfmathsetmacro{\Yone}{-\HA}
\pgfmathsetmacro{\Ytwo}{-(\HA+\HB)}
\pgfmathsetmacro{\Ythree}{-(\HA+\HB+\HC)}
\pgfmathsetmacro{\Yfour}{-(\HA+\HB+\HC+\HD)}
\title[The asymptotic size of finite irreducible semigroups of rational matrices]{The asymptotic size of finite irreducible semigroups of rational matrices}
\author[S.\ Kiefer]{Stefan Kiefer\lmcsorcid{0000-0003-4173-6877}}[a]
\address{Department of Computer Science, University of Oxford, UK}
\email{stefan.kiefer@cs.ox.ac.uk}
\author[A.\ Ryzhikov]{Andrew Ryzhikov\lmcsorcid{0000-0002-2031-2488}}[b]
\address{University of Warsaw, Poland}
\email{ryzhikov.andrew@gmail.com}
\thanks{Andrew Ryzhikov is supported by Polish National Science Centre SONATA BIS-12 grant number 2022/46/E/ST6/00230.}
\keywords{finite matrix semigroups, irreducible matrix semigroups, matrix mortality, aperiodic semigroups, unambiguous automata, transition monoids}
\subjclass[2012]{Theory of computation~Formal languages and automata theory; Computing methodologies~Symbolic and algebraic manipulation}
\begin{document}

\begin{abstract}
In this paper we investigate the maximum size of finite semigroups of rational $n \times n$ matrices, with the goal of shedding more light on their structure. 
Such semigroups provide a rich generalisation of transition monoids of unambiguous (and, in particular, deterministic) finite automata.
While in general such semigroups can be arbitrarily large in terms of~$n$, a classical result of Sch{\"{u}}tzenberger from 1962 implies an upper bound of $2^{\OO(n^2 \log n)}$ for irreducible semigroups. A semigroup of rational matrices is called irreducible if the only subspaces of~$\Q^n$ that are invariant for all matrices in the semigroup are $\Q^n$ and the subspace consisting only of the zero vector.
Irreducible matrix semigroups can be viewed as the building blocks of general matrix semigroups, and as such play an important role in mathematics and computer science. From the point of view of automata theory, they can be seen as a generalisation of strongly connected weighted automata.
 
Using a very different technique from that of Sch{\"{u}}tzenberger, we improve the upper bound on the cardinality to~$3^{n^2}$.
This is the main result of the paper.
The bound is in some sense tight, as we show that there exists, for every $n$, a finite irreducible semigroup with $3^{\lfloor n^2/4 \rfloor}$ rational matrices.
Our main result also leads to an improvement of a bound, due to Almeida and Steinberg, on the mortality threshold of finite semigroups of rational matrices.
The mortality threshold is a number~$\ell$ such that if the zero matrix is in the semigroup, then the zero matrix can be written as a product of at most $\ell$ matrices from any subset that generates the semigroup.
\end{abstract}

\maketitle

\section{Introduction}\label{sec:intro-new}
Given a finite set~$\Aa$ of $n \times n$ matrices, the semigroup generated by $\Aa$ is the set of all products of matrices from~$\Aa$. Matrix semigroups appear naturally in many areas of computer science, such as automata theory, dynamical systems, program analysis and formal verification. Let us illustrate that with the following two examples.

\paragraph*{Two examples.}

Consider the linear loop in \Cref{fig:motivating} (left), where the conditions for both exiting the loop and choosing one of the two conditional branches are abstracted out and denoted by $*$. We assume that, in each iteration, one of the two linear operators is nondeterministically applied to the vector~$\big(\begin{smallmatrix} x \\ y
\end{smallmatrix}\big)$ of variables.
The overall set of linear operators that can be applied to this vector during the execution of the loop is thus the matrix semigroup generated by~$\Aa_\ell$ in \Cref{fig:motivating} (centre). This semigroup can be seen as the set of all behaviours of the loop.

\begin{figure}[ht]\centering
\begin{subfigure}[c]{0.35\textwidth} \centering
\fbox{\begin{minipage}{13.2em}
\texttt{\textbf{while} (*)}

\texttt{\ \ \textbf{if} (*)}

\texttt{\ \ \ \ (x, y) := (-y, x - y)}

\texttt{\ \ \textbf{else}}

\texttt{\ \ \ \ (x, y) := (x - y, -y)}
\end{minipage}}
\end{subfigure}
\hspace{5pt}
\begin{subfigure}[c]{0.30\textwidth} \centering
\begin{align*} \Aa_\ell = & 
\left\{\begin{pmatrix}
     0 & -1 \\ 1 & -1
\end{pmatrix}, \begin{pmatrix}
      1 & -1 \\ 0 & -1
\end{pmatrix}\right\}  \\ 
\Aa_r = & \left\{\begin{pmatrix}
     0 & 1 & 0 \\ 0 & 1 & 1 \\ 0 & 0 & 0
\end{pmatrix}, \begin{pmatrix}
      0 & 0 & 0 \\ 0 & 0 & 0 \\ 1 & 0 & 1
\end{pmatrix}\right\} 
\end{align*}
\end{subfigure}
\begin{subfigure}[c]{0.23\textwidth} \centering
\begin{tikzpicture} [node distance = 2cm]
\tikzset{every state/.style={inner sep=1pt,minimum size=1.5em}}

\node [state] at (0, 0) (1) {$1$};
\node [state] at (1, 1.5) (2) {$2$};
\node [state] at (2, 0) (3) {$3$};

\path [-stealth, thick]
(1) edge [bend left=10] node[left] {$a$} (2)
(2) edge [loop above] node[above] {$a$} (2)
(2) edge [bend left=10] node[right] {$a$} (3)

(3) edge [loop right] node[right] {$b$} (3)
(3) edge [bend left=10] node[below] {$b$} (1)
;
\end{tikzpicture}
\end{subfigure}
\caption{Left: a linear loop with nondeterministic branching. Right: an NFA. Centre: generating sets of the corresponding matrix semigroups.}\label{fig:motivating}
\end{figure}
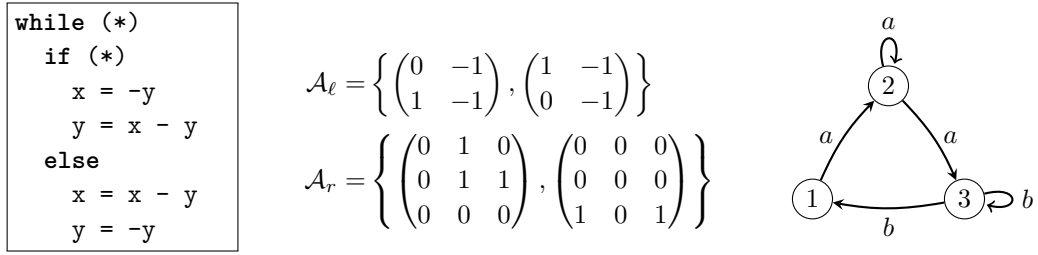

Similarly, the set of all possible behaviours of a nondeterministic finite automaton (NFA) is represented by its transition monoid. Note that the transition monoid does not depend on initial and final states. In this context, an NFA is called unambiguous if for every pair $p, q$ of states, each word labels at most one path from $p$ to $q$ in it. Clearly, every deterministic NFA is unambiguous. The transition monoid of an unambiguous NFA can be seen as the monoid generated by the set of transition matrices of the letters ($\Aa_r$ in the example in \Cref{fig:motivating} (centre)) \emph{with the usual addition and multiplication of the integers}. This fact allows us to consider unambiguous NFAs as automata with multiplicities (or, more generally, weighted automata), which significantly extends the variety of applicable techniques~\cite{Kiefer2025}.


\paragraph*{General motivation.}

In algorithmic applications, there is always a trade-off between the expressiveness of a model and the tractability of deciding its properties. This is especially important for matrix semigroups: for example, the question whether the semigroup generated by a given set of matrices contains the zero matrix is undecidable already for $3 \times 3$ integer matrices~\cite{Paterson1970}. For such problems, the known decidable special cases are usually obtained by restricting the dimension~\cite{Cassaigne2014,Colcombet2019,Bell2021} or considering only matrices with nonnegative entries~\cite{Ryzhikov2024RP}.

In this paper, we consider a different restriction: finiteness of the generated semigroup. It constitutes a ``middle ground'' between the two applications above. From the loop analysis point of view, it describes loops that have a finite set of behaviours regardless of the initial values of the variables. On the other hand, finite rational matrix semigroups are precisely the transition semigroups of minimal weighted automata over~$\Q$ with a finite image set~\cite{Mandel1977},
a rich generalisation of unambiguous and deterministic NFAs. Finite matrix semigroups are also building blocks of noncommutative power series of polynomial growth \cite[Section~9.2]{Berstel2011}. We remark that both $\Aa_\ell$ and $\Aa_r$ from \cref{fig:motivating} generate finite semigroups.


\paragraph*{The setting of the paper.} The general motivation behind the paper is to shed light on the structure of finite semigroups of $n \times n$ rational matrices and to develop new tools for analysing them. 
The concrete question we pursue here is about the maximum size of such semigroups in terms of~$n$. In general, there is no upper bound because for each $m \in \N$ the set $S_m \coloneqq \left\{ \big(\begin{smallmatrix} 0 & i \\ 0 & 0 \end{smallmatrix}\big) \mid 0 \le i < m \right\}$ forms a semigroup of size~$m$. However, intuitively, this example is in a sense degenerate, since only the same one-dimensional subspace is affected by the corresponding linear operators. Matrix semigroups where such degenerate behaviour does not occur are called irreducible (see the next section for the formal definition). They are actively studied in representation theory of finite monoids~\cite[Chapter 5]{Steinberg2016} and can be viewed as a generalisation of the concept of strongly connected finite automata to the case of weighted automata. Indeed, it is easy to see that a matrix semigroup is irreducible if and only if the weighted automaton corresponding to this matrix semigroup under every change of basis defined by a rational matrix is strongly connected.
In some applications, such as matrix mortality, a matrix semigroup can be directly analysed by decomposing it into irreducible semigroups of smaller dimension, see e.g.~\cite[Section~5]{AlmeidaSteinberg09} and the proof of \Cref{thm:min-rank-diameter} in the appendix.

\section{Existing results and our contributions}\label{sec:existing}

As usual, we denote by $\N$, $\Z$ and $\Q$ the sets of natural, integer and rational numbers, respectively.
We write $\GL_n(\Q)$ and $\GL_n(\Z)$ for the multiplicative group of all invertible $n \times n$ matrices over $\Q$ and~$\Z$, respectively. We denote by $\vec{v}$ a column vector of appropriate dimension, by $\vec{0}$ a zero column vector, by $A^\top$ the transpose of a matrix $A$, and by $O_n$ and~$I_n$ the $n\times n$ zero and identity  matrix, respectively. We assume that all vector spaces are over the field~$\Q$ and in particular that all matrices have rational entries, unless explicitly stated otherwise.

\paragraph*{The maximal size of finite matrix groups.}

Let us first highlight the importance of rational entries in our setting. Indeed, every cyclic group is isomorphic to a group generated by a $2 \times 2$ real rotation matrix, so there is no hope of bounding the size of finite matrix groups with real entries. The case of rational entries is however very different, and the maximal size of rational finite matrix groups is well understood.
By a folklore result (see, e.g., \cite[Theorem~1.6]{KuzmanovichPavlichenkov02}), any finite subgroup of~$\GL_n(\Q)$ is conjugate to a finite subgroup of~$\GL_n(\Z)$.
An elementary proof shows that the size of any finite subgroup of~$\GL_n(\Z)$ divides~$(2 n)!$; see, e.g., \cite[Chapter~IX]{Newman1972}.
Thus, denoting the size of the largest finite subgroup of~$\GL_n(\Q)$ by~$g(n)$, we have $g(n) \le (2 n)!$.
It is shown in a paper by Friedland~\cite{Friedland97} that $g(n) = 2^n n!$ holds for all sufficiently large~$n$.
This bound is attained by the group of signed permutation matrices (that is, matrices with entries in $\{-1, 0, 1\}$ with exactly one nonzero entry in each row and each column).
Friedland's proof rests on an article by Weisfeiler~\cite{Weisfeiler84} which in turn is based on the classification of finite simple groups.
Feit showed in an unpublished manuscript~\cite{FeitUnp} that $g(n) = 2^n n!$ holds if and only if $n \in \N \setminus \{2,4,6,7,8,9,10\}$; see also~\cite[Table~1]{Berry04} for a list of the maximal-size finite subgroups of~$\GL_n(\Q)$ for $n \in \{2,4,6,7,8,9,10\}$.
Feit's proof relies on an unpublished manuscript~\cite{WeisfeilerUnp} (also based on the classification of finite simple groups), which Weisfeiler left behind before his tragic disappearance.

\paragraph*{The maximal size of finite rational matrix semigroups.}

In view of the set $S_m$ from the previous section, bounds on the size of finite rational matrix semigroups either need to involve the number of generators (see, e.g., \cite{BumpusHKST20,Rida2025}) or an irreducibility assumption. A semigroup $S \subseteq \Q^{n \times n}$ is called \emph{irreducible} if the only vector spaces $\V \subseteq \Q^n$ such that $X \V \subseteq \V$ for all $X \in S$ are $\V = \Q^n$ and $\V = \{\vec{0}\}$. 
The semigroup~$S_m$ from above is not irreducible because for the vector space $\V = \{\big(\begin{smallmatrix} x \\ 0 \end{smallmatrix}\big) \mid x \in \Q\}$ we have $X \V = \{\big(\begin{smallmatrix} 0 \\ 0 \end{smallmatrix}\big)\} \subseteq \V$ for all $X \in S_m$.

Let us mention that the notion of irreducible matrices from nonnegative matrix theory, as in, e.g.,~\cite{Minc1988}, is weaker. Following~\cite{Minc1988}, a square matrix with nonnegative entries is called irreducible if permuting its rows and columns cannot result in a matrix of the shape $\big(\begin{smallmatrix}A & B \\ 0 & C\end{smallmatrix}\big)$, where~$A$ and~$C$ are square matrices. In terms of digraphs, a digraph is strongly connected if and only if its adjacency matrix is irreducible in this sense.
Clearly, a matrix generating an irreducible semigroup in our sense must be irreducible in the sense of~\cite{Minc1988}, but the converse is not always true.

In the book by Berstel and Reutenauer~\cite[Lemma~IX.1.2]{Berstel2011} it is shown that if a semigroup $S \subseteq \Q^{n \times n}$ is finite and irreducible then $|S| \le (2 n + 1)^{n^2} \in 2^{\OO(n^2 \log n)}$.
The technique, due to Sch{\"{u}}tzenberger~\cite{Schutzenberger62}, is based on the analysis of the coefficients of characteristic polynomials of the matrices in $S$, and in particular of their traces. In fact, the quantity $2 n + 1$ in the bound $(2 n + 1)^{n^2}$ corresponds to the possible traces in the set $\{-n, -n+1, \ldots, n\}$.

\paragraph*{The diameter and the mortality threshold.}

Let $S \subseteq \Q^{n \times n}$ be a finite semigroup generated by a subset $S_0 \subseteq S$. The length of a shortest product of elements from $S_0$ resulting in $X \in S$ is called the \emph{depth} of $X$. The \emph{diameter} of~$S$ is the maximum depth among all $X \in S$. Both the depth and the diameter are implicitly defined with respect to the set~$S_0$ of generators. Intuitively, the diameter indicates how fast one can reach any matrix. It is easy to see that the diameter of a finite semigroup cannot exceed its size.

In~2020 it was shown by Bumpus et al.~\cite{BumpusHKST20}, without assuming that $S$~is irreducible, that the diameter of $S$ with respect to any generating set is at most $2^{n (2 n + 3)} g(n)^{n+1} \in 2^{\OO(n^2 \log n)}$, where $g$~is the above-mentioned group-bound function with $g(n) \le (2 n)!$.
The technique used in~\cite{BumpusHKST20} is not based on traces but on exterior algebra. 
Nevertheless, their bound of $2^{\OO(n^2 \log n)}$ is strikingly similar to the aforementioned bound on semigroup cardinality. Panteleev~\cite{Panteleev15} showed that for every $n$ there exists a semigroup of diameter $2^{n + \Theta(\sqrt{n \log n})}$ with respect to some generating set. This semigroup is actually constructed as the transition monoid of a deterministic finite automaton, and thus consists of matrices with entries in $\{0, 1\}$ and exactly one nonzero entry in every row. As far as the authors know, no better lower bound is known for the maximum diameter of finite rational matrix semigroups.

The depth of the zero matrix (again, with respect to a set $S_0$ of generators) is called \emph{the mortality threshold} of $S$.
Intuitively, it indicates how fast one can reach the zero matrix.
Using a variation of the aforementioned technique due to Sch{\"{u}}tzenberger \cite{Schutzenberger62,Berstel2011}, Almeida and Steinberg~\cite{AlmeidaSteinberg09} showed that the mortality threshold of any finite rational matrix semigroup containing the zero matrix is at most $(2 n - 1)^{n^2}$ for $n \ge 2$.
This bound is once again $2^{\OO(n^2 \log n)}$, as in \cite{Schutzenberger62,Berstel2011,BumpusHKST20}.
The best known lower bound of~$\Omega(n^2)$ 
on the mortality threshold of finite semigroups of rational matrices is due to Rystsov~\cite{Rystsov1997}.
He conjectured that $\OO(n^2)$ is also the upper bound~\cite{Rystsov1992Rank}, which to the best of our knowledge has not been disproved. It is noteworthy that the lower bound again comes from the transition monoid of a DFA. 

\paragraph*{Our contributions and organization of the paper.}

Our main result, \cref{thm:main-bound}, is that any finite irreducible semigroup $S \subseteq \Q^{n \times n}$ has at most~$3^{n^2}$ elements, thus ``breaking'' the $2^{\OO(n^2 \log n)}$ barrier in previous results about both the cardinality and the diameter \cite{Schutzenberger62,Berstel2011,AlmeidaSteinberg09,BumpusHKST20}.
This is in a sense tight: as we show in~\cref{prop:lower}, any such bound needs to be at least~$3^{\lfloor n^2/4 \rfloor}$.
Recall that $|S|$~cannot be bounded purely in~$n$ without assuming irreducibility.
To showcase our technique, early on we give a relatively direct proof of the fact that if $S$~is also aperiodic (i.e., every subgroup of~$S$ has only one element), then $|S| \le 2^{n^2}$ (\Cref{thm:aper-bound}).
This result follows already from the proof of \cite[Theorem~5.8]{AlmeidaSteinberg09}, which used a different technique.
We also provide a lower bound of $2^{\lfloor n^2/4 \rfloor}$ for the aperiodic case (\Cref{prop:lower-nonnegative}). 
There are also finite irreducible matrix semigroups (over $\{-1,0,+1\}$) that do not contain the zero matrix but still have $2^{\Omega(n^2)}$ elements (\Cref{prop:lower-zero-free}).\footnote{This result was not contained in the conference version~\cite{KieferRyzhikovSTACS26} of this manuscript.}
Finally, as an application we show that if a finite, not necessarily irreducible, matrix semigroup contains the zero matrix, then its mortality threshold is at most~$3^{n^2}$ (\Cref{thm:min-rank-diameter}).
This improves the result by Almeida and Steinberg~\cite{AlmeidaSteinberg09}.

After the conference version~\cite{KieferRyzhikovSTACS26} of this manuscript was accepted, Steinberg~\cite{Steinberg26} provided a shorter proof of the upper bound~$3^{n^2}$, based on representation theory of finite semigroups. While his proof is more concise, we believe that our explicit linear-algebraic approach might have its own advantages. In particular, it might help with establishing stronger upper bounds on the values of the entries of matrices in irreducible finite semigroups, providing an improvement for the irreducible case of such a bound from~\cite{Rida2025}.

The paper is structured as follows. In \Cref{sec:irreducible} we establish basic facts about finite irreducible matrix semigroups and their (0-)minimal ideals. In \Cref{sub:G} we explain the construction of a group ``at'' (i.e., corresponding to) an idempotent from the (0-)minimal ideal. The results of \Cref{sec:irreducible,sub:G} are mostly known. 
\Cref {sec:upper-one,sec:upper-two} are dedicated to the proof of our main result (which we outline in the next paragraph below). Its application to matrix mortality can be found in \Cref{sec:mortality}. In \Cref{sec:lower} we provide lower bounds. We conclude the paper by highlighting some open problems in~\Cref{sec:conclusions}. Some proofs have been moved to an appendix for better readability of the manuscript.



\paragraph*{Technique.}
In contrast to~\cite{Schutzenberger62,AlmeidaSteinberg09,BumpusHKST20}, our technique is based neither on traces nor on exterior algebra. In fact, although the overall proof is non-trivial, it does not use anything outside of basic (semi)group theory and linear algebra. 
We outline our approach in the following.

Let $S$~be a finite rational $n \times n$ matrix semigroup, and let $T$~be a (0-)minimal ideal of~$S$.
One can show that all matrices in $T \setminus \{O_n\}$ have the same rank, say~$r$, which is the minimum nonzero rank in~$S$.
Given an idempotent $E \in T \setminus \{O_n\}$, fundamental semigroup theory (see \cref{sec:background-semigroup} for the background we need) describes a finite subgroup~$G$ of~$\GL_r(\Q)$ (often, e.g., in~\cite{Almeida2009representation}, called the maximal subgroup at~$E$), which reflects the symmetries in~$T$ (\Cref{sec:irreducible,sub:G}). As discussed above, the asymptotic size of such matrix groups is well understood.

We then construct an injective map $\Psi$ from $S$ to tuples of elements of $G \cup \{O_r\}$ (\cref{sub:Psi}). Thus, we have $|S| = |\Psi(S)|$ and so it suffices to bound the number of distinct tuples over $G \cup \{O_r\}$. This immediately leads to $|S| \le 3^{r^2 n^2}$, a bound that does not improve on~$2^{\OO(n^2 \log n)}$ unless the minimum rank $r$ is constant. However, in this way we already obtain a near-optimal bound on the cardinality of aperiodic semigroups (\Cref{sub:aperiodic}).

To strengthen the bound in the general case, we then show that the tuple elements are in a sense ``coupled'' via small matrix groups. 
This part (\cref{sub:width,sub:Hb,sub:row-prefix}) is the technical core of the paper and the most delicate aspect of the proof, even though it uses only elementary linear algebra.
Finally, the overall bound of $3^{n^2}$ is obtained by carefully counting the coupled tuples~$\Psi(X)$ within a two-dimensional grid (\cref{sub:overall}).



\section{Irreducible matrix semigroups}\label{sec:irreducible}
Let $n \in \N$ and let $S \subseteq \Q^{n \times n}$ be a semigroup.
A vector space $\V \subseteq \Q^n$ is called \emph{$S$-invariant} if $S \V \subseteq \V$, i.e., $X \V \subseteq \V$ for all $X \in S$.
The semigroup~$S$ is called \emph{irreducible} if the only $S$-invariant subspaces of~$\Q^n$ are $\Q^n$ and $\{\vec{0}\}$.
The definition of irreducibility means that there are only trivial $S$-invariant ``column'' subspaces.
But it implies that there are also only trivial $S$-invariant ``row'' subspaces:
\begin{proposition} \label{prop:left-right}
Let $\U \subseteq \Q^{1 \times n}$ be a (row) vector space such that $\U S \subseteq \U$.
Then $\U = \Q^{1 \times n}$ or $\U = \{\vec{0}^\top\}$.
\end{proposition}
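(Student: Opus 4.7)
The plan is to prove this via a duality argument: pass from the hypothesized $S$-invariant row subspace $\U$ to its annihilator (orthogonal complement) in $\Q^n$, apply the irreducibility hypothesis there, and then pass back.

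Concretely, first I would define
\[
\V \;\coloneqq\; \U^\perp \;=\; \{\,\vec{v} \in \Q^n \mid \vec{u}\,\vec{v} = 0 \text{ for all } \vec{u} \in \U\,\},
\]
which is a subspace of $\Q^n$. Next I would verify that $\V$ is $S$-invariant. For any $X \in S$, any $\vec{v} \in \V$, and any $\vec{u} \in \U$, associativity of matrix multiplication gives $\vec{u}(X\vec{v}) = (\vec{u}X)\vec{v}$; by hypothesis $\vec{u}X \in \U$, and so $(\vec{u}X)\vec{v} = 0$. Hence $X\vec{v}$ is annihilated by every element of $\U$, i.e., $X\vec{v} \in \V$, proving $S\V \subseteq \V$.

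By the irreducibility of $S$, the only possibilities are $\V = \{\vec{0}\}$ or $\V = \Q^n$. Since $\U$ is a finite-dimensional subspace of $\Q^{1\times n}$, the standard double-annihilator identity $(\U^\perp)^\perp = \U$ applies, so $\U = \V^\perp$. Thus the two cases for $\V$ translate to $\U = (\{\vec{0}\})^\perp = \Q^{1 \times n}$ and $\U = (\Q^n)^\perp = \{\vec{0}^\top\}$, which is exactly the desired conclusion.

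There is no real obstacle here; the only thing to be careful about is that the double-annihilator equality $(\U^\perp)^\perp = \U$ is used, which for a subspace of a finite-dimensional rational vector space follows from standard linear algebra (dimensions add to $n$). Everything else is a one-line manipulation using associativity and the definition of $\U^\perp$.
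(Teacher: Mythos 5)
Your proof is correct and takes essentially the same approach as the paper: both pass to the annihilator $\U^\perp$ (called $\U^\circ$ in the paper), show it is $S$-invariant by the same associativity computation, apply irreducibility, and finish with the dimension/double-annihilator identity.
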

\begin{proof}
Suppose that $\U \neq \Q^{1 \times n}$, i.e., $\U$~is a proper subspace.
Define $\U^\circ \coloneqq \{\vec{v} \in \Q^n \mid \U \vec{v} = \{\vec{0}\}\}$.
Then $\dim \U + \dim \U^\circ = n$.
Since $\dim \U < n$, we have $\dim \U^\circ > 0$, i.e., $\U^\circ \ne \{\vec{0}\}$.
Let $X \in S$ and let $\vec{v} \in \U^\circ$.
Since $\U X \subseteq \U$, we have $\U X \vec{v} \subseteq \U \vec{v} = \{\vec{0}\}$, as $\vec{v} \in \U^\circ$.
Thus, $X \vec{v} \in \U^\circ$.
Since $\vec{v} \in \U^\circ$ was arbitrary, it follows that $X \U^\circ \subseteq \U^\circ$.
Since $X \in S$ was arbitrary, $\U^\circ$~is $S$-invariant.
Since $\U^\circ \ne \{\vec{0}\}$ and $S$~is irreducible, we have $\U^\circ = \Q^n$.
Since $\dim \U + \dim \U^\circ = n$, it follows that $\U = \{\vec{0}^\top\}$.
\end{proof}

In the following we assume that $S$~is finite, irreducible and nonzero, i.e., $S \ne \{O_n\}$.
We write $S^1 \coloneqq S \cup \{I_n\}$.

\subsection{The \mbox{(0-)}minimal ideal 
}\label{sub:T}

A \emph{minimal} ideal of a semigroup is an ideal that is minimal within the set of all ideals.
A \emph{0-minimal} ideal of a semigroup with zero is an ideal that is minimal within the set of all nonzero ideals.
Every finite semigroup has a minimal ideal, and 
every non-trivial finite semigroup with zero also has a 0-minimal ideal.
Hence, $S$~has a \mbox{(0-)}minimal ideal, say $T \ne \{O_n\}$.
We show the following two  lemmas.
\begin{lemma} \label{lem:T2}
We have $T^2 \ne \{O_n\}$.
\end{lemma}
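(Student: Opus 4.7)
The plan is to argue by contradiction, using irreducibility applied to the column span of matrices in~$T$.

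Assume, towards a contradiction, that $T^2 = \{O_n\}$. Define
\[
\V \;\coloneqq\; \mathrm{span}\,\bigl\{X \vec{v} \mid X \in T,\ \vec{v} \in \Q^n\bigr\} \;\subseteq\; \Q^n,
\]
i.e., the sum of the column spaces of the matrices in~$T$. The first step is to verify that $\V$ is $S$-invariant: for any $Y \in S$ and any generator $X \vec{v}$ with $X \in T$, we have $Y(X\vec{v}) = (YX)\vec{v}$, and $YX \in T$ since $T$ is an ideal of~$S$, so $Y(X\vec{v}) \in \V$. Moreover $\V \ne \{\vec{0}\}$, because $T \ne \{O_n\}$ means some $X \in T$ has a nonzero column. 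Since $S$ is irreducible, $\V = \Q^n$.

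Next I would use the assumption $T^2 = \{O_n\}$ to derive a contradiction: for any $X \in T$ and any spanning vector $X'\vec{v}$ of~$\V$ (with $X' \in T$), we have $X(X'\vec{v}) = (XX')\vec{v} = O_n \vec{v} = \vec{0}$. Extending by linearity, $X\vec{w} = \vec{0}$ for every $\vec{w} \in \V = \Q^n$, so $X = O_n$. As $X \in T$ was arbitrary, this gives $T = \{O_n\}$, contradicting the standing assumption $T \ne \{O_n\}$. Hence $T^2 \ne \{O_n\}$.

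I do not foresee any real obstacle here: the only ingredient beyond elementary linear algebra is the fact that the column span of an ideal is automatically $S$-invariant, which follows immediately from the ideal property. (A symmetric argument using row spans and \Cref{prop:left-right} would also work, but is not needed.)
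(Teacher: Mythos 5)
Your proof is correct and rests on the same mechanism as the paper's: the ideal property of~$T$ makes the chosen column span $\V$ $S$-invariant, and irreducibility then forces $\V = \Q^n$. The only difference is cosmetic: you argue by contradiction using the full column span $\sum_{X\in T}\im X$, whereas the paper fixes a single $Z\in T\setminus\{O_n\}$ and a vector $\vec v$ with $Z\vec v\ne\vec 0$, takes $\V$ to be the $S^1$-orbit of $Z\vec v$, and constructively exhibits a nonzero element $(ZY)Z\in T^2$.
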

\begin{proof}
%
%
Let $Z \in T \setminus \{O_n\}$ and choose $\vec{v} \in \Q^n$ such that $Z \vec{v} \ne \vec{0}$.
Let $\V \subseteq \Q^n$ be the vector space spanned by all $Y Z \vec{v}$, where $Y \in S^1$, i.e.,
\[
 \V \ \coloneqq \ \left\{\sum_{ Y \in S^1} \lambda_Y Y Z \vec{v} \ \middle\vert\ \text{all }\lambda_Y \in \Q \right\}.
\]
Since $I_n Z \vec{v} = Z \vec{v} \ne \vec{0}$, we have $\V \ne \{\vec{0}\}$.
To show that $\V$~is $S$-invariant, consider an arbitrary spanning vector of~$\V$, say $Y Z \vec{v} \in \V$ with $Y \in S^1$.
Then, for all $X \in S$ we have $X Y \in S$ and hence $X Y Z \vec{v} \in \V$.
Thus, by linearity, $\V$~is $S$-invariant.
Since $S$~is irreducible, it follows that $\V = \Q^n$.

Since $\V = \Q^n$ and $Z \ne 0$, there is $\vec{w} \in \V \setminus \ker Z$.
Write $\vec{w} = \sum_{Y \in S^1} \lambda_Y Y Z \vec{v}$ with all $\lambda_Y \in \Q$.
Since $\vec{w} \not\in \ker Z$, we have $Z \vec{w} \ne \vec{0}$.
Thus, there is $Y \in S^1$ with $\lambda_Y Z Y Z \vec{v} \ne \vec{0}$.
Hence, $Z Y Z \ne 0$.
Since $Z \in T$ and $T$~is an ideal, we have $Z Y \in T$.
It follows that $(Z Y) Z \in T^2 \setminus \{O_n\}$.
\end{proof}

\begin{lemma} \label{lem:same-rank}
All matrices in~$T \setminus \{O_n\}$ have the same rank $r \in \{1, \ldots, n\}$.
\end{lemma}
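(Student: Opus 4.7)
The plan is to exploit the (0-)minimality of $T$ by exhibiting the set of lowest-rank matrices in $T$ as an ideal of $S$.

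Let $r$ be the minimum of $\rk(X)$ as $X$ ranges over $T \setminus \{O_n\}$. Since $T \ne \{O_n\}$ this set is nonempty, so $1 \le r \le n$. The key observation is that $T_{\le r} \coloneqq \{X \in T \mid \rk(X) \le r\}$ is an ideal of $S$: for $X \in T_{\le r}$ and $Y \in S$, the products $XY$ and $YX$ remain in $T$ because $T$ is an ideal, and their ranks do not exceed $\rk(X) \le r$ because multiplication cannot increase rank, so both products lie in $T_{\le r}$.

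Next I would invoke (0-)minimality to conclude $T_{\le r} = T$. If $O_n \notin S$, then $T$ is a minimal ideal of $S$, and since $T_{\le r}$ is a nonempty ideal contained in $T$, it must equal $T$. If $O_n \in S$, then $T$ is a $0$-minimal ideal; as $T_{\le r}$ still contains at least one matrix of rank $r \ge 1$ (by definition of $r$), it is a nonzero ideal contained in $T$, so again $T_{\le r} = T$. In either case every element of $T$ has rank at most $r$, and combined with the fact that $r$ is the minimum rank among nonzero elements of $T$, every element of $T \setminus \{O_n\}$ has rank exactly $r$.

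No step is really a serious obstacle; the argument rests only on the elementary facts that multiplication cannot increase matrix rank and that the ``rank $\le r$'' condition is preserved under multiplication by arbitrary elements of~$S$. The only care needed is to handle the minimal and $0$-minimal cases uniformly and to ensure that in both cases the ideal $T_{\le r}$ is genuinely non-trivial in the sense relevant to the corresponding notion of minimality, which is automatic from the choice of $r$ as the minimum \emph{nonzero} rank.
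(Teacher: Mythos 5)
Your proof is correct, and while it shares the same engine as the paper's (invoke (0\hbox{-})minimality of~$T$ together with the fact that $\rk(XY) \le \min\{\rk X, \rk Y\}$), the sub-ideal you exhibit is different. The paper picks one minimum-rank element $X \in T \setminus \{O_n\}$ and shows the principal ideal $S^1 X S^1$ is a nonzero ideal contained in~$T$, hence equals~$T$; every $Y \in T$ then factors as $Y = A X B$ with $A,B \in S^1$, forcing $\rk Y \le \rk X$. You instead form the rank sublevel set $T_{\le r} = \{X \in T \mid \rk X \le r\}$ and observe directly that rank monotonicity makes it an ideal contained in~$T$, which is nonzero because $r$ is a minimum over nonzero elements, so it must equal~$T$. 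Your variant is slightly more economical, avoiding the intermediate factorization $Y = A X B$, whereas the paper's route delivers as a byproduct the structural fact that any single minimum-rank element generates~$T$ as an ideal, which is in the spirit of the (0\hbox{-})simple-semigroup machinery used afterwards.
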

\begin{proof}
Pick $X \in T \setminus \{O_n\}$ of minimal nonzero rank.
Since $X \in T$ and $T$ is an ideal, we have $S^1 X S^1 \subseteq T$.
Moreover, $S^1 X S^1$ is an ideal of~$S$ and this ideal is nonzero, as it contains $X \ne 0$.
From the \mbox{(0-)}minimality of~$T$ we obtain $S^1 X S^1 = T$.
Hence, for any $Y \in T$ there exist $A, B \in S^1$ with $Y = A X B$.
For any $Y \in T \setminus \{O_n\}$,
$
\rk Y \ = \ \rk(A X B) \ \le \ \rk(X B) \ \le \ \rk X \ \le \ \rk Y\,,
$
using $\rk(C D)\le \min\{\rk C, \rk D\}$ and the minimality of~$\rk(X)$ among nonzero elements of $T$.
Hence all nonzero elements of $T$ have rank $\rk X$.
\end{proof}

For the remainder, let us write~$r$ for this common rank, i.e., $\rk X = r$ for all $X \in T \setminus \{O_n\}$.

\section{The idempotent \texorpdfstring{$E$}{E} and its group \texorpdfstring{$G$}{G}}
\label{sub:G}

Using machinery from basic semigroup theory (see \cref{sec:background-semigroup}), one can obtain the following lemmas.

\begin{lemma} \label{lem:ideal-completely-0-simple}
The ideal $T \subseteq S$ has an idempotent $E \in T \setminus \{O_n\}$ such that $E T E \setminus \{O_n\}$ is a finite group with identity~$E$.
\end{lemma}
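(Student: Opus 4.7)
The plan is to prove this by first establishing existence of a nonzero idempotent in $T$, and then constructing the group structure on $ETE \setminus \{O_n\}$ via a rank factorization of $E$.

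For the existence of a nonzero idempotent, I would first show $T^2 = T$, and more generally $T^k = T$ for all $k \geq 1$. Indeed, $T^2 \subseteq T$, and $S T^2 = (ST) T \subseteq T \cdot T = T^2$ (using that $T$ is an ideal of $S$) and similarly $T^2 S \subseteq T^2$, so $T^2$ is an ideal of $S$ contained in $T$. By (0-)minimality of $T$ and \Cref{lem:T2}, $T^2 = T$, and induction gives $T^k = T$. Now suppose for contradiction that every $X \in T$ is nilpotent, meaning $X^m = O_n$ for some $m$ (as some power of $X$ must be idempotent in the finite semigroup $T$, and by assumption the only idempotent would be $O_n$). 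I would then argue that this forces $T^N = \{O_n\}$ for $N = |T| + m + 1$, contradicting $T^N = T$: for any product $X_1 \cdots X_N$, pigeonhole on the partial products $Y_k = X_1 \cdots X_k$ yields $i < j$ with $Y_i = Y_j$, whence $Y_i = Y_i W$ for $W = X_{i+1} \cdots X_j \in T$, so $Y_i = Y_i W^m = O_n$, and hence the whole product vanishes. Thus some $X \in T$ is non-nilpotent, and a suitable power $E := X^m$ is a nonzero idempotent.

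Next, I would set $r = \rk E$ (which by \Cref{lem:same-rank} is also the rank of every $X \in T \setminus \{O_n\}$) and use a rank factorization $E = PQ$ with $P \in \Q^{n \times r}$, $Q \in \Q^{r \times n}$ of full rank. The relation $E^2 = E$ translates to $PQPQ = PQ$, and since $P$ has full column rank and $Q$ has full row rank, this forces $QP = I_r$. For any $X \in ETE$ the identity $EXE = X$ holds, and the assignment $\phi(X) := QXP \in \Q^{r \times r}$ defines a semigroup homomorphism from $ETE$ into $\Q^{r\times r}$, as
\[
\phi(X)\phi(Y) \;=\; QXP \cdot QYP \;=\; QX\,(PQ)\,YP \;=\; QXEYP \;=\; QXYP \;=\; \phi(XY),
\]
using $EY = Y$ for $Y \in ETE$. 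The map $\phi$ is injective on $ETE$, because $X = EXE = P(QXP)Q = P\phi(X)Q$ recovers $X$ from $\phi(X)$, and it sends $E$ to $QPQP = I_r$.

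For any $X \in ETE \setminus \{O_n\}$, since $\rk X = r$ and $X = P\phi(X)Q$ with $P, Q$ of rank~$r$, we get $\rk\phi(X) = r$, i.e., $\phi(X) \in \GL_r(\Q)$. Therefore $\phi(ETE \setminus \{O_n\})$ is a finite subsemigroup of $\GL_r(\Q)$ containing $I_r$; closure under multiplication within $ETE \setminus \{O_n\}$ follows because the product of invertible matrices is invertible, hence nonzero. Any finite subsemigroup of a group that contains the identity is itself a group: for each $Y$ in it, the powers $Y, Y^2, \ldots$ eventually repeat, giving $Y^k = I_r$ for some $k \geq 1$, so $Y^{k-1}$ provides an inverse within the subsemigroup. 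Transporting this back along the isomorphism $\phi$, the set $ETE \setminus \{O_n\}$ is a finite group with identity $E$, as claimed.

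The main obstacle I anticipate is the first step, extracting a nonzero idempotent from the hypothesis $T^2 \neq \{O_n\}$. This requires combining (0-)minimality (to upgrade $T^2 \neq \{O_n\}$ to $T^k = T$) with a pigeonhole/repetition argument ruling out the possibility that all elements of $T$ are nilpotent. The rest of the proof is comparatively routine linear algebra, once one notices that the rank-factorization trick effectively identifies $ETE$ with a subsemigroup of $r\times r$ matrices in which the nonzero elements are forced to be invertible.
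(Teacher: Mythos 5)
Your proof is correct, and it takes a genuinely different, more elementary route than the paper's. The paper proves this lemma by invoking abstract semigroup theory: \cite[Proposition~3.1.3]{Howie1995} together with \cref{lem:T2} shows $T$ is (0-)simple, \cite[Proposition~3.2.1]{Howie1995} upgrades this to completely (0-)simple, and the Rees--Suschkewitsch representation (\cref{thm:Rees}, via \cref{lem:group}) then exhibits $E T E \setminus \{O_n\}$ as a group in coordinates. You bypass all of that with linear algebra and a pigeonhole argument. You first observe that $T^2$ is a nonzero ideal contained in $T$, so (0-)minimality and \cref{lem:T2} give $T^k=T$ for all $k$; you then rule out the possibility that every element of~$T$ is nilpotent (since that would force $T^N = \{O_n\}$ for large $N$), yielding a nonzero idempotent $E \in T$ as an idempotent power of a non-nilpotent element. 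The group structure then comes from a rank factorization $E = PQ$ with $QP = I_r$ and the injective homomorphism $\phi(X) = QXP$, combined with the rank-constancy from \cref{lem:same-rank}, which forces the nonzero $\phi$-images to land in $\GL_r(\Q)$; a finite subsemigroup of a group is automatically a group, and transporting back gives the claim. Notably, your rank-factorization device is essentially the one the paper deploys later in \cref{lem:CD} and \cref{lem:group-G}, but you use it once to prove both the group structure and its $\GL_r(\Q)$ realization in a single step, whereas the paper first establishes the group structure abstractly and only afterwards re-derives the linear-algebraic picture. Your approach buys self-containedness and avoids the Rees machinery entirely; the paper's buys a direct connection to textbook semigroup structure theory, which it leans on for the rest of its exposition.
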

\begin{proof}
It follows from \cref{prop:3.1.3,lem:T2} that the \mbox{(0-)}minimal ideal~$T$ is \mbox{(0-)}simple.
Since $T$~is finite, by \cref{prop:3.2.1}, it follows that $T$~is completely \mbox{(0-)}simple.
In particular, $T$~has an idempotent $E \in T \setminus \{O_n\}$.
By \cref{lem:group}, $E T E \setminus \{O_n\}$ is a group with identity~$E$.
The group is finite, as $E T E \setminus \{O_n\} \subseteq S$ and $S$~is finite.
\end{proof}

\begin{lemma} \label{lem:ESE=ETE}
We have $E S E = E T E$ (which may contain $O_n$).
Hence, by \cref{lem:ideal-completely-0-simple}, $E S E \setminus \{O_n\}$ is a finite group with identity~$E$.
\end{lemma}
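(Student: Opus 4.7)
The plan is to prove the equality $ESE = ETE$ by a short double inclusion, and then deduce the ``hence'' clause directly from the previous \cref{lem:ideal-completely-0-simple}.

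First, the inclusion $ETE \subseteq ESE$ is immediate from $T \subseteq S$. For the reverse inclusion, I would take an arbitrary element of the form $EXE$ with $X \in S$ and argue as follows. Because $E \in T$ and $T$ is a two-sided ideal of~$S$, the product $EX$ lies in $TS \subseteq T$, and then $(EX)E$ lies in $TS \subseteq T$; so $EXE \in T$. Using idempotency $E^2 = E$, I can now rewrite
\[
EXE \;=\; E(EXE)E,
\]
which exhibits $EXE$ as an element of $ETE$, since the middle factor $EXE$ belongs to~$T$. This gives $ESE \subseteq ETE$ and hence equality.

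For the ``hence'' clause, once $ESE = ETE$ is established, \cref{lem:ideal-completely-0-simple} asserts that $ETE \setminus \{O_n\}$ is a finite group with identity~$E$, so the same holds for $ESE \setminus \{O_n\}$ without any further work.

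The main obstacle: there really is none beyond bookkeeping — the only point that requires minor care is invoking the ideal property on both sides (first to get $EX \in T$, then to get $EXE \in T$) before applying idempotency to re-express $EXE$ as a member of $ETE$. No additional machinery beyond the hypotheses on $T$ and $E$ is needed.
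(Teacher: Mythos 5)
Your proof is correct and follows essentially the same route as the paper: both directions come from $T \subseteq S$, the ideal property, and the idempotency $E = E^2$. The only cosmetic difference is that you re-bracket via $EXE = E(EXE)E$ after showing $EXE \in T$, while the paper stops one step earlier at $EXE = E(EX)E$ with $EX \in T$; both are equally valid.
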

\begin{proof}
Since $E \in T$ and $T$~is an ideal, for all $X \in S$ we have $E X \in T$ and, hence, $E X E = E E X E \in E T E$.
Thus, $E S E \subseteq E T E$.
The converse inclusion is immediate from $T \subseteq S$.
\end{proof}

Fix the idempotent $E \in T \setminus \{O_n\}$ from \cref{lem:ideal-completely-0-simple}.
The following lemma follows from the idempotence of~$E$.

\begin{lemma} \label{lem:CD}
There are matrices $D \in \Q^{n \times r}$ and $C \in \Q^{r \times n}$ with $E = D C$ and $C D = I_r$.
\end{lemma}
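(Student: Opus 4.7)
The plan is to construct $D$ and $C$ from a basis of $\im E$ and then use the idempotence of $E$ to force the identity $CD = I_r$.

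First, since $\rk E = r$, the image $\im E \subseteq \Q^n$ has dimension $r$. I would choose a basis $\vec{d}_1, \ldots, \vec{d}_r$ of $\im E$ and let $D \in \Q^{n \times r}$ be the matrix with these vectors as columns. By construction, $D$ has rank $r$, so it has a left inverse, i.e., there exists $D^+ \in \Q^{r \times n}$ with $D^+ D = I_r$. Next, I would build $C \in \Q^{r \times n}$ column by column: for each column $\vec{e}_j$ of $E$, since $\vec{e}_j \in \im E = \spn{\vec{d}_1, \ldots, \vec{d}_r}$, there is a unique tuple of coefficients $(c_{1j}, \ldots, c_{rj})$ expressing $\vec{e}_j$ in that basis. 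Taking those tuples as columns of $C$ gives $E = DC$. A standard rank argument (using $\rk(DC) \le \min(\rk D, \rk C)$ together with $\rk E = r = \rk D$) yields $\rk C = r$, so $C$ has a right inverse as well.

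Finally, the idempotence of $E$ does the rest. From $E^2 = E$ we obtain
\[
DCDC \;=\; DC.
\]
Multiplying on the left by $D^+$ eliminates the leading $D$ and gives $CDC = C$. Then multiplying on the right by a right inverse of $C$ eliminates the trailing $C$ and yields $CD = I_r$, as required.

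There is essentially no serious obstacle here; the only thing to take care of is to justify cleanly that the factorization $E = DC$ exists with $D$ and $C$ of full rank $r$, after which cancellation from both sides using the one-sided inverses is routine linear algebra. The lemma can equivalently be viewed as the statement that every rank-$r$ idempotent over $\Q$ admits a rank factorization witnessing $E$ as a projection onto an $r$-dimensional subspace along a complementary subspace.
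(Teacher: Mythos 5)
Your proof is correct, and it takes a genuinely different route from the paper's. The paper builds $D$ from columns of $E$ forming a basis of $\im E$, takes $W$ with columns a basis of $\ker E$, observes that idempotence gives $\im E \cap \ker E = \{\vec{0}\}$ so that $Q = \begin{pmatrix} D & W\end{pmatrix}$ is invertible, and reads off $C$ from $Q^{-1}$; the relations $DC=E$ and $CD=I_r$ are then verified by direct block computation. You instead take \emph{any} rank factorization $E = DC$ with $D$ of full column rank and $C$ of full row rank, and extract $CD = I_r$ from $E^2 = E$ by cancelling a left inverse of $D$ on the left and a right inverse of $C$ on the right. Your argument is slightly more abstract and arguably cleaner: it isolates the key algebraic fact (rank factorization of an idempotent plus one-sided cancellation), whereas the paper's argument is more explicitly constructive, making visible the direct-sum decomposition $\Q^n = \im E \oplus \ker E$ that witnesses $E$ as a projection. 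Both are complete and both use idempotence in an essential way, just at different points: the paper to get invertibility of $Q$, you to get the relation $DCDC = DC$.
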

\begin{proof}[Proof of \Cref{lem:CD}]
Let $D \in \Q^{n \times r}$ be a matrix consisting of columns of~$E$ that form a basis of~$\im E$.
Since $E E = E$ and the columns of~$D$ are columns of~$E$, we have $E D = D$.
By the rank-nullity theorem, we have $\dim (\ker E) = n - r$.
Let $W \in \Q^{n \times (n-r)}$ be a matrix whose columns form a basis of~$\ker E$.
Thus, $E W = 0$.
Since $E E = E$, we have $\im E \cap \ker E = \{\vec{0}\}$, so the columns of the matrix $Q \in \Q^{n \times n}$ with $Q = \begin{pmatrix} D & W \end{pmatrix}$ are linearly independent.
Thus, $Q$~is invertible.
We have
\[
E Q \ = \ \begin{pmatrix} E D & E W \end{pmatrix} \ = \ \begin{pmatrix} D & 0 \end{pmatrix} \ = \ Q \begin{pmatrix} I_r & 0 \\ 0 & 0 \end{pmatrix}\,.
\]
Hence,
\[
E \ = \ Q \begin{pmatrix} I_r & 0 \\ 0 & 0 \end{pmatrix} Q^{-1}\,.
\]
Define $C \coloneqq \begin{pmatrix} I_r & 0 \end{pmatrix} Q^{-1}$ and recall that $D = Q \begin{pmatrix} I_r \\ 0 \end{pmatrix}$.
Then, as required, we have
\begin{align*}
C D \ &= \ \begin{pmatrix} I_r & 0 \end{pmatrix} Q^{-1} Q \begin{pmatrix} I_r \\ 0 \end{pmatrix} \ = \ I_r \quad \text{and} \\
D C \ &= \  Q \begin{pmatrix} I_r \\ 0 \end{pmatrix} \begin{pmatrix} I_r & 0 \end{pmatrix} Q^{-1} \ = \ Q \begin{pmatrix} I_r & 0 \\ 0 & 0 \end{pmatrix} Q^{-1} \ = \ E\,.\qedhere
\end{align*}
\end{proof}

The factorization $E = D C$ from \cref{lem:CD} allows us to put the group from \cref{lem:ESE=ETE} in a more succinct form, which will be useful when invoking bounds on the size of matrix groups.
To this end, fix $D \in \Q^{n \times r}$ and $C \in \Q^{r \times n}$ from \cref{lem:CD}, so that $D C = E$ and $C D = I_r$.
Define
\[
 G \ \coloneqq \ C S D \setminus \{O_r\} \ \subseteq \ \Q^{r \times r}\,.
\]
We have the following lemma.
\begin{lemma} \label{lem:group-G}
The set $G$ is a finite group, i.e., a finite subgroup of~$\GL_r(\Q)$.
Moreover, the finite group $E S E \setminus \{O_n\}$ from \cref{lem:ESE=ETE} is isomorphic to~$G$ via the isomorphism 
\[
\phi : E S E \setminus \{O_n\} \to G \quad \text{with} \quad \phi(X) \ \coloneqq \ C X D.
\]
\end{lemma}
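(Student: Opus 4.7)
The plan is to define the candidate map $\phi(X) = CXD$, verify it maps $ESE \setminus \{O_n\}$ bijectively and homomorphically onto $G$, and then deduce the group structure of $G$ from that of $ESE \setminus \{O_n\}$, which was established in \Cref{lem:ESE=ETE,lem:ideal-completely-0-simple}.

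The first step is to exploit the factorization $E = DC$ together with $CD = I_r$ to establish a ``cancellation'' identity: for any matrix $M \in \Q^{n \times n}$, one has $C(EME)D = CD\,CMD\,CD = CMD$, and conversely $D(CMD)C = EME$. This pair of identities does the bulk of the work. In particular, for $X \in ESE \setminus \{O_n\}$, write $X = EYE$ with $Y \in S$; then $\phi(X) = CXD = CYD \in CSD$. Moreover, if $\phi(X) = O_r$, the cancellation identity gives $X = EYE = D(CYD)C = D O_r C = O_n$, contradicting $X \neq O_n$. Thus $\phi$ does land in $G = CSD \setminus \{O_r\}$.

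Next I would show that $\phi$ is a homomorphism. For $X, Y \in ESE \setminus \{O_n\}$, since $E$ is the identity of that group we have $XE = X$ and $EY = Y$. Therefore
\begin{equation*}
\phi(X)\phi(Y) \;=\; CXD\cdot CYD \;=\; CX(DC)YD \;=\; CXEYD \;=\; CXYD \;=\; \phi(XY),
\end{equation*}
where $XY$ is again in $ESE\setminus\{O_n\}$ by the group property. Injectivity is immediate from the cancellation identity above: if $\phi(X) = \phi(X')$ then applying $D(\cdot)C$ yields $EXE = EX'E$, but $X = EXE$ and $X' = EX'E$ for elements of $ESE\setminus\{O_n\}$. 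For surjectivity, given $g = CZD \in G$ with $Z \in S$, set $X \coloneqq EZE$; the cancellation identity gives $\phi(X) = CZD = g \neq O_r$, which forces $X \neq O_n$, so $X \in ESE\setminus\{O_n\}$ and $\phi(X) = g$.

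Having established that $\phi$ is a group isomorphism, $G$ inherits the group structure from $ESE\setminus\{O_n\}$ and is in particular finite. The identity of $G$ is $\phi(E) = CED = CDCD = I_r$, and a finite semigroup of matrices whose identity is $I_r$ must consist entirely of invertible matrices (each element has finite order, so is a root of unity in the matrix sense and therefore lies in $\GL_r(\Q)$). Hence $G \subseteq \GL_r(\Q)$, completing the claim. I do not anticipate a real obstacle here; the only mildly delicate point is keeping track of the nonzero-preservation of $\phi$ in both directions, which is handled uniformly by the $CD = I_r$ cancellation.
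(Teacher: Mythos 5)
Your proof is correct and follows essentially the same route as the paper's: both hinge on the two cancellation identities $C(EME)D = CMD$ and $D(CMD)C = EME$ coming from $CD = I_r$ and $DC = E$, and both check injectivity, surjectivity, and the homomorphism property by direct computation with these identities. The only stylistic difference is that the paper packages the injectivity argument via the linear map $\Phi$ on all of $E\Q^{n\times n}E$ (trivial kernel $\Rightarrow$ injective), whereas you apply the cancellation directly; your closing remark about invertibility (each $g$ has a group inverse relative to the identity $I_r$, hence is an invertible matrix) makes explicit something the paper leaves implicit.
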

\begin{proof}
Let us first consider a generalization of~$\phi$, namely the map $\Phi : E \Q^{n \times n} E \to \Q^{r \times r}$ with $\Phi(X) \coloneqq C X D$.
Note that $\Phi$~is a linear map and that
\[
 \Phi(E X E) \ = \ C D C X D C D \ = \ C X D \qquad \text{for all $X \in \Q^{n \times n}$.}
\]
The map~$\Phi$ has a trivial kernel, since if $C X D = \Phi(E X E) = O_r$ then $E X E = D C X D C = O_n$.
Thus, $\Phi$~and hence $\phi$ are injective.
It also follows that $\phi(E S E \setminus \{O_n\}) \subseteq G$.

Towards surjectivity of~$\phi$, let $X \in S$ with $C X D \ne 0$.
Then $\Phi(E X E) = C X D$.
If $E X E = O_n$ then $\Phi(E X E) = \Phi(O_n) = O_r \ne C X D$, a contradiction; hence $E X E \ne O_n$.
Thus, we also have $\phi(E X E) = \Phi(E X E) = C X D$.
It follows that $\phi(E S E \setminus \{O_n\}) = G$; i.e., $\phi$~is surjective.

It remains to show that $\phi$~is a homomorphism.
Using $C E = C D C = C$ and $E D = D C D = D$, we obtain $\phi(E)   =  C E D  =  C D C D  =  I_r$ and
\begin{align*}
\phi(E X E \cdot E Y E)  & \, = \,  C E X E Y E D \, = \,  C X E Y D \, = \, C X D C Y D \, = \, \phi(E X E) \cdot \phi(E Y E). \qedhere
\end{align*}
\end{proof}

\begin{example} \label{ex:1}
Set
\[
  C_1 \ \coloneqq \ \begin{pmatrix}1&0&0\\0&1&0\end{pmatrix},\qquad
  C_2 \ \coloneqq \ \begin{pmatrix}0&1&0\\0&0&1\end{pmatrix},\qquad
  D_1 \ \coloneqq \ \begin{pmatrix}1&0\\0&1\\1&0\end{pmatrix}, \qquad
  D_2 \ \coloneqq \ \begin{pmatrix}0&1\\1&0\\0&-1\end{pmatrix}\,.
\]
Then
\[
  C_1 D_1 \ = \ \begin{pmatrix}1&0\\0&1\end{pmatrix},\qquad
  C_1 D_2 \ = \ \begin{pmatrix}0&1\\1&0\end{pmatrix} \ = \  C_2 D_1,\qquad
  C_2 D_2 \ = \ \begin{pmatrix}1&0\\0&-1\end{pmatrix}\,.
\]
Let $G \subseteq \GL_2(\Q)$ be the group of signed $2 \times 2$ permutation matrices (order $8$).
Define 
\[
  S \ \coloneqq \ \{D_i g C_j \mid i,j \in \{1,2\},\ g \in G\}\,.
\]
The set~$S$ forms a semigroup, as for any $i,j,k,\ell \in \{1,2\}$ and $g,h \in G$,
\[
 (D_i g C_j) (D_k h C_\ell) \ = \ D_i (g C_j D_k h) C_\ell\,,
\]
and each $C_j D_k$ is in~$G$, as computed above.
The following facts about~$S$ can be checked:
(i)~$S$~has $8 \cdot 2 \cdot 2 = 32$ distinct elements, none of which is the zero matrix,
(ii)~all elements have rank $r=2$,
(iii)~$S$~is irreducible,
(iv)~$S$~is its own minimal ideal---equivalently, $S$~is simple,
(v)~$E \coloneqq D_1 C_1 \in S$ is an idempotent (recall that $C_1 D_1 = I_2$), and 
(vi)~$C_1 S D_1 = G$ (a group, as also implied by \cref{lem:group-G}).
\qed
\end{example}

\section{Upper bound: setting the stage}\label{sec:upper-one}
\subsection{The injective map \texorpdfstring{$\Psi$}{Psi}} \label{sub:Psi}

To explain our general approach, consider for the moment a map $\mu: S \to G \cup \{O_r\}$ with $\mu(X) = C X D$, a generalization of the map~$\phi$ from \cref{lem:group-G}.
We can use bounds on the group size mentioned in \Cref{sec:existing} to estimate~$|G|$.
But since $\mu$~is not in general injective, $|\mu(S)|$ does not bound~$|S|$.
Nevertheless, in the following we define a map~$\Psi$ with multiple components $\psi_{i j}$, each of which is a variant of~$\mu$.
More concretely, we have $\psi_{i j}: S \to G \cup \{O_r\}$ with $\psi_{i j}(X) = C U_i X V_j D$ for some matrices $U_i, V_j \in S^{1}$.
The matrices $U_i, V_j$ are chosen so that $\Psi : X \mapsto (\psi_{i j}(X))_{i j}$ \emph{is} injective.
Intuitively, the different $\psi_{i j}$ exhibit different ``group aspects'' of a semigroup element~$X$.
Since $\Psi$~is injective, we have $|S| = |\Psi(S)|$.
The known group bounds then help to estimate~$|\Psi(S)|$. We provide further intuition of our approach at the end of this subsection. 

Since $\sum_{X \in S^1} \im (X D)$ is $S$-invariant and nonzero (it contains $\im D$) and $S$~is irreducible, we have $\sum_{X \in S^1} \im (X D) = \Q^n$.
Thus, there exist $V_1, \ldots, V_v \in S^1$ ($v \ge 1$) such that
\[
 \im(V_1 D) + \cdots + \im(V_v D) \ = \ \Q^n\,.
\]
As we will see later, this sum need not be direct.

Dually (cf.\ \cref{prop:left-right}), there exist $U_1, \ldots, U_u \in S^1$ ($u \ge 1$) such that
\[
 \row(C U_1) + \cdots + \row(C U_u) \ = \ \Q^{1 \times n}\,. 
\]
For $0 \le a \le u$ and $0 \le b \le v$ define the vector spaces
\begin{align*}
 \U_{a} \ &\coloneqq \ \row(C U_1) + \cdots + \row(C U_{a}) \ \subseteq \ \Q^{1 \times n} \qquad \text{and} \\
 \V_{b} \ &\coloneqq \ \im(V_1 D) + \cdots + \im(V_{b} D) \ \subseteq \ \Q^n\,.
\end{align*}
By convention, $\U_0 = \{\vec{0}^\top\}$ and $\V_0 = \{\vec{0}\}$.
Without loss of generality, we can assume for all $1 \le a \le u$ that $\row(C U_a) \not\subseteq \U_{a-1}$.
Similarly, we also assume for all $1 \le b \le v$ that $\im(V_b D) \not\subseteq \V_{b-1}$. Thus, the vector space inclusions $\{\vec{0}\} = \U_0 \subset \U_1 \subset \ldots \subset \U_u = \Q^{1 \times n}$ are strict, and similarly for the~$\V_j$.
It follows that $u, v \le n$.
We note the following lemma.
\begin{lemma} \label{lem:rk-VjD}
For all $1 \le j \le v$ we have $\rk(V_j D) = r$.
\end{lemma}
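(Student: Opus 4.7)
The plan is to sandwich $\rk(V_j D)$ between $r$ and $r$. The upper bound $\rk(V_j D) \le r$ is immediate, since $D \in \Q^{n \times r}$ has only $r$ columns. For the lower bound, I would route through the matrix $V_j E$ and reduce the claim to \Cref{lem:same-rank}.

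The central observation is that the two identities from \Cref{lem:CD}, namely $CD = I_r$ and $DC = E$, give
\[
 V_j E \;=\; V_j (DC) \;=\; (V_j D)\, C \qquad\text{and}\qquad V_j D \;=\; V_j (ED) \;=\; (V_j E)\, D.
\]
Thus $V_j D$ and $V_j E$ are obtained from one another by multiplication by~$C$ or~$D$, so $\rk(V_j D) \le \rk(V_j E)$ and vice versa. It therefore suffices to show $\rk(V_j E) = r$, and for this I would argue $V_j E \in T \setminus \{O_n\}$ and invoke \Cref{lem:same-rank}.

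Membership $V_j E \in T$ splits into two easy cases. If $V_j = I_n$, then $V_j E = E \in T$. If $V_j \in S$, then $V_j E \in S\, T \subseteq T$ because $T$ is an ideal of~$S$. To rule out $V_j E = O_n$, I would use the hypothesis (preceding the lemma) that $\im(V_j D) \not\subseteq \V_{j-1}$. Since $\vec{0} \in \V_{j-1}$, this forces $\im(V_j D) \ne \{\vec{0}\}$, hence $V_j D \ne O_n$. But if we had $V_j E = O_n$, the second identity above would give $V_j D = (V_j E)\, D = O_n$, a contradiction.

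The argument is short and uses no new ideas beyond \Cref{lem:CD,lem:same-rank} and the choice assumption on the~$V_j$; there is no real obstacle. The only thing to notice is that $CD = I_r$ and $DC = E$ let one transfer rank information freely between $V_j D$ and $V_j E$, at which point $\rk(V_j E) = r$ is exactly what \Cref{lem:same-rank} supplies.
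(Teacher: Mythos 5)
Your proof is correct and takes essentially the same approach as the paper: both route through $V_j E \in T \setminus \{O_n\}$ and invoke \cref{lem:same-rank}. The only cosmetic difference is in how the rank equality $\rk(V_j D) = \rk(V_j E)$ is established---the paper observes $\im D = \im E$ (from $\im E \subseteq \im D$ together with $\rk E = \rk D = r$), whereas you use the factorizations $V_j E = (V_j D)\,C$ and $V_j D = (V_j E)\,D$ directly; the two are interchangeable.
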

\begin{proof}
Recall that $E = D C$.
Thus $\im E \subseteq \im D$.
Since $\rk E = r = \rk D$, we have $\im D = \im E$.
Hence, $\im (V_j D) = \im (V_j E)$, and so $\rk (V_j D) = \rk (V_j E)$.
Since $\im(V_j D) \not\subseteq \V_{j-1}$, we have $V_j D \ne O_n$.
Thus, $V_j E \ne O_n$.
Since $E \in T$ and $T$~is an ideal, we also have $V_j E \in T$.
Since $r$~is the common rank among nonzero elements of~$T$, it follows that $\rk (V_j E) = r$.
Hence, $\rk (V_j D) = \rk (V_j E) = r$.
\end{proof}

For $1 \le i \le u$ and $1 \le j \le v$ and $X \in S$ define
\[
 \psi_{i j}(X) \ \coloneqq \ C U_i X V_j D \ \in \ G \cup \{O_r\}\,.
\]
Also define
\[
 \Psi(X) \ \coloneqq \ \begin{pmatrix} \psi_{1 1}(X) & \cdots & \psi_{1 v}(X) \\
                                \vdots        & \ddots & \vdots\\
                                \psi_{u 1}(X) & \cdots & \psi_{u v}(X)
                \end{pmatrix} \ = \ 
                \begin{pmatrix} C U_1 X V_1 D & \cdots & C U_1 X V_v D \\
                                \vdots        & \ddots & \vdots\\
                                C U_u X V_1 D & \cdots & C U_u X V_v D
                \end{pmatrix} \ \in \ \Q^{u r \times v r}\,.
\]
We will primarily view~$\Psi(X)$ not as a large matrix, but as a grid (or array) of smaller $\Q^{r \times r}$ matrices.

\begin{lemma} \label{lem:Psi-injective}
The map~$\Psi$ is injective.
\end{lemma}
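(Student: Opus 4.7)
The plan is to translate the equality $\Psi(X) = \Psi(Y)$ into a single block-matrix identity and then use the spanning properties of the $\U_a$ and $\V_b$ to conclude $X = Y$. Concretely, suppose $\Psi(X) = \Psi(Y)$. Then, block by block, we have $CU_i (X-Y) V_j D = O_r$ for every $1 \le i \le u$ and $1 \le j \le v$.

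Next, I assemble these identities into one global equation. Stack the matrices $CU_i$ vertically to form
\[
M \ \coloneqq \ \begin{pmatrix} CU_1 \\ \vdots \\ CU_u \end{pmatrix} \ \in \ \Q^{ur \times n},
\]
and concatenate the matrices $V_j D$ horizontally to form
\[
N \ \coloneqq \ \begin{pmatrix} V_1 D & \cdots & V_v D \end{pmatrix} \ \in \ \Q^{n \times vr}.
\]
With these definitions, the block equations above are equivalent to the single identity $M(X-Y)N = O$.

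The key observation is that $M$ has full column rank~$n$ and $N$ has full row rank~$n$. Indeed, the row space of~$M$ is precisely $\row(CU_1) + \cdots + \row(CU_u) = \U_u = \Q^{1\times n}$, so $\rk M = n$; dually, the column space of~$N$ is $\im(V_1 D) + \cdots + \im(V_v D) = \V_v = \Q^n$, so $\rk N = n$. Hence $M$ admits a left inverse $M^+ \in \Q^{n \times ur}$ with $M^+ M = I_n$, and $N$ admits a right inverse $N^+ \in \Q^{vr \times n}$ with $N N^+ = I_n$.

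From $M(X-Y)N = O$ we then deduce, multiplying on the left by $M^+$ and on the right by $N^+$, that $X - Y = M^+ M (X-Y) N N^+ = O_n$, i.e.\ $X = Y$. This shows that $\Psi$ is injective. The argument is really just linear algebra, so I do not anticipate any substantial obstacle; the only point that needs care is verifying that the spanning hypotheses on the $\U_a$ and $\V_b$ (together with $\U_u = \Q^{1\times n}$ and $\V_v = \Q^n$) give the two rank equalities $\rk M = \rk N = n$, which is immediate from the definitions.
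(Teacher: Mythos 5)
Your proposal is correct and matches the paper's proof essentially line for line: your $M$ and $N$ are exactly the paper's stacked matrix $U$ and concatenated matrix $V$, and the conclusion via a left inverse of $M$ and a right inverse of $N$ is the same argument, merely phrased in terms of $X-Y$ rather than by directly exhibiting the two-sided retraction $X = U'\Psi(X)V'$.
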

\begin{proof}
Let 
$
 V \ \coloneqq \ \begin{pmatrix} V_1 D & \cdots & V_v D \end{pmatrix} \ \in \ \Q^{n \times v r}\,.
$

It follows from the definition of $V_1, \ldots, V_v$ that $\im V = \sum_{j=1}^v \im(V_j D) = \Q^n$.
Thus, $\rk V = n$ and so $V$~has a right inverse $V' \in \Q^{v r \times n}$ with $V V' = I_n$.
Dually, the matrix
\[
 U \ \coloneqq \ \begin{pmatrix} C U_1 \\ \vdots \\ C U_u \end{pmatrix} \ \in \ \Q^{u r \times n}
\]
has a left inverse $U' \in \Q^{n \times u r}$ with $U' U = I_n$.
Noting that $\Psi(X) = U X V$, we have
\[
 U' \Psi(X) V' \ = \ U' U X V V' \ = \ I_n X I_n \ = \ X\,.
\]
It follows that $\Psi$~is injective.
\end{proof}

\begin{example} \label{ex:2}
We continue \cref{ex:1}.
Choose
\[ 
 U_1 \ \coloneqq \ I_3 \text{ (recall that $I_3 \in S^1$)}, \quad 
 U_2 \ \coloneqq \ D_1 C_2\,,\quad
 V_1 \ \coloneqq \ I_3\,,\quad
 V_2 \ \coloneqq \ D_2 C_1\,, \quad \text{so that}
\]
\[
 C_1 U_1  =  C_1  =  \left(\begin{smallmatrix}1&0&0\\0&1&0\end{smallmatrix}\right),\
 C_1 U_2  =  C_2  =  \left(\begin{smallmatrix}0&1&0\\0&0&1\end{smallmatrix}\right),\
 V_1 D_1  =  D_1  =  \left(\begin{smallmatrix}1&0\\0&1\\1&0\end{smallmatrix}\right),\
 V_2 D_1  =  D_2  =  \left(\begin{smallmatrix}0&1\\1&0\\0&-1\end{smallmatrix}\right)\,.
\]
Thus, $\row(C_1 U_1) + \row(C_1 U_2) = \Q^{1 \times 3}$ and $\im(V_1 D_1) + \im(V_2 D_1) = \Q^{3}$.
The ranks of these four matrices are all~$2$, consistent with \cref{lem:rk-VjD} and its analogue for $C_1 U_1, C_1 U_2$.
We have $u = v = 2$.
\qed
\end{example}

In the following we will be interested in~$|S|$.
By \cref{lem:Psi-injective}, we have $|S| = |\Psi(S)|$; i.e., it suffices to estimate the number of different $\Psi(X)$ for $X \in S$.
Taking this further, we have
\begin{align*}
|S| \ &= \ |\Psi(S)| \ \le \ \prod_{i=1}^u \prod_{j=1}^v |\psi_{i j}(S)| \ = \ \prod_{i=1}^u \prod_{j=1}^v |C U_i S V_j D| 
 \ \le \ \prod_{i=1}^u \prod_{j=1}^v (|G|+1) \ \\ 
&= \ (|G|+1)^{u v} \ \le \ 3^{r^2 u v} \qquad \text{using \cref{lem:group-G,lem:bound-on-group-size}.}
\end{align*}
Suppose for a moment that the vector spaces $(\row(C U_i))_i$ were independent and the vector spaces $(\im(V_j D))_j$ were independent,
 i.e., suppose that $\bigoplus_{i=1}^u \row(C U_i) = \Q^{1 \times n}$ and $\bigoplus_{j=1}^v \im(V_j D) = \Q^{n}$.
Then $u r = n$ and $v r = n$ (in particular, $r \mid n$ and $u = v = n/r$), and using the inequality above we would obtain $|S| \le 3^{n^2}$.
However, in general this independence does not hold and we have to estimate $u,v \le n$, giving only a weaker bound $|S| \le 3^{r^2 n^2}$.
Therefore, we pursue a different avenue, based on the idea that if, say, $\im(V_1 D)$ and $\im(V_2 D)$ overlap nontrivially then $\psi_{i 1}(X)$ and~$\psi_{i 2}(X)$ are ``coupled'' across $X \in S$, i.e., $\psi_{i 1}(X)$ and~$\psi_{i 2}(X)$ do not vary independently.
Formalizing this idea and making it work is the key technical contribution of this paper. Before doing that, we treat a much easier case of aperiodic semigroups.

\subsection{The aperiodic case} \label{sub:aperiodic}

A semigroup is called \emph{aperiodic} if every subsemigroup which is also a group is trivial, i.e., has only one element. 
In this subsection we analyze the size of~$S$, assuming it is aperiodic.
Showcasing the use of our injective map~$\Psi$, \cref{thm:aper-bound} below recovers a result from \cite{AlmeidaSteinberg09}.

\begin{lemma} \label{lem:aper-rank}
If $S$~is aperiodic, we have $r=1$ and $G = \{I_1\}$.
\end{lemma}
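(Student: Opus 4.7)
The plan is to establish the two conclusions separately: $G = \{I_r\}$ will follow almost immediately from aperiodicity, while $r = 1$ will require a geometric argument using irreducibility of~$S$.

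For the first conclusion, I would observe that by \Cref{lem:ESE=ETE} the set $ESE \setminus \{O_n\}$ is a finite group with identity~$E$, and it is clearly a subsemigroup of~$S$. Aperiodicity of~$S$ therefore forces this group to be trivial, giving $ESE \setminus \{O_n\} = \{E\}$. The isomorphism $\phi$ of \Cref{lem:group-G} then yields $G = \{I_r\}$. Phrased matrix-theoretically: for every $X \in S$ we have $EXE \in \{E, O_n\}$, and the same holds for $X = I_n \in S^1$ since $E^2 = E$.

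For the second conclusion, I would exploit the direct-sum decomposition $\Q^n = \im D \oplus \ker C$, which holds because $CD = I_r$ forces the intersection to be trivial and the dimensions $r$ and $n-r$ sum to~$n$. Under this decomposition $E = DC$ is the projection onto $\im D$ along $\ker C$, so $I - E$ is the projection onto $\ker C$. Fix any nonzero $v \in \im D$, so that $Ev = v$. For any $X \in S^1$, write $Xv = EXv + (I-E)Xv$: the second summand lies in $\ker C$, while the first equals $(EXE)v$, which is either $Ev = v$ or $\vec{0}$ by the first conclusion. Hence every $Xv$ lies in $(v + \ker C) \cup \ker C \subseteq \Q v + \ker C$.

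The subspace $W \coloneqq \mathrm{span}\{Xv : X \in S^1\}$ is $S$-invariant and nonzero (it contains~$v$), so by irreducibility $W = \Q^n$. On the other hand $W \subseteq \Q v + \ker C$, and since $v \notin \ker C$ (because $\im D \cap \ker C = \{\vec{0}\}$) this containing subspace has dimension only $n - r + 1$. Therefore $n \le n - r + 1$, forcing $r \le 1$ and hence $r = 1$, which combined with $G = \{I_r\}$ gives $G = \{I_1\}$. The delicate point is the geometric interpretation of triviality of~$G$ in the third paragraph: the realization that every $X \in S^1$ must either fix any given $v \in \im D$ or push it entirely into $\ker C$. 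Once this observation is in place, a one-line dimension count with irreducibility completes the argument.
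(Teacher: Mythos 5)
Your proof is correct. The first half (deriving $G = \{I_r\}$ from aperiodicity via \cref{lem:ESE=ETE} and \cref{lem:group-G}) is identical to the paper's. For the second half you and the paper share the same skeleton---fix a nonzero $v$ with $Ev = v$, use that $EXE \in \{E, O_n\}$ for all $X \in S$ to constrain the orbit of~$v$, then invoke irreducibility on the span $\langle S^1 v\rangle$---but the final accounting differs. The paper applies $E$ to the already-established equality $\langle S\vec d\rangle = \Q^n$ and computes $\im E = E\langle S\vec d\rangle = \langle ESE\vec d\rangle = \langle\vec d\rangle$, reading off $r = \rk E = 1$ directly. You instead bound the span \emph{before} knowing it is all of $\Q^n$: you decompose $Xv = EXv + (I-E)Xv$ relative to $\Q^n = \im D \oplus \ker C$, conclude $\langle S^1 v\rangle \subseteq \Q v + \ker C$, and then compare $n = \dim\langle S^1 v\rangle \le (n-r)+1$. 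Your route introduces the auxiliary subspace $\ker C$ and a dimension count, which the paper avoids by squeezing the span through $E$; the two arguments have roughly the same length, and the paper's is marginally more self-contained in that it never needs to exhibit the complementary decomposition of $E$ as a projection.
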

\begin{proof}
By \cref{lem:ESE=ETE}, $E S E \setminus \{O_n\}$ is a group.
Thus, it is a subgroup of the semigroup~$S$.
Since $S$~is aperiodic, $E S E \setminus \{O_n\} = \{E\}$.
By \cref{lem:group-G} it follows that $G = \{I_r\}$.

Let $\vec{d} \in \Q^n$ be a (necessarily nonzero) column of~$D$.
Since $E D = D C D = D$, we have $E \vec{d} = \vec{d}$.
Write $\langle \cdot \rangle$ for the $\Q$-span.
The vector space $\langle S \vec{d} \rangle$ is $S$-invariant and nonzero, as it contains~$E \vec{d} = \vec{d}$.
Thus, irreducibility of~$S$ implies that $\langle S \vec{d} \rangle = \Q^n$.
Therefore,
\[
 \im E \ = \ E \Q^n \ = \ E \langle S \vec{d} \rangle \ \mathop{=}^{E \vec{d} = \vec{d}} \ \langle E S E \vec{d} \rangle
 \ \mathop{=}^{E S E \setminus \{O_n\} = \{E\}} \ \langle E \vec{d} \rangle \ \mathop{=}^{E \vec{d} = \vec{d}} \ \langle \vec{d} \rangle\,.
\]
Hence, $r = \rk E = \dim (\im E) = 1$, and so $G = \{I_1\}$.
\end{proof}

Using the injective map~$\Psi$ we obtain the following result, which essentially follows from the proof of~\cite[Theorem~5.8]{AlmeidaSteinberg09}. This is already a good illustration how our approach differs from the approach of \cite{Schutzenberger62,Berstel2011,AlmeidaSteinberg09}: instead of bounding the possible values that the traces of matrices in $S$ can take, we consider a family $\Psi$ of ``linear'' maps from $S$ to the group $G$.

\begin{theorem}\label{thm:aper-bound}
Let $S \subseteq \Q^{n \times n}$ be an aperiodic finite irreducible semigroup. Then $|S| \le 2^{n^2}$.
\end{theorem}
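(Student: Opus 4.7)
The plan is to apply the injective map $\Psi$ from \Cref{sub:Psi} and combine it with the structural simplification afforded by aperiodicity, namely \Cref{lem:aper-rank}.

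First, by \Cref{lem:aper-rank}, the common rank $r$ of nonzero elements of the \mbox{(0-)}minimal ideal equals $1$, and the associated group $G$ is the trivial group $\{I_1\}$. Consequently each block $\psi_{i j}(X) = C U_i X V_j D$ lies in $G \cup \{O_r\} = \{1, 0\} \subseteq \Q^{1 \times 1}$, so the grid $\Psi(X) \in \Q^{u \times v}$ is simply a $\{0,1\}$-matrix.

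Next, I would use \Cref{lem:Psi-injective} to replace counting $S$ by counting the image $\Psi(S)$. Since each entry of $\Psi(X)$ takes only two values, we immediately get
\[
  |S| \ = \ |\Psi(S)| \ \le \ 2^{u v}\,.
\]
It then remains to argue $u v \le n^2$. This follows from the bounds $u, v \le n$ observed right after the construction of the $U_i$ and $V_j$: we assumed that each $\row(C U_a) \not\subseteq \U_{a-1}$ and each $\im(V_b D) \not\subseteq \V_{b-1}$, so the chains $\U_0 \subsetneq \U_1 \subsetneq \cdots \subsetneq \U_u = \Q^{1 \times n}$ and $\V_0 \subsetneq \V_1 \subsetneq \cdots \subsetneq \V_v = \Q^n$ are strict, forcing $u, v \le n$. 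Combining everything yields $|S| \le 2^{u v} \le 2^{n^2}$, as desired.

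There is essentially no obstacle once the aperiodic case is recognised as collapsing $r$ and $|G|$ to $1$: the whole argument is a direct specialisation of the product bound already displayed after \Cref{lem:Psi-injective}, with $|G| + 1 = 2$ replacing the generic $3$. The only minor point to make clean is the observation that $u$ and $v$ are both bounded by $n$ via the strictly increasing chains of subspaces, so no further independence assumption is needed.
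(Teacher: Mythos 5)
Your proof is correct and follows essentially the same approach as the paper: apply \Cref{lem:aper-rank} to get $r=1$ and $G=\{I_1\}$, then count $\Psi(S)\subseteq\{O_1,I_1\}^{u\times v}$ using the injectivity of~$\Psi$. The only cosmetic difference is that you conclude with $u,v\le n$, whereas the paper observes that with $r=1$ the strictly increasing chains step by exactly one dimension each time, giving $u=v=n$; both yield $2^{uv}\le 2^{n^2}$.
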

\begin{proof}
By \cref{lem:aper-rank}, we have $r=1$.
Using the assumption that the spaces $\U_{i}$ for $0 \le i \le u$ are strictly increasing, and similarly for the~$\V_{j}$, it follows that $u = n = v$.
Also by \cref{lem:aper-rank}, we have $\psi_{i j}(X) \in \{O_1,I_1\}$ for all $1 \le i,j \le n$ and all $X \in S$; i.e., $\Psi(S) \subseteq \{O_1,I_1\}^{n \times n}$.
Since $\Psi$~is injective, $|S| = |\Psi(S)| \le |\{O_1,I_1\}^{n \times n}| = 2^{n^2}$.
\end{proof}

\section{Upper bound: doing the main work}\label{sec:upper-two}

\subsection{The width \texorpdfstring{$w_b$}{} of a column} \label{sub:width}

For this and the next two subsections, we fix an arbitrary column index $b \in \{1, \ldots, v\}$.
Define the \emph{width}~$w_b$ of block column~$b$:
\[
w_b \ \coloneqq \ \dim \V_b - \dim \V_{b-1} \ = \ \dim(\V_{b-1} + \im(V_b D)) - \dim \V_{b-1}\,.
\]
Intuitively, $\im(V_b D)$ adds $w_b \le r$ independent dimensions to~$\V_{b-1}$.
We note that
\begin{equation} \label{eq:wj}
\begin{aligned}
 w_1 \ &= \ \dim \V_1 \ = \ \rk(V_1 D) \ = \ r    && \text{by \cref{lem:rk-VjD}\quad and}\\ 
 w_1 + \cdots + w_v \ &= \ \dim \V_v \ = \ n      && \text{from the definition of the $\V_j$.}
\end{aligned}
\end{equation}
The following picture illustrates these widths.
\begin{center}
\scalebox{.9}{
\begin{tikzpicture}[x=6mm,y=6mm]
  \draw[very thick] (\Xzero,\Ytop) -- (\Xfive,\Ytop);   
  \draw            (\Xzero,\Yone) -- (\Xfive,\Yone);    
  \draw[very thick] (\Xzero,\Ytop) -- (\Xzero,\Yone);   
  \draw[very thick] (\Xfive,\Ytop) -- (\Xfive,\Yone);   
  \foreach \XX in {\Xone,\Xtwo,\Xthree,\Xfour} \draw (\XX,\Ytop) -- (\XX,\Yone);

  \path ({0.5*(\Xzero+\Xone)},0.35)  node {$w_1=r$};
  \path ({0.5*(\Xone+\Xtwo)},0.35)   node {$w_2$};
  \path ({0.5*(\Xtwo+\Xthree)},0.35) node {$w_3$};
  \path ({0.5*(\Xthree+\Xfour)},0.35)node {$w_4$};
  \path ({0.5*(\Xfour+\Xfive)},0.35) node {$w_v$};

  \draw[decorate,decoration={brace,amplitude=6pt}]
    (\Xzero,0.6) -- node[above=8pt] {$n$} (\Xfive,0.6);
\end{tikzpicture}}
\end{center}
Also define
\[
 \cL_b \ \coloneqq \ \{\vec{y} \in \Q^r \mid V_b D \vec{y} \in \V_{b-1}\} \qquad \text{and} \qquad \ell_b \ \coloneqq \ \dim \cL_b\,.
\]
In words, $\cL_b$ is the vector space consisting of the vectors $\vec{y} \in \Q^{r}$ that the matrix~$V_b D$ maps into the intersection of $\V_{b-1}$ and $\im(V_b D)$; i.e., we have
$
 V_b D \cL_b = \V_{b-1} \cap \im (V_b D)
$.
The following lemma connects $w_b$ and~$\ell_b$.
\begin{lemma} \label{lem:lw}
We have $w_b = r -\ell_b > 0$.
\end{lemma}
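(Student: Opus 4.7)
The plan is to apply the Grassmann dimension formula to the sum $\V_{b-1} + \im(V_b D)$, and then use \cref{lem:rk-VjD} together with the injectivity of $V_b D$ on~$\Q^r$ to identify $\ell_b$ with $\dim(\V_{b-1} \cap \im(V_b D))$.

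First I would write
\[
 \dim(\V_{b-1} + \im(V_b D)) \ = \ \dim \V_{b-1} + \dim \im(V_b D) - \dim(\V_{b-1} \cap \im(V_b D))\,.
\]
By \cref{lem:rk-VjD} we have $\dim \im(V_b D) = \rk(V_b D) = r$. Substituting this into the definition of~$w_b$ gives $w_b = r - \dim(\V_{b-1} \cap \im(V_b D))$, so the task reduces to showing $\ell_b = \dim(\V_{b-1} \cap \im(V_b D))$.

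Next I would observe that the linear map $\vec{y} \mapsto V_b D \vec{y}$ from $\Q^r$ to $\Q^n$ is injective: it has rank~$r$ by \cref{lem:rk-VjD}, so by the rank-nullity theorem its kernel is~$\{\vec{0}\}$. Hence the restriction of this map to~$\cL_b$ is a linear isomorphism onto its image, which by the definition of~$\cL_b$ equals $V_b D \cL_b = \V_{b-1} \cap \im(V_b D)$. This yields $\ell_b = \dim \cL_b = \dim(\V_{b-1} \cap \im(V_b D))$, and combining with the previous paragraph gives $w_b = r - \ell_b$.

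Finally, for strictness, I would invoke the standing assumption (made right after the $V_j$ were introduced) that $\im(V_b D) \not\subseteq \V_{b-1}$. This ensures $\V_{b-1} \subsetneq \V_{b-1} + \im(V_b D) = \V_b$, so $w_b \ge 1 > 0$. There is no real obstacle here: the only point requiring mild care is justifying the injectivity of $V_b D$, which is immediate from its having full column rank.
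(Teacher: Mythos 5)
Your proof is correct and follows essentially the same approach as the paper: both use the injectivity of $V_b D$ (from rank~$r$ and rank-nullity) to identify $\ell_b$ with $\dim(\V_{b-1}\cap\im(V_b D))$, then apply the dimension formula for sums of subspaces. The only cosmetic difference is in the positivity argument: you conclude $w_b>0$ directly from the strict inclusion $\V_{b-1}\subsetneq\V_b$, whereas the paper first argues $\ell_b<r$ by contradiction; these are equivalent once $w_b=r-\ell_b$ is established.
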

\begin{proof}
Consider the map $V_b D : \Q^r \to \Q^n$.
Its domain is $r$-dimensional and we have $\rk (V_b D) = r$ by \cref{lem:rk-VjD}.
It follows that the map $V_b D$ is injective.
Hence its restriction to any subspace is injective.
Thus, $\dim (V_b D \cL_b) = \dim \cL_b = \ell_b$.
Hence, 
\begin{align*}
w_b \ &= \ \dim(\V_{b-1} + \im(V_b D)) - \dim \V_{b-1} \\ 
&= \ \dim \V_{b-1} + \rk(V_b D) - \dim (\V_{b-1} \cap \im (V_b D)) - \dim \V_{b-1} \\
&= \ r - \dim (\V_{b-1} \cap \im (V_b D)) && \text{by \cref{lem:rk-VjD}}\\
&= \ r - \dim (V_b D \cL_b) \ = \ r - \ell_b\,.
\end{align*}
Since $\cL_b \subseteq \Q^r$, we have $\ell_b \le r$.
If $\ell_b = r$ then $\cL_b = \Q^r$, implying that $\im (V_b D) = V_b D \cL_b \subseteq \V_{b-1}$, contradicting the assumption made after the definition of $\V_{b}$.
Hence, $\ell_b < r$.
\end{proof}

\begin{example} \label{ex:3}
Continuing \cref{ex:2}, we have
\[
\V_1 \ = \ \im(V_1 D_1) \ = \ \left\{\left(\begin{smallmatrix}p \\ q \\ p \end{smallmatrix}\right) \;\middle\vert\; p,q \in \Q \right\}\,, \qquad \im(V_2 D_1) \ = \ \left\{\left(\begin{smallmatrix}p \\ q \\ -p \end{smallmatrix}\right) \;\middle\vert\; p,q \in \Q \right\}\,.
\]
Thus, $\V_2 = \V_1 + \im(V_2 D_1) = \Q^3$.
Hence,
\[
 v = 2, \quad w_1 = \dim \V_1 = 2 = r, \quad w_2 = \dim \V_2 - \dim \V_1 = 3 - 2 = 1.
\]
We also have
\begin{align*}
\cL_2 \ 
&= \ \{\vec{y} \in \Q^2 \mid V_2 D_1 \vec{y} \in \V_1\}
 \ = \  \left\{\big(\begin{smallmatrix}p \\ q \end{smallmatrix}\big) \in \Q^2 \;\middle\vert\; \left(\begin{smallmatrix}0&1\\1&0\\0&-1\end{smallmatrix}\right) \big(\begin{smallmatrix}p \\ q \end{smallmatrix}\big) \in \V_1\right\} \\
&= \  \left\{\big(\begin{smallmatrix}p \\ q \end{smallmatrix}\big) \in \Q^2 \;\middle\vert\; \left(\begin{smallmatrix}q \\ p \\ -q\end{smallmatrix}\right)  \in \V_1\right\}
\ = \ \left\{\big(\begin{smallmatrix}p \\ 0 \end{smallmatrix}\big) \;\middle\vert\; p \in \Q\right\}\,.
\end{align*}
Thus, $\ell_2 = \dim \cL_2 = 1$, matching \cref{lem:lw}: $w_2 = r-\ell_2 =2-1 = 1$.
\qed
\end{example}

\subsection{The coupling group \texorpdfstring{$H_b$}{Hb}} \label{sub:Hb}

In this subsection we introduce $H_b$, a subgroup of~$G$.
Later we will see that this ``coupling group''~$H_b$ restricts the possibilities for $\psi_{a b}(X)$ (for some block row~$a$) once the ``prefix'' $\psi_{a 1}(X), \ldots, \psi_{a (b-1)}(X)$ has been fixed.
The smaller the width~$w_b$, the smaller $H_b$ becomes and the fewer possibilities are there for~$\psi_{a b}(X)$.
Define
\[
H_b \ \coloneqq \ \{g \in G \mid g \vec{y} = \vec{y} \text{ for all } \vec{y} \in \cL_b\}\,;
\]
i.e., $H_b$ consists of those matrices $g \in G$ that fix~$\cL_b$.

\begin{example} \label{ex:4}
Continuing \cref{ex:3}, we have
\[
H_2 
\ = \ \{g \in G \mid g \vec{y} = \vec{y} \ \; \forall\,\vec{y} \in \cL_2\}
\ = \ \{g \in G \mid g \big(\begin{smallmatrix}p \\ 0 \end{smallmatrix}\big) = \big(\begin{smallmatrix}p \\ 0 \end{smallmatrix}\big) \ \; \forall\, p \in \Q\}
\ = \ \left\{\left(\begin{smallmatrix}1&0\\0&1\end{smallmatrix}\right),\ \left(\begin{smallmatrix}1&0\\0&-1\end{smallmatrix}\right)\right\}\,.
\]
Thus, $|H_2|+1 = 3 = 3^{1^2} = 3^{w_2^2}$, realizing the upper bound of the following lemma.
\qed
\end{example}

Our analysis of the semigroup size is based on known bounds on the size of finite matrix groups.
Concretely, we will use the following lemma.
Its proof follows classical lines; see, e.g., \cite{KuzmanovichPavlichenkov02}.
For completeness, we provide a proof in the appendix.

\begin{lemma}\label{lem:bound-on-group-size}
Let $n \ge 1$.
Any finite subgroup of~$\GL_n(\Q)$ has at most $3^{n^2}-1$ elements.
\end{lemma}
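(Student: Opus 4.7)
The plan is to reduce the bound to a cardinality count in $\GL_n(\mathbb{F}_3)$ via Minkowski's classical lemma. First, I would invoke the folklore result recalled in \Cref{sec:existing} (see, e.g., \cite[Theorem~1.6]{KuzmanovichPavlichenkov02}): every finite subgroup of $\GL_n(\Q)$ is conjugate to a subgroup of $\GL_n(\Z)$. Since conjugation preserves cardinality, it suffices to fix an arbitrary finite subgroup $H \subseteq \GL_n(\Z)$ and bound $|H|$.

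Second, I would consider the ring homomorphism $\pi : \Z^{n \times n} \to \mathbb{F}_3^{n \times n}$ given by entrywise reduction modulo $3$. Restricted to $\GL_n(\Z)$, this is a group homomorphism into $\GL_n(\mathbb{F}_3)$. The central step is to show that $\pi|_H$ is injective. Equivalently, the only $A \in H$ with $A \equiv I_n \pmod 3$ is $A = I_n$. To prove this, suppose for contradiction that $A \in H$ satisfies $A \equiv I_n \pmod 3$ but $A \ne I_n$. Write $A = I_n + 3^k B$ with $k \ge 1$ and $B \in \Z^{n \times n}$ such that $B \not\equiv 0 \pmod 3$. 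Since $H$ is finite, $A$ has finite order; replacing $A$ by a suitable power, I may assume its order is a prime $p$. Expanding $(I_n + 3^k B)^p = I_n$ by the binomial theorem yields
\[
  p \cdot 3^k B \;+\; \sum_{i=2}^{p} \binom{p}{i} 3^{ik} B^i \;=\; 0.
\]
In the case $p \ne 3$, every term on the right except the first is divisible by $3^{2k}$, so $3^{2k} \mid p \cdot 3^k B$, and since $\gcd(p,3)=1$ this forces $B \equiv 0 \pmod 3$. In the case $p = 3$, the expansion becomes $3^{k+1}B + 3^{2k+1}B^2 + 3^{3k}B^3 = 0$, and dividing by $3^{k+1}$ (noting $3k \ge k+1$ because $k \ge 1$) gives $B + 3^k B^2 + 3^{2k-1} B^3 = 0$, hence $B \equiv 0 \pmod 3$ since $2k-1 \ge 1$. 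Either way we contradict the choice of $B$.

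Third, having established injectivity of $\pi|_H$, the bound $|H| \le |\GL_n(\mathbb{F}_3)|$ follows immediately. Since $\GL_n(\mathbb{F}_3)$ is a proper subset of $\mathbb{F}_3^{n \times n}$ (the zero matrix is singular), we conclude $|H| \le 3^{n^2} - 1$.

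The main obstacle is the Minkowski-type $3$-adic valuation argument in the second step. The delicate point is the case $p=3$: one must verify that after pulling out $3^{k+1}$ from the binomial expansion, the remaining terms still have strictly positive $3$-adic valuation, which is exactly where the hypothesis $k \ge 1$ is used (it forces $2k - 1 \ge 1$). The case $p \ne 3$ is smoother because coprimality of $p$ and $3$ trivially yields the needed divisibility. Everything else (conjugation into $\GL_n(\Z)$, injectivity giving a cardinality bound, and comparing $|\GL_n(\mathbb{F}_3)|$ with $3^{n^2}$) is essentially bookkeeping.
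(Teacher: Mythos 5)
Your proposal is correct and follows essentially the same route as the paper: conjugate into $\GL_n(\Z)$, reduce modulo~$3$, show the reduction is injective on the finite group by passing to an element of prime order, and then count $|\GL_n(\mathbb{F}_3)| \le 3^{n^2}-1$. The one difference is that the paper cites Minkowski's 1887 injectivity lemma, whereas you supply the standard self-contained binomial/$3$-adic valuation proof of it (correctly handling both the $p \ne 3$ case, which covers $p=2$ and is where the oddness of the modulus is used, and the $p = 3$ case, where the hypothesis $k \ge 1$ guarantees $2k-1 \ge 1$).
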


\begin{remark} \label{rem:bound-on-group-size}
\Cref{lem:bound-on-group-size} is not tight.
As mentioned in the introduction, it is known (via an elementary proof not based on the classification of finite simple groups) that the order of any finite subgroup, say~$H$, of~$\GL_n(\Q)$ divides~$(2 n)!$ (see, e.g., \cite[Chapter~IX]{Newman1972}); so $|H| \le (2 n)! = 3^{\Theta(n \log n)}$.
It is not difficult to prove \cref{lem:bound-on-group-size} by showing that $(2 n)! \le 3^{n^2}-1$, but the more fundamental proof of \cref{lem:bound-on-group-size} in the appendix might give more insight.

We will use the group bound from \cref{lem:bound-on-group-size} to bound the size of finite irreducible matrix semigroups $S \subseteq \Q^{n \times n}$ in terms of~$n$.
An important role will be played by a certain group that is isomorphic to a finite subgroup of~$\GL_r(\Q)$, where $r$~is the minimum nonzero rank of the matrices in~$S$.
The bottleneck (for our semigroup bound in terms of~$n$) will turn out to be the case $r=1$, where we have $3^{1^2}-1 = 2 = (2 \cdot 1)!$.
Therefore, the mentioned asymptotically tighter results for groups do not improve our main result on semigroups.\qed
\end{remark}

\begin{lemma} \label{lem:Hb-bound}
We have
$
 |H_b| + 1 \ \le \ 3^{w_b^2}\,.
$
\end{lemma}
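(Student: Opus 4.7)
The plan is to embed $H_b$ as a finite subgroup of $\GL_{w_b}(\Q)$ and then apply \cref{lem:bound-on-group-size}.

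First, I would verify that $H_b$ is indeed a subgroup of $G$: if $g_1, g_2 \in H_b$ then $g_1 g_2$ fixes~$\cL_b$ pointwise by a direct computation, and $g^{-1} \vec{y} = g^{-1}(g \vec{y}) = \vec{y}$ for every $\vec{y} \in \cL_b$, so $g^{-1} \in H_b$. In particular, $H_b$ is finite. Recall $\dim \cL_b = \ell_b = r - w_b$ from \cref{lem:lw}.

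Next, I would choose a basis of~$\Q^r$ whose first $\ell_b$ vectors form a basis of~$\cL_b$. In this basis, every $g \in H_b$ has block form
\[
 g \ = \ \begin{pmatrix} I_{\ell_b} & B_g \\ 0 & A_g \end{pmatrix}
\]
for some $B_g \in \Q^{\ell_b \times w_b}$ and $A_g \in \Q^{w_b \times w_b}$: the identity block on the left columns encodes that $g$ fixes $\cL_b$ pointwise, and $A_g$ must be invertible because $g$~is. The assignment $\rho : H_b \to \GL_{w_b}(\Q)$ defined by $\rho(g) = A_g$ is a group homomorphism, as block-triangular matrices of this shape multiply so that the bottom-right blocks multiply.

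The main (and really only) obstacle is to show that $\rho$~is injective; this is where finiteness crucially enters. Suppose $\rho(g) = I_{w_b}$. Then $g - I_r = \bigl(\begin{smallmatrix} 0 & B_g \\ 0 & 0 \end{smallmatrix}\bigr)$ is nilpotent of square zero, so $g$~is unipotent and satisfies $g^k = I_r + k(g - I_r)$ for every $k \in \N$. Since $H_b$~is finite, $g$ has finite order, so $k (g - I_r) = 0$ for some $k \ge 1$, hence $g = I_r$. (Equivalently, in characteristic zero a unipotent matrix of finite order must be the identity.) Thus $\rho$ is injective, so $|H_b| = |\rho(H_b)|$ and $\rho(H_b)$ is a finite subgroup of $\GL_{w_b}(\Q)$. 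Applying \cref{lem:bound-on-group-size} with $n = w_b$ yields $|H_b| \le 3^{w_b^2} - 1$, i.e., $|H_b| + 1 \le 3^{w_b^2}$, as required.
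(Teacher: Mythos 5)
Your proof is correct and follows essentially the same route as the paper: fix a complement of $\cL_b$, observe each $h\in H_b$ is block upper-triangular with $I_{\ell_b}$ in the top-left, project to the bottom-right $w_b\times w_b$ block, use finiteness of $H_b$ to kill the unipotent kernel and conclude the projection is an isomorphism onto a finite subgroup of $\GL_{w_b}(\Q)$, then apply \cref{lem:bound-on-group-size}. The unipotent-order argument you give is a compact restatement of the paper's explicit computation $h^m = \bigl(\begin{smallmatrix} I & mN \\ 0 & I\end{smallmatrix}\bigr)$.
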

\begin{proof}[Proof sketch]
For any $h_1, h_2 \in H_b$ we have $h_1 h_2 \in H_b$, as $h_1 h_2 \vec{y} = h_1 \vec{y} = \vec{y}$ holds for all $\vec{y} \in \cL_b$.
Let $h \in H_b$.
We show that $h^{-1} \in H_b$.
Indeed, for all $\vec{y} \in \cL_b$ we have $h^{-1} \vec{y} = h^{-1} (h \vec{y}) = (h^{-1} h) \vec{y} = \vec{y}$.
We conclude that $H_b$~is a group.
It is finite, as $G \supseteq H_b$ is finite.
We show the bound on~$|H_b|$ in the appendix, using \cref{lem:lw,lem:bound-on-group-size}.
\end{proof}

\subsection{A block row prefix} \label{sub:row-prefix}

For this subsection, we fix an arbitrary block row index $a \in \{1, \ldots, u\}$.
We consider the (number of) possible first $b$~blocks of the $a$th block row of~$\Psi(X)$ when $X$~ranges over~$S$, i.e., the possible
\[
\begin{pmatrix} \psi_{a 1}(X) & \cdots & \psi_{a b}(X) \end{pmatrix} \qquad \text{where $X \in S$.}
\]
The following lemma states in particular that the action of $\psi_{a b}(X)$ on~$\cL_b$ is determined by the actions of $\psi_{a 1}(X), \ldots, \psi_{a (b-1)}(X)$ on~$\cL_b$.

\begin{lemma} \label{lem:identity-on-Lb}
There exist linear maps $\Theta_1, \ldots, \Theta_{b-1} : \V_{b-1} \to \Q^r$ such that 
\[
 \psi_{a b}(X) \vec{y} \ = \ \sum_{j=1}^{b-1} \psi_{a j}(X) \Theta_j(V_b D \vec{y}) \qquad \text{for all $X \in S$ and all $\vec{y} \in \cL_b$.}
\]
\end{lemma}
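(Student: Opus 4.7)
The plan is to use the defining property of $\cL_b$ together with the fact that $\V_{b-1}$ is the sum of the spaces $\im(V_j D)$ for $j < b$, and then to left-multiply a suitable linear decomposition by $C U_a X$.

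First, I observe that for any $\vec{y} \in \cL_b$, the definition of $\cL_b$ gives $V_b D \vec{y} \in \V_{b-1}$. Since $\V_{b-1} = \im(V_1 D) + \cdots + \im(V_{b-1} D)$, I would like to write $V_b D \vec{y}$ as $\sum_{j=1}^{b-1} V_j D \vec{z}_j$ for some $\vec{z}_j \in \Q^r$ that depend \emph{linearly} on $\vec{y}$ (in fact, linearly on $V_b D \vec{y} \in \V_{b-1}$). To produce such linear dependence even when the sum $\V_{b-1} = \sum_{j<b} \im(V_j D)$ is not direct, I would consider the linear map
\[
 \Lambda \ : \ (\Q^r)^{b-1} \to \V_{b-1}\,, \qquad \Lambda(\vec{z}_1, \ldots, \vec{z}_{b-1}) \ \coloneqq \ \sum_{j=1}^{b-1} V_j D \vec{z}_j\,,
\]
which is surjective by definition of $\V_{b-1}$. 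A surjective linear map between finite-dimensional vector spaces admits a linear section $\sigma : \V_{b-1} \to (\Q^r)^{b-1}$ with $\Lambda \circ \sigma = \mathrm{id}_{\V_{b-1}}$. I would then define $\Theta_j : \V_{b-1} \to \Q^r$ as the $j$-th component of~$\sigma$, so that $\sum_{j=1}^{b-1} V_j D\, \Theta_j(\vec{w}) = \vec{w}$ for every $\vec{w} \in \V_{b-1}$.

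Applying this identity with $\vec{w} = V_b D \vec{y}$ (valid because $\vec{y} \in \cL_b$) and left-multiplying by $C U_a X$ yields
\[
 C U_a X V_b D \vec{y} \ = \ \sum_{j=1}^{b-1} C U_a X V_j D\, \Theta_j(V_b D \vec{y})\,,
\]
which is exactly $\psi_{a b}(X) \vec{y} = \sum_{j=1}^{b-1} \psi_{a j}(X)\, \Theta_j(V_b D \vec{y})$, as required. Notably, the maps $\Theta_j$ are constructed once and depend on neither $a$ nor $X$, only on the fixed generators $V_1, \ldots, V_b$ and $D$.

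I do not expect any real obstacle in this proof; the only point that requires a moment's care is the non-uniqueness of the decomposition $\vec{w} = \sum V_j D \vec{z}_j$ when the sum defining $\V_{b-1}$ is not direct, which is handled cleanly by choosing any linear section of~$\Lambda$. Everything else is formal manipulation and the definitions of $\psi_{ab}$, $\cL_b$, and $\V_{b-1}$.
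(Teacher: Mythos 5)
Your proof is correct and follows essentially the same route as the paper: define the surjective linear map $(\vec{z}_1,\ldots,\vec{z}_{b-1}) \mapsto \sum_{j<b} V_j D \vec{z}_j$ onto $\V_{b-1}$, take a linear right inverse whose components give the $\Theta_j$, and left-multiply the resulting identity by $C U_a X$. The only difference is notational (your $\Lambda, \sigma$ are the paper's $\Omega, \Sigma$).
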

\begin{proof}
Consider the linear map
\[
\Omega : (\Q^r)^{b-1} \to \V_{b-1} \qquad \text{with} \qquad \Omega(\vec{y}_1, \ldots, \vec{y}_{b-1}) \ \coloneqq \ \sum_{j=1}^{b-1} V_j D \vec{y}_j\,.
\]
Since $\Omega$~is surjective, it has a linear right inverse $\Sigma : \V_{b-1} \to (\Q^r)^{b-1}$ with $\vec{z} = \Omega(\Sigma(\vec{z}))$ for all $\vec{z} \in \V_{b-1}$.
Write $\Sigma(\vec{z}) \eqqcolon (\Theta_1(\vec{z}), \ldots, \Theta_{b-1}(\vec{z}))$.
Thus, $\vec{z} = \sum_{j=1}^{b-1} V_j D \Theta_j(\vec{z})$ for all $\vec{z} \in \V_{b-1}$.
In particular, for all $\vec{y} \in \cL_b$, since $V_b D \vec{y} \in \V_{b-1}$,
\[
 V_b D \vec{y} \ = \ \sum_{j=1}^{b-1} V_j D \Theta_j(V_b D \vec{y})\,.
\]
Left-multiplying by~$C U_a X$ yields the claimed equality.
\end{proof}

The following lemma says that if two matrices $\widehat{X}, X \in S$ have the same $\Psi$-values in the first $b-1$ blocks of block row~$a$, then their $\psi_{a b}$-values are related by an element of the group~$H_b$.
This lemma motivates our term ``coupling group'' for~$H_b$.
\begin{lemma} \label{lem:ratio}
Suppose that $\widehat{X}, X \in S$ satisfy $\psi_{a j}(\widehat{X}) = \psi_{a j}(X)$ for all $1 \le j \le b-1$.
If $\psi_{a b}(\widehat{X}), \psi_{a b}(X) \in G$ (i.e., are nonzero), then there is an $h \in H_b$ such that $\psi_{a b}(\widehat{X}) h = \psi_{a b}(X)$.
\end{lemma}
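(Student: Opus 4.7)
The plan is to combine Lemma \ref{lem:identity-on-Lb} with the group structure of $G$ to produce the desired element $h$. Since $\psi_{a j}(\widehat{X}) = \psi_{a j}(X)$ for all $1 \le j \le b-1$, applying Lemma \ref{lem:identity-on-Lb} to both $\widehat{X}$ and $X$ yields, for every $\vec{y} \in \cL_b$,
\[
\psi_{a b}(\widehat{X})\,\vec{y} \ = \ \sum_{j=1}^{b-1} \psi_{a j}(\widehat{X})\,\Theta_j(V_b D \vec{y}) \ = \ \sum_{j=1}^{b-1} \psi_{a j}(X)\,\Theta_j(V_b D \vec{y}) \ = \ \psi_{a b}(X)\,\vec{y}\,.
\]
So the first step is simply to observe that $\psi_{a b}(\widehat{X})$ and $\psi_{a b}(X)$ agree as linear maps on the subspace $\cL_b \subseteq \Q^r$.

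The second step is to exploit that both values lie in the group $G$. By \Cref{lem:group-G}, $G$ is a (finite) subgroup of $\GL_r(\Q)$, so $\psi_{a b}(\widehat{X})$ is invertible, and its inverse again lies in $G$. I would then define
\[
h \ \coloneqq \ \psi_{a b}(\widehat{X})^{-1}\,\psi_{a b}(X) \ \in \ G\,,
\]
so that $\psi_{a b}(\widehat{X})\,h = \psi_{a b}(X)$ by construction. The only remaining task is to verify $h \in H_b$, which follows directly from the first step: for every $\vec{y} \in \cL_b$,
\[
h \vec{y} \ = \ \psi_{a b}(\widehat{X})^{-1}\,\psi_{a b}(X)\,\vec{y} \ = \ \psi_{a b}(\widehat{X})^{-1}\,\psi_{a b}(\widehat{X})\,\vec{y} \ = \ \vec{y}\,,
\]
so $h$ fixes $\cL_b$ pointwise, which is exactly the defining property of $H_b$.

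There is essentially no obstacle: all the heavy lifting has been done in Lemma \ref{lem:identity-on-Lb}, which provides the key decomposition, and the only subtlety is to make sure both matrices are invertible so that the ratio $h$ is well defined inside $G$ (this is precisely why the hypothesis $\psi_{a b}(\widehat{X}), \psi_{a b}(X) \in G$ excluding the zero case is needed). The proof thus reduces to a one-line algebraic manipulation once Lemma \ref{lem:identity-on-Lb} is in place.
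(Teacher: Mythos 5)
Your proof is correct and follows essentially the same route as the paper's: apply \cref{lem:identity-on-Lb} to both $\widehat{X}$ and~$X$ to conclude $\psi_{ab}(\widehat{X})$ and $\psi_{ab}(X)$ agree on~$\cL_b$, then set $h \coloneqq \psi_{ab}(\widehat{X})^{-1}\psi_{ab}(X)$ and check it fixes~$\cL_b$. Nothing is missing.
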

\begin{proof}
Write $\widehat{g} \coloneqq \psi_{a b}(\widehat{X}) \in G$ and $g \coloneqq \psi_{a b}(X) \in G$.
By \cref{lem:identity-on-Lb},
\[
 \widehat{g} \vec{y} \ = \ \sum_{j=1}^{b-1} \psi_{a j}(\widehat{X}) \Theta_j(V_b D \vec{y}) \ = \ \sum_{j=1}^{b-1} \psi_{a j}(X) \Theta_j(V_b D \vec{y}) \ = \ g \vec{y} \qquad \text{for all $\vec{y} \in \cL_b$\,;}
\]
i.e., $\widehat{g}$ and~$g$ agree on~$\cL_b$.
It follows that $h \coloneqq \widehat{g}^{-1} g$ (where $h \in G$, as $G$~is a group) fixes~$\cL_b$; i.e., $h \vec{y} = \vec{y}$ for all $\vec{y} \in \cL_b$.
Thus, $h \in H_b$ and $\widehat{g} h = g$.
\end{proof}

\begin{example} \label{ex:5}
We continue \cref{ex:4}.
Using the expressions for $C_i D_j$ ($i,j \in \{1,2\}$) from \cref{ex:1} and the fact that $G$~is a group one can show that
\[
\left\{\begin{pmatrix}\psi_{1 1}(X) & \psi_{1 2}(X) \end{pmatrix} \mid X \in S\right\}
\ = \
\left\{\begin{pmatrix} g & g \big(\begin{smallmatrix}0&1\\1&0\end{smallmatrix}\big) \end{pmatrix} \mid g \in G\right\} \; \cup \;
\left\{\begin{pmatrix} g & g \big(\begin{smallmatrix}0&-1\\1&0\end{smallmatrix}\big) \end{pmatrix} \mid g \in G\right\}\,.
\]
Therefore, for any $\widehat{X}, X \in S$ with $\psi_{1 1}(\widehat{X}) = \psi_{1 1}(X)$ we have
\[
 \psi_{1 2}(\widehat{X}) \big(\begin{smallmatrix}1&0\\0&1\end{smallmatrix}\big) \ = \ \psi_{1 2}(X) \qquad \text{or} \qquad
 \psi_{1 2}(\widehat{X}) \big(\begin{smallmatrix}1&0\\0&-1\end{smallmatrix}\big) \ = \ \psi_{1 2}(X)\,.
\]
Since we have $H_2 = \left\{\left(\begin{smallmatrix}1&0\\0&1\end{smallmatrix}\right), \left(\begin{smallmatrix}1&0\\0&-1\end{smallmatrix}\right)\right\}$
from \cref{ex:4}, this matches \cref{lem:ratio}.
\qed
\end{example}

The following lemma bounds the number of different $\psi_{a b}(X)$ when the $\psi_{a j}(X)$ for $j < b$ have been fixed.
\begin{lemma}\label{lem:row-step}
Let $g_1, \ldots, g_{b-1} \in G \cup \{O_r\}$.
Then
\[
 |\{\psi_{a b}(X) \mid X \in S, \ \psi_{a j}(X) = g_j \text{ for all } 1 \le j \le b-1\}| \ \le \ 3^{w_b^2}\,.
\]
\end{lemma}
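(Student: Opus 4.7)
\textbf{Proof plan for \Cref{lem:row-step}.} The plan is to reduce the counting to \cref{lem:ratio}, which couples nonzero values of $\psi_{ab}$ via the group $H_b$, and then invoke the size estimate from \cref{lem:Hb-bound}.

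Let $R \coloneqq \{X \in S \mid \psi_{aj}(X) = g_j \text{ for all } 1 \le j \le b-1\}$ and partition its image under $\psi_{ab}$ into those elements lying in $G$ and those equal to $O_r$. If $\psi_{ab}(X) = O_r$ for every $X \in R$, the set to be counted has at most one element, which is trivially bounded by $3^{w_b^2}$ since $w_b \ge 1$ by \cref{lem:lw}. Otherwise, fix any witness $\widehat{X} \in R$ with $\psi_{ab}(\widehat{X}) \in G$. For every other $X \in R$ with $\psi_{ab}(X) \in G$, \cref{lem:ratio} (applied with the matching prefix $g_1, \ldots, g_{b-1}$) yields some $h \in H_b$ such that $\psi_{ab}(X) = \psi_{ab}(\widehat{X}) h$.

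Hence the nonzero part of $\{\psi_{ab}(X) \mid X \in R\}$ is contained in the coset $\psi_{ab}(\widehat{X}) H_b$, which has cardinality at most $|H_b|$. Adding at most one more for the possible value $O_r$, we get
\[
 |\{\psi_{ab}(X) \mid X \in R\}| \ \le \ |H_b| + 1\,.
\]
Finally, \cref{lem:Hb-bound} gives $|H_b| + 1 \le 3^{w_b^2}$, completing the argument.

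There is essentially no obstacle: all the work is already packaged into \cref{lem:ratio} (providing the coupling) and \cref{lem:Hb-bound} (bounding $H_b$). The only subtlety is to remember to account for the additional $O_r$ value, so one argues a bound of $|H_b| + 1$ rather than $|H_b|$ — this matches precisely the $3^{w_b^2}$ bound that \cref{lem:Hb-bound} was phrased to accommodate.
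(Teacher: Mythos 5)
Your proof is correct and follows essentially the same approach as the paper: both use \cref{lem:ratio} to trap the nonzero values of $\psi_{ab}$ on $R$ in a single coset $\psi_{ab}(\widehat X)H_b$ and then apply \cref{lem:Hb-bound}, accounting for the extra $O_r$ value. The only cosmetic difference is that the paper defines $R$ as the image set $\{\psi_{ab}(X) \mid \ldots\}$ rather than the preimage set of matrices, which makes no difference to the argument.
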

Here is an illustration of the lemma.
\begin{center}
\scalebox{.9}{
\begin{tikzpicture}[x=7mm,y=7mm]
  \fill[darkblue] (\Xzero,\Ytop) rectangle (\Xone,\Yone);   
  \fill[darkblue] (\Xone,\Ytop) rectangle (\Xtwo,\Yone);    
  \fill[myyellow] (\Xtwo,\Ytop) rectangle (\Xthree,\Yone);  

  \node at ({0.5*(\Xzero+\Xone)}, {0.5*(\Ytop+\Yone)}) {$g_1$};        
  \node at ({0.5*(\Xone+\Xtwo)},  {0.5*(\Ytop+\Yone)}) {$g_2$};        
  \node at ({0.5*(\Xtwo+\Xthree)},{0.5*(\Ytop+\Yone)}) {$\psi_{a b}(X)$}; 

  \draw[very thick] (\Xzero,\Ytop) -- (\Xfive,\Ytop);   
  \draw            (\Xzero,\Yone) -- (\Xfive,\Yone);    
  \draw[very thick] (\Xzero,\Ytop) -- (\Xzero,\Yone);   
  \draw[very thick] (\Xfive,\Ytop) -- (\Xfive,\Yone);   
  \foreach \XX in {\Xone,\Xtwo,\Xthree,\Xfour} \draw (\XX,\Ytop) -- (\XX,\Yone);

  \path ({0.5*(\Xzero+\Xone)},0.35)  node {$w_1=r$};
  \path ({0.5*(\Xone+\Xtwo)},0.35)   node {$w_2$};
  \path ({0.5*(\Xtwo+\Xthree)},0.35) node {$w_b$};
  \path ({0.5*(\Xthree+\Xfour)},0.35)node {$w_4$};
  \path ({0.5*(\Xfour+\Xfive)},0.35) node {$w_v$};

  \draw[decorate,decoration={brace,amplitude=6pt}]
    (\Xzero,0.6) -- node[above=8pt] {$n$} (\Xfive,0.6);
\end{tikzpicture}
}
\end{center}
\begin{proof}[Proof of \cref{lem:row-step}]
Set
\[
R \ \coloneqq \ \{\psi_{a b}(X) \mid X \in S,\ \psi_{a j}(X)=g_j \text{ for } 1\le j\le b-1\} \ \subseteq \ G\cup\{O_r\}.
\]
Suppose that $R \setminus \{O_r\}$~is nonempty; i.e., there is $\widehat{X} \in S$ with $\psi_{a j}(\widehat{X}) = g_j$ for all $1 \le j \le b-1$ and $\psi_{a b}(\widehat{X}) \ne O_r$.
Then, by \cref{lem:ratio}, $R \setminus \{O_r\} \subseteq \psi_{a b}(\widehat{X}) H_b$.
It follows that $|R \setminus \{O_r\}| \le |H_b|$.
Thus, $|R| \le |H_b| + 1$.
Clearly, this bound also holds when $R \setminus \{O_r\}$ is empty.
Hence, \cref{lem:Hb-bound} implies $|R| \le 3^{w_b^2}$.
\end{proof}

The following proposition bounds the number of length-$b$ prefixes of the $a$th block row of~$\Psi$.
It will not be used later; it serves as ``warm-up'' for the ``2-dimensional'' \cref{lem:grid-bound} in \cref{sub:overall} below.
\begin{proposition} \label{prop:row-bound} 
Let
$Y_b \coloneqq \left\{\begin{pmatrix} \psi_{a 1}(X) & \cdots & \psi_{a b}(X) \end{pmatrix} \mid X \in S\right\}$.
We have $|Y_b| \le 3^{w_1^2 + \cdots + w_b^2}$.
\end{proposition}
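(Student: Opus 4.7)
The plan is to prove \cref{prop:row-bound} by induction on $b$, using \cref{lem:row-step} as the inductive engine. The key observation is that the number of length-$b$ prefixes decomposes into a ``length-$(b-1)$ prefix count'' times a ``per-prefix extension count,'' and \cref{lem:row-step} gives precisely the latter bound of $3^{w_b^2}$ independently of the fixed prefix.

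For the base case $b=1$ (the $b-1=0$ prefix is vacuous), I would apply \cref{lem:row-step} with the empty list of $g_j$'s. Since the hypothesis $\psi_{a j}(X) = g_j$ for $1 \le j \le 0$ is vacuously true, the lemma yields $|Y_1| = |\{\psi_{a 1}(X) \mid X \in S\}| \le 3^{w_1^2}$. (Alternatively, one notes that $Y_1 \subseteq G \cup \{O_r\}$, and by \cref{lem:bound-on-group-size}, $|G|+1 \le 3^{r^2} = 3^{w_1^2}$ using $w_1 = r$ from \eqref{eq:wj}.)

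For the inductive step, assume $|Y_{b-1}| \le 3^{w_1^2 + \cdots + w_{b-1}^2}$. Partition $Y_b$ according to the length-$(b-1)$ prefix: for each $\vec{g} = (g_1, \ldots, g_{b-1}) \in Y_{b-1}$, the fibre
\[
 \{\psi_{a b}(X) \mid X \in S,\ \psi_{a j}(X) = g_j \text{ for all } 1 \le j \le b-1\}
\]
has cardinality at most $3^{w_b^2}$ by \cref{lem:row-step}. Summing over the at most $|Y_{b-1}|$ prefixes,
\[
 |Y_b| \ \le \ |Y_{b-1}| \cdot 3^{w_b^2} \ \le \ 3^{w_1^2 + \cdots + w_{b-1}^2} \cdot 3^{w_b^2} \ = \ 3^{w_1^2 + \cdots + w_b^2}\,,
\]
completing the induction.

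There is essentially no obstacle here: the proposition is a direct one-dimensional assembly of the per-column bound supplied by \cref{lem:row-step}. All the genuine technical content, namely relating $\psi_{ab}(X)$ to $(\psi_{a1}(X), \ldots, \psi_{a(b-1)}(X))$ via the coupling group $H_b$ and bounding $|H_b|+1 \le 3^{w_b^2}$, has already been carried out in \cref{sub:Hb,sub:row-prefix}. The only thing to verify carefully is the base case $b=1$ convention (empty prefix) and that the multiplicative estimate combines exponents additively, both of which are routine.
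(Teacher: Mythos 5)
Your proof is correct and follows the same route as the paper: induction on $b$, partitioning $Y_b$ by its length-$(b-1)$ prefixes, and invoking \cref{lem:row-step} to bound each fibre by $3^{w_b^2}$. The only minor addition is your alternative justification of the base case via $|G|+1 \le 3^{r^2}$, which is harmless and equivalent.
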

\begin{proof}
The value of $b \in \{1, \ldots, v\}$ was fixed at the beginning of \cref{sub:width}.
In the following we let~$b$ vary, and prove the proposition by induction on $b \in \{1, \ldots, v\}$.
The induction base, $b=1$, follows immediately from \cref{lem:row-step}.
For the induction step, suppose $|Y_{b-1}| \le 3^{w_1^2 + \cdots + w_{b-1}^2}$ holds for some $1 < b \le v$.
We have
\begin{align*}
|Y_b| \ 
& = \ \sum_{\slim{(g_1 \; \cdots \; g_{b-1}) \,\in\, Y_{b-1}}} \mathrlap{|\{\psi_{a b}(X) \mid X \in S, \ \psi_{a j}(X) = g_j \text{ for all } 1 \le j \le b-1\}|} \\
& \le \ \sum_{\slim{(g_1 \; \cdots \; g_{b-1}) \,\in\, Y_{b-1}}} 3^{w_b^2} \ = \ |Y_{b-1}| \cdot 3^{w_b^2} && \text{by \cref{lem:row-step}} \\
& \le \ 3^{w_1^2 + \cdots + w_{b-1}^2} \cdot 3^{w_b^2} \ = \ 3^{w_1^2 + \cdots + w_{b}^2} && \text{by the induction hypothesis.} \qedhere 
\end{align*}
\end{proof}

Using \cref{prop:row-bound}, we can improve the bound $|S| \le 3^{r^2 n^2}$ obtained at the end of \cref{sub:Psi}.
Let us write $Y_{a b} \coloneqq Y_b$ for the set~$Y_b$ from \cref{prop:row-bound}, to make its implicit dependence on $a \in \{1, \ldots, u\}$ (fixed at the beginning of the subsection) explicit.
Since $w_j \le r$ and $w_1 + \cdots + w_v = n$ by \cref{eq:wj}, we have $w_1^2 + \cdots + w_v^2 \le r n$.
Then \cref{prop:row-bound} gives $|Y_{a v}| \le 3^{r n}$, and we obtain, using $u \le n$,
\[
|S| \ = \ |\Psi(S)| \ \le \ \prod_{a=1}^{u} |Y_{a v}| \ \le \ \prod_{a=1}^{u} 3^{r n} \ = \ 3^{r n u} \ \le \ 3^{r n^2}\,,
\]
improving on the earlier bound by a factor of~$r$ in the exponent.

In order to improve this bound down to $|S| \le 3^{n^2}$, we need to exploit dependencies between the block rows, in addition to the dependencies within block row~$a$ explored thus far.
Column dependencies are, of course, completely analogous to row dependencies; the remaining challenge is to find a way to couple each block~$(a,b)$ both within its row and its column.

\subsection{The overall count} \label{sub:overall}

In the previous subsection we considered the length-$b$ prefix of the $a$th block row of~$\Psi$
\[
\begin{pmatrix} \psi_{a 1}(X) & \cdots & \psi_{a b}(X) \end{pmatrix} \qquad \text{where $X \in S$.}
\]
Next we wish to formulate the column analogue of \cref{lem:row-step}.
Analogously to the width~$w_b$ of block column~$b$, we define the \emph{height}, $h_a$, of block row~$a$, i.e.,
\[
 h_a \ \coloneqq \ \dim \U_a - \dim \U_{a-1} \qquad \text{where $1 \le a \le u$.}
\]
We have $h_a > 0$, analogously to $w_b > 0$ from \cref{lem:lw}.
The following equalities are exactly analogous to \cref{eq:wj} for~$w_b$ in \cref{sub:width}:
\begin{equation} \label{eq:hi} \begin{aligned}
 h_1 \ &= \ \dim \U_1 \ = \ \rk(C U_1) \ = \ r    \\
 h_1 + \cdots + h_u \ &= \ \dim \U_u \ = \ n \,.
\end{aligned}
\end{equation}
In particular, $h_1 = \rk(C U_1) = r$ follows from the analogue of \cref{lem:rk-VjD}.
The following lemma considers a length-$a$ prefix of the $b$th block column of~$\Psi$.
\begin{lemma}\label{lem:column-step}
Let $1 \le a \le u$ and $1 \le b \le v$.
Let $g_1, \ldots, g_{a-1} \in G \cup \{O_r\}$.
Then
\[
 |\{\psi_{a b}(X) \mid X \in S, \ \psi_{i b}(X) = g_i \text{ for all } 1 \le i \le a-1\}| \ \le \ 3^{h_a^2}\,.
\]
\end{lemma}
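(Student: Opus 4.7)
My plan is to prove \cref{lem:column-step} by dualizing the argument of \crefrange{sub:width}{sub:row-prefix}: swapping the roles of left and right multiplication, of rows and columns, and of $D, C$ and~$C, D$. The cleanest formulation is to apply \cref{lem:row-step} to the transposed semigroup $S^\top \coloneqq \{X^\top \mid X \in S\}$, which is again finite and irreducible (irreducibility transfers by \cref{prop:left-right}); the idempotent $E = DC$ transposes to $E^\top = C^\top D^\top$, so I would take $D' \coloneqq C^\top$ and $C' \coloneqq D^\top$, verify $C'D' = I_r$, and set $U'_i \coloneqq V_i^\top$, $V'_j \coloneqq U_j^\top$. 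A direct calculation then gives the key identity
\[
  \psi^{S^\top}_{j\,i}(X^\top) \ = \ D^\top V_j^\top X^\top U_i^\top C^\top \ = \ \bigl(C U_i X V_j D\bigr)^\top \ = \ \psi^S_{i\,j}(X)^\top\,,
\]
so the grid~$\Psi$ for $S^\top$ is the transpose of the grid for $S$. In particular, the block widths of $\Psi$ for $S^\top$ coincide with the block heights $h_a$ of $\Psi$ for $S$.

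Given $g_1, \ldots, g_{a-1} \in G \cup \{O_r\}$, I would then apply \cref{lem:row-step} to $S^\top$ at block-row index~$b$ and prefix length $a-1$ with prescribed values $g_i^\top$: the set of possible $\psi^{S^\top}_{b\,a}(X^\top) = \psi^S_{a\,b}(X)^\top$ subject to $\psi^{S^\top}_{b\,i}(X^\top) = g_i^\top$ (equivalently $\psi^S_{i\,b}(X) = g_i$) has cardinality at most $3^{h_a^2}$. Transposing this set back yields exactly the bound stated in \cref{lem:column-step}.

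The only point requiring care is that the machinery of \crefrange{sub:width}{sub:row-prefix} was developed relative to a specific idempotent and its attached group~$G$. For $S^\top$ the resulting group is $G^\top = \{g^\top \mid g \in G\}$, a finite subgroup of~$\GL_r(\Q)$ of the same order as~$G$, so every use of the group-size bound \cref{lem:bound-on-group-size} inside the proof of \cref{lem:row-step} applies unchanged. I anticipate no genuine obstacle: the argument is a pure duality. Alternatively, one could repeat the development directly, introducing the row-vector subspace $\cL^\star_a \coloneqq \{\vec{y}^\top \in \Q^{1 \times r} \mid \vec{y}^\top C U_a \in \U_{a-1}\}$, the right-acting coupling group $H^\star_a \coloneqq \{g \in G \mid \vec{y}^\top g = \vec{y}^\top \text{ for all } \vec{y}^\top \in \cL^\star_a\}$, and the verbatim left-right swaps of \cref{lem:lw,lem:Hb-bound,lem:identity-on-Lb,lem:ratio}; but the transposition trick makes these repetitions unnecessary.
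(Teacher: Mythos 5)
Your proposal is correct and formalizes precisely what the paper gestures at when it says the proof "follows from transposing the row argument"; applying \cref{lem:row-step} to the transposed semigroup $S^\top$ with $D' = C^\top$, $C' = D^\top$, $U'_i = V_i^\top$, $V'_j = U_j^\top$ is a clean way to make the duality rigorous, and all the required properties (irreducibility of $S^\top$ via \cref{prop:left-right}, $C'D' = I_r$, $D'C' = E^\top$, $G' = G^\top$ of the same order, $w'_a = h_a$, and the grid identity $\psi^{S^\top}_{b\,i}(X^\top) = \psi^S_{i\,b}(X)^\top$) check out. This is essentially the paper's intended argument, just spelled out; your sketched alternative of repeating the development with $\cL^\star_a$ and $H^\star_a$ is what the paper's "fully analogous" phrasing more literally suggests, but the transposition trick is the tidier route.
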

The proof follows from transposing the row argument from the last two subsections; we omit the proof, as it is fully analogous to the proof of \cref{lem:row-step}.

Towards the overall count, define the \emph{grid}
\[
\Gamma \ \coloneqq \ \{1, \ldots, u\} \times \{1, \ldots, v\}\,.
\]
Let $\mathord{\prec}$ be the row-major order on~$\Gamma$, i.e.,
\[
 (i,j) \prec (i',j') \quad \Longleftrightarrow \quad i < i'\quad\text{or}\quad (i = i' \text{ and } j < j').
\]
\begin{figure}[t]
\begin{center}
\begin{tabular}{cc}
    \scalebox{.9}{\begin{tikzpicture}[x=7mm,y=7mm]
      \foreach \xa/\xb in {\Xzero/\Xone,\Xone/\Xtwo,\Xtwo/\Xthree,\Xthree/\Xfour,\Xfour/\Xfive}{
        \fill[lightblue] (\xa,\Ytop) rectangle (\xb,\Yone); 
        \fill[lightblue] (\xa,\Yone) rectangle (\xb,\Ytwo); 
      }
      \foreach \xa/\xb in {\Xzero/\Xone,\Xone/\Xtwo}{
        \fill[lightblue] (\xa,\Ytwo) rectangle (\xb,\Ythree); 
      }

      \draw[very thick] (\Xzero,\Ytop) rectangle (\Xfive,\Yfour);
      \foreach \XX in {\Xone,\Xtwo,\Xthree,\Xfour} \draw (\XX,\Ytop) -- (\XX,\Yfour);
      \foreach \YY in {\Yone,\Ytwo,\Ythree}       \draw (\Xzero,\YY) -- (\Xfive,\YY);

      \path ({0.5*(\Xzero+\Xone)},0.35)  node {$w_1=r$};
      \path ({0.5*(\Xone+\Xtwo)},0.35)   node {$w_2$};
      \path ({0.5*(\Xtwo+\Xthree)},0.35) node {$w_3$};
      \path ({0.5*(\Xthree+\Xfour)},0.35)node {$w_4$};
      \path ({0.5*(\Xfour+\Xfive)},0.35) node {$w_v$};

      \path (-0.05,{0.5*(\Ytop+\Yone)})   node[anchor=east] {$h_1=r$};
      \path (-0.05,{0.5*(\Yone+\Ytwo)})   node[anchor=east] {$h_2$};
      \path (-0.05,{0.5*(\Ytwo+\Ythree)}) node[anchor=east] {$h_3$};
      \path (-0.05,{0.5*(\Ythree+\Yfour)})node[anchor=east] {$h_u$};

      \draw[decorate,decoration={brace,amplitude=6pt}]
        (\Xzero,0.6) -- node[above=8pt] {$n$} (\Xfive,0.6);
      \draw[decorate,decoration={brace,amplitude=6pt,mirror}]
        (-2,\Ytop) -- node[left=8pt] {$n$} (-2,\Yfour);
    \end{tikzpicture}}
&
  \scalebox{.9}{
    \begin{tikzpicture}[x=7mm,y=7mm]
      \foreach \xa/\xb in {\Xzero/\Xone,\Xone/\Xtwo,\Xtwo/\Xthree,\Xthree/\Xfour,\Xfour/\Xfive}{
        \fill[lightblue] (\xa,\Ytop) rectangle (\xb,\Yone); 
        \fill[lightblue] (\xa,\Yone) rectangle (\xb,\Ytwo); 
      }
      \foreach \xa/\xb in {\Xzero/\Xone,\Xone/\Xtwo}{
        \fill[lightblue] (\xa,\Ytwo) rectangle (\xb,\Ythree); 
      }

      \fill[darkblue] (\Xzero,\Ytwo)   rectangle (\Xone,\Ythree);  
      \fill[darkblue] (\Xone,\Ytwo)    rectangle (\Xtwo,\Ythree);  
      \fill[darkblue] (\Xtwo,\Yone)    rectangle (\Xthree,\Ytwo);  
      \fill[darkblue] (\Xtwo,\Ytop)    rectangle (\Xthree,\Yone);  

      \fill[myyellow] (\Xtwo,\Ytwo) rectangle (\Xthree,\Ythree);
      \node at ({0.5*(\Xtwo+\Xthree)}, {0.5*(\Ytwo+\Ythree)}) {$\psi_{a b}(X)$};

      \node at ({0.5*(\Xzero+\Xone)},  {0.5*(\Ytop+\Yone)})   {$g_{1 1}$};
      \node at ({0.5*(\Xone+\Xtwo)},   {0.5*(\Ytop+\Yone)})   {$g_{1 2}$};
      \node at ({0.5*(\Xtwo+\Xthree)}, {0.5*(\Ytop+\Yone)})   {$g_{1 b}$};
      \node at ({0.5*(\Xthree+\Xfour)},{0.5*(\Ytop+\Yone)})   {$g_{1 4}$};
      \node at ({0.5*(\Xfour+\Xfive)}, {0.5*(\Ytop+\Yone)})   {$g_{1 v}$};
      \node at ({0.5*(\Xzero+\Xone)},  {0.5*(\Yone+\Ytwo)})   {$g_{2 1}$};
      \node at ({0.5*(\Xone+\Xtwo)},   {0.5*(\Yone+\Ytwo)})   {$g_{2 2}$};
      \node at ({0.5*(\Xtwo+\Xthree)}, {0.5*(\Yone+\Ytwo)})   {$g_{2 b}$};
      \node at ({0.5*(\Xthree+\Xfour)},{0.5*(\Yone+\Ytwo)})   {$g_{2 4}$};
      \node at ({0.5*(\Xfour+\Xfive)}, {0.5*(\Yone+\Ytwo)})   {$g_{2 v}$};
      \node at ({0.5*(\Xzero+\Xone)},  {0.5*(\Ytwo+\Ythree)}) {$g_{a 1}$};
      \node at ({0.5*(\Xone+\Xtwo)},   {0.5*(\Ytwo+\Ythree)}) {$g_{a 2}$};

      \draw[very thick] (\Xzero,\Ytop) rectangle (\Xfive,\Yfour);
      \foreach \XX in {\Xone,\Xtwo,\Xthree,\Xfour} \draw (\XX,\Ytop) -- (\XX,\Yfour);
      \foreach \YY in {\Yone,\Ytwo,\Ythree}       \draw (\Xzero,\YY) -- (\Xfive,\YY);

      \path ({0.5*(\Xzero+\Xone)},0.35)  node {$w_1=r$};
      \path ({0.5*(\Xone+\Xtwo)},0.35)   node {$w_2$};
      \path ({0.5*(\Xtwo+\Xthree)},0.35) node {$w_b$};
      \path ({0.5*(\Xthree+\Xfour)},0.35)node {$w_4$};
      \path ({0.5*(\Xfour+\Xfive)},0.35) node {$w_v$};

      \path (-0.05,{0.5*(\Ytop+\Yone)})   node[anchor=east] {$h_1=r$};
      \path (-0.05,{0.5*(\Yone+\Ytwo)})   node[anchor=east] {$h_2$};
      \path (-0.05,{0.5*(\Ytwo+\Ythree)}) node[anchor=east] {$h_a$};
      \path (-0.05,{0.5*(\Ythree+\Yfour)})node[anchor=east] {$h_u$};

      \draw[decorate,decoration={brace,amplitude=6pt}]
        (\Xzero,0.6) -- node[above=8pt] {$n$} (\Xfive,0.6);
    \end{tikzpicture}}
\\
(a) & (b)
\end{tabular}
\end{center}
\caption{(a) Grid~$\Gamma$ with the cells $\prec (3,3)$ shown in blue.
(b) Illustration of the proof of \cref{lem:grid-step}; 
the number of possible values~$\psi_{a b}(X)$ in the yellow cell is limited by the possible combinations of values in the dark-blue cells.}
\label{fig:rowmajor-gridstep}
\end{figure}
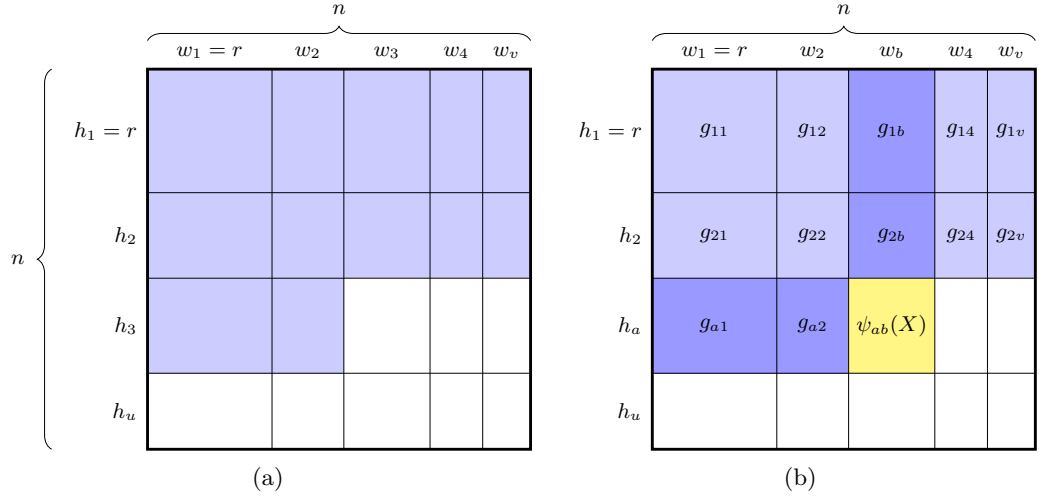
\Cref{fig:rowmajor-gridstep}(a) visualizes the order~$\mathord{\prec}$.

The following lemma is a grid analogue to \cref{lem:row-step,lem:column-step}; in fact, the proof is based on these lemmas.
\begin{lemma} \label{lem:grid-step}
Let $(a,b) \in \Gamma$.
For all $(i,j) \prec (a,b)$ fix $g_{i j} \in G \cup \{O_r\}$.
Let
\[
 R \ \coloneqq \ \{\psi_{a b}(X) \mid X \in S, \ \psi_{i j}(X) = g_{i j} \text{ for all } (i,j) \prec (a,b)\}\,.
\]
Then $|R| \le 3^{h_a w_b}$.
\end{lemma}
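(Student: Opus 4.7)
My plan is to combine the row-based bound from \cref{lem:row-step} and its column analogue \cref{lem:column-step} by taking the smaller of the two, and then apply the elementary inequality $\min(x,y)^2 \le xy$; no finer coupling between rows and columns will be needed.

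First, I would observe that the set of fixed cells $\{g_{ij} : (i,j) \prec (a,b)\}$ contains both the row-$a$ prefix and the column-$b$ prefix of $(a,b)$. Indeed, for $1 \le j \le b-1$ we have $(a,j) \prec (a,b)$ (same row, smaller column), and for $1 \le i \le a-1$ we have $(i,b) \prec (a,b)$ (smaller row). Consequently, $R$ is contained in each of the two ``relaxed'' sets
\[
R_{\mathrm{row}} \ \coloneqq \ \{\psi_{ab}(X) \mid X \in S,\ \psi_{aj}(X) = g_{aj} \text{ for } 1 \le j \le b-1\}
\]
and
\[
R_{\mathrm{col}} \ \coloneqq \ \{\psi_{ab}(X) \mid X \in S,\ \psi_{ib}(X) = g_{ib} \text{ for } 1 \le i \le a-1\},
\]
since imposing the additional constraints that define $R$ from $R_{\mathrm{row}}$ or $R_{\mathrm{col}}$ can only shrink the set. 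Applying \cref{lem:row-step} gives $|R_{\mathrm{row}}| \le 3^{w_b^2}$, and applying \cref{lem:column-step} gives $|R_{\mathrm{col}}| \le 3^{h_a^2}$.

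Combining these two bounds yields
\[
 |R| \ \le \ \min\bigl(3^{w_b^2},\,3^{h_a^2}\bigr) \ = \ 3^{\min(w_b,\,h_a)^2}.
\]
Since $\min(x,y)^2 \le xy$ for all nonnegative reals $x,y$ (assume $x \le y$; then $\min(x,y)^2 = x^2 \le xy$), we obtain $\min(w_b, h_a)^2 \le h_a w_b$ and hence $|R| \le 3^{h_a w_b}$, as required.

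The only potential obstacle is the temptation to combine the row and column constraints in a more refined manner, for instance by intersecting the right coset $\widehat{g} H_b$ (coming from \cref{lem:ratio}) with the left coset coming from the column analogue of that lemma, and bounding the size of such an intersection via conjugate subgroups. This is unnecessary: the crude $\min$ already produces the asymmetric bound $3^{h_a w_b}$, and it is precisely this asymmetric form that will let the forthcoming product over $(a,b) \in \Gamma$ telescope into $3^{(\sum_a h_a)(\sum_b w_b)} = 3^{n^2}$, by virtue of \cref{eq:wj,eq:hi}.
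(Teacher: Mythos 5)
Your proof is correct and follows the same route as the paper's: contain $R$ in both the row- and column-relaxed sets, apply \cref{lem:row-step,lem:column-step} to get $3^{w_b^2}$ and $3^{h_a^2}$, then use $\min(h_a,w_b)^2 \le h_a w_b$. The paper's version is essentially identical, so there is nothing further to compare.
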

\begin{proof}
Note that $(a,j) \prec (a,b)$ and $(i,b) \prec (a,b)$ for all $1 \le j \le b-1$ and all $1 \le i \le a-1$.
\Cref{fig:rowmajor-gridstep}(b) shows these grid elements in dark-blue.
We have
\begin{align*}
R \ & \subseteq \ \{\psi_{a b}(X) \mid X \in S, \ \psi_{a j}(X) = g_{a j} \text{ for all } 1 \le j \le b-1\} \qquad \text{and}\\
R \ & \subseteq \ \{\psi_{a b}(X) \mid X \in S, \ \psi_{i b}(X) = g_{i b} \text{ for all } 1 \le i \le a-1\}\,.
\end{align*}
Using \cref{lem:row-step,lem:column-step} respectively, we obtain $|R| \le 3^{w_b^2}$ and $|R| \le 3^{h_a^2}$.
Since $h_a, w_b > 0$,
\[
|R| \ \le \ \min\left\{3^{h_a^2},3^{w_b^2}\right\} \ = \ 3^{(\min\{h_a,w_b\})^2} \ \le \ 3^{h_a w_b}\,. \qedhere
\]
\end{proof}

The following lemma is the grid analogue to \cref{prop:row-bound}.
\begin{lemma} \label{lem:grid-bound}
Let $1 \le k \le u v$ and let $(a,b) \in \Gamma$ be the $k$th~pair in the order~$\mathord{\prec}$. 
Define
\[
Z_k \ \coloneqq\ \Bigl\{\big(\psi_{i j}(X)\big)_{(i,j) \preceq (a,b)}\ \Bigm|\ X \in S\Bigr\}\,.
\]
Set $s \coloneqq \sum_{(i,j) \preceq (a,b)} h_i w_j$.
Then we have $|Z_k| \le 3^s$.
\end{lemma}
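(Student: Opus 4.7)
The plan is to prove this by induction on $k$, exactly paralleling the one-dimensional induction in \Cref{prop:row-bound}, but now using \Cref{lem:grid-step} in place of \Cref{lem:row-step}.

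For the base case $k=1$, the row-major order gives $(a,b) = (1,1)$, so $Z_1 = \{\psi_{1 1}(X) \mid X \in S\}$ and $s = h_1 w_1 = r^2$ by \Cref{eq:wj,eq:hi}. Here the set of constraints $(i,j) \prec (1,1)$ in \Cref{lem:grid-step} is empty, so the lemma directly yields $|Z_1| \le 3^{h_1 w_1} = 3^s$.

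For the inductive step, suppose the statement holds for $k-1$, and let $(a,b)$ be the $k$th pair and $(a',b')$ be the $(k-1)$th pair in $\prec$. The key observation is that $(i,j) \preceq (a',b')$ is equivalent to $(i,j) \prec (a,b)$, so any tuple in $Z_k$ is obtained by extending a tuple in $Z_{k-1}$ with one more entry at position $(a,b)$. Partitioning $Z_k$ by the underlying tuple in $Z_{k-1}$ gives
\[
|Z_k| \ = \ \sum_{(g_{i j})_{(i,j) \preceq (a',b')} \in Z_{k-1}} \left|\left\{\psi_{a b}(X) \mid X \in S,\ \psi_{i j}(X) = g_{i j} \text{ for all } (i,j) \prec (a,b) \right\}\right|.
\]
By \Cref{lem:grid-step}, each summand is at most $3^{h_a w_b}$. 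Combining this with the induction hypothesis $|Z_{k-1}| \le 3^{s - h_a w_b}$ (since $s - h_a w_b = \sum_{(i,j) \preceq (a',b')} h_i w_j$), we get
\[
|Z_k| \ \le \ |Z_{k-1}| \cdot 3^{h_a w_b} \ \le \ 3^{s - h_a w_b} \cdot 3^{h_a w_b} \ = \ 3^s.
\]

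I do not expect any real obstacle here: the content is concentrated in \Cref{lem:grid-step}, which already encodes both the row and column coupling. The only thing to get right is the bookkeeping that the predecessor of $(a,b)$ in the row-major order indexes precisely the entries $(i,j) \prec (a,b)$, so that the induction hypothesis applies verbatim and the exponent arithmetic telescopes cleanly.
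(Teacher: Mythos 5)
Your proof is correct and follows essentially the same inductive argument as the paper: the paper's proof also inducts on $k$, uses \cref{lem:grid-step} for both the base case and the summand bound in the step, and telescopes the exponents in exactly this way.
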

\begin{proof}
We prove the lemma by induction on $k \in \{1, \ldots, u v\}$.
The induction base, $k=1$, follows immediately from \cref{lem:grid-step}.
For the induction step, let $1 < k \le u v$, and suppose that $|Z_{k-1}| \le 3^{s_0}$ holds for $s_0 \coloneqq \sum_{(i,j) \prec (a,b)} h_i w_j$, where $(a,b) \in \Gamma$ is the $k$th~pair in the order~$\mathord{\prec}$.
We have
\begin{align*}
|Z_k| \ 
& = \ \mathrlap{\sum_{\slim{(g_{i j})_{(i,j) \prec (a,b)} \,\in\, Z_{k-1}}}\left|\{\psi_{a b}(X) \mid X \in S, \ \psi_{i j}(X) = g_{i j} \text{ for all } (i,j) \prec (a,b)\}\right|} \\
& \le \ \sum_{\slim{(g_{i j})_{(i,j) \prec (a,b)} \,\in\, Z_{k-1}}} 3^{h_a w_b} \ = \ |Z_{k-1}| \cdot 3^{h_a w_b} && \text{by \cref{lem:grid-step}} \\
& \le \ 3^{s_0} \cdot 3^{h_a w_b} \ = \ 3^s && \text{by the induction hypothesis.} \qedhere
\end{align*}
\end{proof}

Now the main theorem follows.

\begin{theorem}\label{thm:main-bound}
Let $S \subseteq \Q^{n \times n}$ be a finite irreducible semigroup. Then $|S| \le 3^{n^2}$.
\end{theorem}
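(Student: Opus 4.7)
The plan is to combine the grid-counting lemma (\Cref{lem:grid-bound}) with the injectivity of $\Psi$ (\Cref{lem:Psi-injective}) and the additivity identities $\sum_{a=1}^u h_a = n$ and $\sum_{b=1}^v w_b = n$ from \cref{eq:wj,eq:hi}. Conceptually, everything has already been done: \Cref{lem:grid-bound} bounds the number of distinct tuples $\Psi(X)$ with row--column prefixes going up to the $k$th grid cell in row-major order, and we just need to instantiate it at the last cell.

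First I would dispose of the trivial case $S = \{O_n\}$, where $|S| = 1 \le 3^{n^2}$ holds vacuously; this avoids worrying about whether a $(0\text{-})$minimal ideal $T$ and the idempotent $E$ can be chosen. So assume $S \ne \{O_n\}$, which is exactly the standing assumption under which $T$, $E$, $G$, $\Psi$, the $w_b$, and the $h_a$ were defined.

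Next I would apply \Cref{lem:grid-bound} with $k = uv$, the maximum possible value. The $uv$-th pair in the row-major order $\mathord{\prec}$ is $(a,b) = (u,v)$, and the set $Z_{uv}$ consists of all tuples $\big(\psi_{i j}(X)\big)_{(i,j) \preceq (u,v)}$ for $X \in S$, which is precisely $\Psi(S)$ written out cell by cell. The exponent provided by the lemma is
\[
 s \ = \ \sum_{(i,j) \in \Gamma} h_i w_j \ = \ \left(\sum_{i=1}^u h_i\right)\left(\sum_{j=1}^v w_j\right) \ = \ n \cdot n \ = \ n^2\,,
\]
using the identities $\sum_{i=1}^u h_i = n$ from \cref{eq:hi} and $\sum_{j=1}^v w_j = n$ from \cref{eq:wj}. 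Hence $|\Psi(S)| = |Z_{uv}| \le 3^{n^2}$.

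Finally, by \Cref{lem:Psi-injective}, the map $\Psi$ is injective, so $|S| = |\Psi(S)| \le 3^{n^2}$, which is the claimed bound. There is no real obstacle here: the heavy lifting was done in \cref{sub:width,sub:Hb,sub:row-prefix,sub:overall}, where the multiplicative coupling of the exponent $h_a w_b$ (rather than the much weaker $h_a^2$ or $w_b^2$ one would get by using only the row or only the column argument) was established in \Cref{lem:grid-step}. That factorization is exactly what makes the $\sum h_i w_j$ collapse to $(\sum h_i)(\sum w_j) = n^2$; without it one would only recover the weaker $3^{r n^2}$ bound noted at the end of \cref{sub:row-prefix}.
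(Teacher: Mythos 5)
Your proof is correct and is essentially identical to the paper's own argument: instantiate \Cref{lem:grid-bound} at the last grid cell, use injectivity of $\Psi$ from \Cref{lem:Psi-injective}, and collapse the exponent $\sum_{(i,j)\in\Gamma} h_i w_j$ to $\bigl(\sum_i h_i\bigr)\bigl(\sum_j w_j\bigr) = n^2$ via \cref{eq:wj,eq:hi}. The only (harmless) addition is your explicit handling of the trivial case $S=\{O_n\}$, which the paper leaves implicit in the standing nonzero assumption made in \Cref{sec:irreducible}.
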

\begin{proof}
Recall from \cref{eq:wj,eq:hi} that $h_1 + \cdots + h_u = n = w_1 + \cdots + w_v$.
We have
\begin{align*}
|S| \ 
&= \ |\Psi(S)| && \text{as $\Psi$~is injective by \cref{lem:Psi-injective}} \\
&= \ |Z_{u v}| \ \le \ 3^s && \text{by \cref{lem:grid-bound}}\,,
\end{align*}
where
$
s  =  \sum_{(i,j) \in \Gamma} h_i w_j  =  \sum_{i=1}^{u} h_i \sum_{j=1}^v w_j  =  n \cdot n  =  n^2\,.
$
\end{proof}

\section{Mortality threshold and minimum-rank diameter}\label{sec:mortality}
In this section we prove the following theorem.

\begin{theorem} \label{thm:min-rank-diameter}
Let $S \subseteq \Q^{n \times n}$ be a finite, not necessarily irreducible, semigroup, generated by $S_0 \subseteq S$.
If $S$ contains the zero matrix, then its mortality threshold is at most~$3^{n^2}$.
\end{theorem}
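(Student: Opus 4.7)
The plan is to proceed by induction on~$n$, reducing to the irreducible case already settled by \cref{thm:main-bound}. A preliminary observation is that for any finite semigroup~$S$ generated by~$S_0$ with $O_n \in S$, the mortality threshold is at most~$|S|$: any word over~$S_0$ of length exceeding~$|S|$ has, by pigeonhole, two prefixes evaluating to the same element of~$S$, and the segment between them can be deleted without altering the overall product. Consequently, when $S$~is irreducible \cref{thm:main-bound} immediately yields mortality at most $|S| \le 3^{n^2}$. The base case $n = 1$ falls under this heading because every semigroup in $\Q^{1 \times 1}$ is trivially irreducible.

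For the inductive step I assume $n \ge 2$ and that $S$~is reducible. Then there is a proper nonzero $S$-invariant $\Q$-subspace $\V \subset \Q^n$ of some dimension $m$ with $1 \le m \le n-1$. Choosing a rational basis of~$\Q^n$ whose first $m$ vectors span~$\V$, every $X \in S$ acquires the block-upper-triangular form
\[
 X \ = \ \begin{pmatrix} A_X & B_X \\ O & C_X \end{pmatrix}, \qquad A_X \in \Q^{m \times m}, \ C_X \in \Q^{(n-m) \times (n-m)}.
\]
The maps $X \mapsto A_X$ and $X \mapsto C_X$ are semigroup homomorphisms onto finite semigroups $S_A$ and~$S_C$, each generated by the corresponding projection of~$S_0$ and each containing the zero matrix of appropriate size (as the image of~$O_n$). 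By the inductive hypothesis, their mortality thresholds with respect to those projected generators are at most $3^{m^2}$ and $3^{(n-m)^2}$, respectively.

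I would then pick words $w_1, w_2$ over~$S_0$, of lengths at most $3^{m^2}$ and $3^{(n-m)^2}$, whose evaluations in $S_A$ and~$S_C$ are zero. For the concatenation $P \coloneqq w_1 w_2$, homomorphicity gives $A_P = A_{w_1} A_{w_2} = O$ and $C_P = C_{w_1} C_{w_2} = O$, so $P$~evaluates in~$S$ to a strictly block-upper-triangular matrix with only two diagonal blocks, whence $P^2 = O_n$. Therefore the word $w_1 w_2 w_1 w_2$ over~$S_0$ evaluates to~$O_n$ and has length at most
\[
 2\bigl(3^{m^2} + 3^{(n-m)^2}\bigr) \ \le \ 2 \cdot 3^{m^2 + (n-m)^2} \ \le \ 2 \cdot 3^{n^2 - 2} \ \le \ 3^{n^2},
\]
where the first inequality uses $3^{m^2}, 3^{(n-m)^2} \ge 3$ (so $a + b \le ab$), and $m^2 + (n-m)^2 \le n^2 - 2m(n-m) \le n^2 - 2$ for $1 \le m \le n-1$. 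This closes the induction. The main subtlety is the squaring step that handles the off-diagonal block~$B_P$ and thereby introduces the factor of~$2$ in the length bound; this factor is then absorbed by the slack between $3^{m^2} + 3^{(n-m)^2}$ and $3^{n^2}$, after which \cref{thm:main-bound} does all the heavy lifting.
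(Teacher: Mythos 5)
Your proof is correct and reaches the same bound, but via a different reduction than the paper uses. The paper chooses $\V$ to be a \emph{minimal} $S$-invariant subspace, which forces the top diagonal-block semigroup to be irreducible; it then applies \cref{thm:main-bound} directly to that block (plus the depth-$\le$-cardinality observation you also make), invoking the induction hypothesis only on the bottom block. You instead take an arbitrary proper nonzero invariant subspace and apply the induction hypothesis to both pieces, which is more symmetric and uses \cref{thm:main-bound} only when $S$ itself is irreducible. The paper in fact proves a strictly stronger statement (\cref{prop:min-rank-diameter}), bounding the depth of a minimum-rank element of~$S$, with mortality recovered as the special case where the minimum rank is zero; that generalization requires the rank estimate $\rk M(w_1 w_2) \le r_1 + r_2$, which your mortality-only argument sidesteps.

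One small inefficiency: the squaring step is superfluous. Writing $M(w) = \left(\begin{smallmatrix} A_w & B_w \\ 0 & C_w \end{smallmatrix}\right)$ with $A_{w_1} = O$ and $C_{w_2} = O$, the top-right block of $M(w_1 w_2)$ is
\[
 B_{w_1 w_2} \ = \ A_{w_1} B_{w_2} + B_{w_1} C_{w_2} \ = \ O \cdot B_{w_2} + B_{w_1} \cdot O \ = \ O\,,
\]
so $M(w_1 w_2) = O_n$ outright. (The order $w_1 w_2$ is what makes this work; $w_2 w_1$ would not, since neither $A_{w_2}$ nor $C_{w_1}$ need vanish.) Dropping the factor of~$2$ gives $|w_1| + |w_2| \le 3^{m^2} + 3^{(n-m)^2} \le 2 \cdot 3^{\max\{m^2,(n-m)^2\}} \le 3^{m^2+(n-m)^2} \le 3^{n^2}$, which is exactly the chain the paper uses. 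Your estimate still closes because the extra factor of~$2$ is absorbed by the slack you correctly noted, but it was not needed.
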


Roughly speaking, \Cref{thm:min-rank-diameter} is proved by decomposing~$S$ into irreducible ``parts'' and using \Cref{thm:main-bound}.

Below we prove a result that is slightly stronger than \cref{thm:min-rank-diameter}.
To state it, we define the \emph{minimum-rank diameter} of~$S$ as the \emph{minimum} depth among the minimum-rank (possibly zero) matrices $X \in S$.
First we note the following simple fact.
\begin{proposition} \label{prop:diameter-size-bound}
Let $S \subseteq \Q^{n \times n}$ be a finite, not necessarily irreducible, semigroup, generated by $S_0 \subseteq S$.
For any $X \in S$, its depth is at most~$|S|$.
Hence, the minimum-rank diameter is at most~$|S|$.
\end{proposition}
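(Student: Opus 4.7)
The plan is to prove the depth bound by a standard pigeonhole argument on prefixes of a shortest factorization into generators; the bound on the minimum-rank diameter then follows as an immediate corollary.

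For the depth bound, I would fix any $X \in S$ and consider a shortest factorization $X = s_1 s_2 \cdots s_k$ with each $s_i \in S_0$, so that $k$ equals the depth of~$X$. Define the prefix elements $P_i \coloneqq s_1 s_2 \cdots s_i \in S$ for $1 \le i \le k$. Suppose for contradiction that $k > |S|$. Then since the $k$ prefixes $P_1, \ldots, P_k$ all lie in~$S$, pigeonhole yields indices $1 \le i < j \le k$ with $P_i = P_j$. Substituting gives
\[
X \ = \ P_k \ = \ P_j \cdot s_{j+1} \cdots s_k \ = \ P_i \cdot s_{j+1} \cdots s_k \ = \ s_1 \cdots s_i \, s_{j+1} \cdots s_k,
\]
which is a factorization of~$X$ over~$S_0$ of length $k - (j - i) < k$, contradicting the minimality of~$k$. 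In the edge case $j = k$ the tail $s_{j+1} \cdots s_k$ is empty and one reads the equality simply as $X = P_i$, a factorization of length~$i < k$, still contradicting minimality. Hence $k \le |S|$.

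The second claim is then immediate from the definition: the minimum-rank diameter is the minimum of the depths of the minimum-rank matrices in~$S$, and by the first claim each such depth is at most~$|S|$, so the minimum is also at most~$|S|$.

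I do not anticipate a significant obstacle for this proposition, as it is a direct application of the pigeonhole principle; the only minor point to keep track of is that $S$ need not contain an identity, so empty products must be handled by the edge-case remark above rather than by introducing~$P_0$.
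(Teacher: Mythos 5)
Your proof is correct and takes essentially the same approach as the paper: both arguments apply the pigeonhole principle to the prefix products of a shortest factorization of~$X$ over~$S_0$, obtaining a strictly shorter factorization whenever two prefixes coincide. Your explicit handling of the edge case $j=k$ is a small improvement in rigor, since the paper's displayed equation implicitly assumes the suffix $a_{j+1}\cdots a_\ell$ is nonempty.
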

\begin{proof}
Let $\ell$~be the depth of~$X$, so that $X = M(a_1 \cdots a_\ell)$ for some $a_1, \ldots, a_\ell \in \Sigma$.
Then the ``prefix products'' $M(a_1 \cdots a_k)$ with $1 \le k \le \ell$ are all distinct: indeed, if $M(a_1 \cdots a_i) = M(a_1 \cdots a_j)$ for some $i < j$, then
\[
 M(a_1 \cdots a_i a_{j+1} \cdots a_\ell) \ = \ M(a_1 \cdots a_i) M(a_{j+1} \cdots a_\ell) \ = \ M(a_1 \cdots a_j) M(a_{j+1} \cdots a_\ell) \ = \ X\,,
\]
contradicting that $\ell$~is the depth of~$X$.
Thus, $\ell \le |S|$.
\end{proof}

If the zero matrix is in the semigroup, then the zero matrix is the unique minimum-rank matrix and, thus, the mortality threshold equals the minimum-rank diameter.
Therefore, the following proposition implies \cref{thm:min-rank-diameter}.

\begin{proposition} \label{prop:min-rank-diameter}
Let $S \subseteq \Q^{n \times n}$ be a finite, not necessarily irreducible, semigroup, generated by $S_0 \subseteq S$.
Then its minimum-rank diameter is at most~$3^{n^2}$.
\end{proposition}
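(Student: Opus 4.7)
Here is my plan. The proof proceeds by induction on $n$. For the base case $n = 1$, $S$ is a finite subsemigroup of $(\Q, \cdot)$, hence contained in $\{-1, 0, 1\}$; thus $|S| \le 3 = 3^{1^2}$ and \cref{prop:diameter-size-bound} gives the desired bound.

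For the inductive step, suppose first that $S$ is irreducible. Then \cref{thm:main-bound} yields $|S| \le 3^{n^2}$, and \cref{prop:diameter-size-bound} guarantees that every element of $S$---in particular an idempotent of minimum rank in the minimum ideal of $S$ (which exists by the Rees--Suschkewitsch theorem applied to the completely \mbox{(0-)}simple minimum ideal)---has depth at most $|S| \le 3^{n^2}$, as required.

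If $S$ is reducible, I pick a proper nontrivial $S$-invariant subspace of dimension $k \in \{1, \ldots, n-1\}$, adopt a basis adapted to it, and obtain for every $X \in S$ the block upper triangular form with diagonal blocks $A(X) \in \Q^{k \times k}$ and $C(X) \in \Q^{(n-k) \times (n-k)}$. Since $A$ and $C$ are semigroup homomorphisms, the inductive hypothesis applied to $A(S)$ and $C(S)$ produces short products $\pi_A, \pi_C \in S_0^*$ of lengths at most $3^{k^2}$ and $3^{(n-k)^2}$ whose projections realize the respective minimum ranks $r_A, r_C$. A key conceptual ingredient is the trace identity for idempotents: an idempotent $E \in S$ satisfies $\rk E = \operatorname{tr}(E) = \operatorname{tr}(A(E)) + \operatorname{tr}(C(E)) = \rk A(E) + \rk C(E)$, so the minimum ideal of the ideal $I \coloneqq \{X \in S : \rk A(X) = r_A,\ \rk C(X) = r_C\}$ contains an idempotent of rank exactly $r_A + r_C$; combined with the obvious lower bound $\rk X \ge r_A + r_C$ on $X \in I$, this shows $r_{\min}(S) = r_A + r_C$.

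To exhibit a short product of minimum rank, I will form $\sigma = \pi_A \pi_C$ and take a suitable power $\sigma^{2t}$. Choosing $t$ with the help of \cref{lem:bound-on-group-size} applied to the maximal subgroups of the minimum ideals of $A(S)$ and $C(S)$, I arrange that $A(\sigma)^t$ and $C(\sigma)^t$ are both idempotent; finiteness of $S$ then forces the off-diagonal compatibility condition $A(\sigma)^t \cdot B(\sigma^t) \cdot C(\sigma)^t = 0$ (as in the $2$-block recurrence $B_k = E_A B + B E_C + (2^{k-1}-2) E_A B E_C$ derived from $\pi^k$ cycling in a finite semigroup), whence $\sigma^{2t}$ is itself an idempotent of rank $r_A + r_C = r_{\min}(S)$. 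The hardest part of the plan will be calibrating these bounds so that $2t \cdot |\sigma|$ stays within $3^{n^2}$: this requires balancing the exponents by choosing $k$ close to $n/2$ and leveraging the sharper estimate $3^{r_A^2 + r_C^2}$ (via the intrinsic minimum ranks) rather than the coarser dimensional bound $3^{k^2 + (n-k)^2}$, which may also entail strengthening the induction hypothesis to track the depth of a minimum-rank \emph{idempotent} rather than an arbitrary minimum-rank element.
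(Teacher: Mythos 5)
Your base case and irreducible case are fine, and picking a proper invariant subspace to get the block-triangular form is the right first move for the reducible case. The paper differs in one inessential detail: it chooses $\V_1$ to be a \emph{minimal} invariant subspace so that the top block generates an irreducible semigroup, and applies \cref{thm:main-bound} to that block and the induction hypothesis only to the bottom block; you apply the induction hypothesis to both, which is also valid.

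The genuine gap is in how you produce a short \emph{minimum-rank} element in~$S$. You aim to exhibit an \emph{idempotent} of rank $r_A+r_C$ by first proving $r_{\min}(S)=r_A+r_C$ via a trace argument, and then passing to a power $\sigma^{2t}$ of $\sigma=\pi_A\pi_C$, calibrating~$t$ via group bounds so that $2t\,|\sigma|\le 3^{n^2}$. This detour is both unnecessary and likely unsalvageable as stated: the length $|\sigma|$ alone is already up to $3^{k^2}+3^{(n-k)^2}$, so $t$ would have to be controlled quite tightly, and the off-diagonal compatibility you invoke (the recurrence for $B_k$) is not established. You even flag the calibration and a possible strengthening of the induction hypothesis as open.

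None of that machinery is needed. Set $\sigma = \pi_A\pi_C$ (no power). Writing the block factors out, one has
\[
M(\pi_A\pi_C)
\;=\;
\begin{pmatrix} A(\pi_A) \\ 0 \end{pmatrix}
\begin{pmatrix} A(\pi_C) & B(\pi_C) \end{pmatrix}
\;+\;
\begin{pmatrix} B(\pi_A) \\ C(\pi_A) \end{pmatrix}
\begin{pmatrix} 0 & C(\pi_C) \end{pmatrix},
\]
and the two summands have ranks $\le r_A$ and $\le r_C$ respectively, so $\rk M(\pi_A\pi_C)\le r_A+r_C$. Combined with the easy inequality $\rk M(w)\ge \rk A(w)+\rk C(w)\ge r_A+r_C$ valid for all $w$, this shows $M(\pi_A\pi_C)$ already achieves the minimum rank of~$S$, with depth at most $3^{k^2}+3^{(n-k)^2}\le 3^{n^2}$. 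There is no need to prove the product is idempotent, no need for the trace identity, and no power~$t$ to calibrate. This direct rank decomposition is the step your plan is missing.
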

\begin{proof}
We prove the proposition by (strong) induction on~$n \ge 1$.
Let $n \ge 1$ and suppose the proposition holds for all $1 \le m < n$.
Let the finite semigroup $S \subseteq \Q^{n \times n}$ be generated by~$S_0 \subseteq S$.
Let $M : \Sigma^+ \to S$ be a semigroup homomorphism with $M(\Sigma) = S_0$ and $M(\Sigma^+) = S$.

Suppose $S$~is irreducible.
By \cref{thm:main-bound} we have $|S| \le 3^{n^2}$.
Hence, by \cref{prop:diameter-size-bound} the minimum-rank diameter is at most~$3^{n^2}$.

So we can assume that $S$~is not irreducible.
Choose~$\V_1$ to be a minimal nonzero $S$-invariant subspace of~$\Q^n$; then $\{\vec{0}\} \ne \V_1 \ne \Q^n$.
Let $\V_2 \subseteq \Q^n$ be a complement of~$\V_1$ so that $\Q^n = \V_1 \oplus \V_2$.
Let $n_i \coloneqq \dim \V_i$ for $i=1,2$.
We have $n = n_1 + n_2$ with $n_1,n_2 \ge 1$.
Since $\V_1$~is $S$-invariant, in a basis adapted to $\Q^n = \V_1 \oplus \V_2$ we have
\[
M(w) \ = \ \begin{pmatrix} M_1(w) & N(w) \\ 0 & M_2(w) \end{pmatrix} \qquad \text{for all $w \in \Sigma^+$,}
\]
where $M_i(w) \in \Q^{n_i \times n_i}$ for $i = 1,2$, and $N(w) \in \Q^{n_1 \times n_2}$.

Define $S_i \coloneqq M_i(\Sigma^+)$ for $i=1,2$.
Then for $i=1,2$ the set $S_i \subseteq \Q^{n_i \times n_i}$ is a finite semigroup generated by~$M_i(\Sigma)$.
The semigroup $S_1$ is irreducible; indeed, identifying $\Q^{n_1}$ with~$\V_1$ via the chosen basis, any proper nonzero $S_1$-invariant subspace of~$\Q^{n_1}$ would correspond to a proper nonzero $S$-invariant subspace of~$\V_1$, contradicting the minimality of~$\V_1$.

It follows from the block-triangular shape above that $\rk M(w) \ge \rk M_1(w) + \rk M_2(w)$ for all $w \in \Sigma^+$.
In particular, defining $r \coloneqq \min\{\rk X \mid X \in S\}$ and $r_i \coloneqq \min\{\rk X \mid X \in S_i\}$ for $i=1,2$, we have $r \ge r_1 + r_2$.
For $i=1,2$ let $w_i \in \Sigma^+$ be shortest words with $\rk M_i(w_i) = r_i$.

Since $S_1$~is irreducible, by \cref{thm:main-bound} we have $|S_1| \le 3^{n_1^2}$, hence by \cref{prop:diameter-size-bound}, $|w_1| \le 3^{n_1^2}$.
Applying the induction hypothesis to~$S_2$ we obtain $|w_2| \le 3^{n_2^2}$.
Further,
\begin{align*}
M(w_1 w_2) \ &= \ M(w_1) M(w_2) \ = \ 
\begin{pmatrix}
    M_1(w_1) & N(w_1) \\ 0 & M_2(w_1)
\end{pmatrix}
\begin{pmatrix}
    M_1(w_2) & N(w_2) \\ 0 & M_2(w_2)
\end{pmatrix} \\
&= \ 
\underbrace{\begin{pmatrix}
    M_1(w_1) \\ 0 
\end{pmatrix}}_{\text{rank } r_1}
\begin{pmatrix}
    M_1(w_2) & N(w_2)
\end{pmatrix}
+
\begin{pmatrix}
    N(w_1) \\ M_2(w_1)
\end{pmatrix}
\underbrace{
\begin{pmatrix}
   0 & M_2(w_2)
\end{pmatrix}
}_{\text{rank } r_2}\,.
\end{align*}
Thus, using $\rk(X Y) \le \min \{\rk X, \rk Y\}$, the first summand has rank $\le r_1$, and the second summand has rank $\le r_2$.
Using $\rk(X + Y) \le \rk X + \rk Y$, we conclude that $\rk M(w_1 w_2) \le r_1 + r_2$.
Since $r_1 + r_2 \le r$, the matrix $M(w_1 w_2)$ is a minimum-rank matrix in~$S$.
Hence, the minimum-rank diameter is at most
\[
  |w_1 w_2| \ = \ |w_1| + |w_2| \ \le \ 3^{n_1^2} + 3^{n_2^2} \ \le \ 2 \cdot 3^{\max\{n_1^2, n_2^2\}} \ \le \ 3^{n_1^2 + n_2^2} \ \le \ 3^{(n_1+n_2)^2} \ = \ 3^{n^2}\,. \qedhere
\]
\end{proof}

\section{Lower bound} \label{sec:lower}
Fix $n \ge 2$ and write $n = p+q$ with
\[
  p \ \coloneqq \ \left\lfloor \tfrac{n}{2} \right\rfloor, \qquad 
  q \ \coloneqq \ \big\lceil \tfrac{n}{2} \big\rceil, \qquad
  P \ \coloneqq \ \{1,\ldots,p\}, \qquad 
  Q \ \coloneqq \ \{p+1,\ldots,n\}\,.
\]
We view $P$ as the ``north-west'' index set and $Q$ as the ``south-east'' one; note $P \cap Q = \emptyset$. 
For $X \in \Z^{n\times n}$ we write
\[
  \supp X \ \coloneqq \ \{(i,j)\in\{1,\dots,n\}^2 \mid X_{ij}\neq 0\}
\]
for the support of~$X$.
We define four families of matrices with entries in  $\{-1, 0, 1\}$:
\begin{align*}
\NE   \ &\coloneqq\ \{X \in \{-1,0,1\}^{n\times n} \mid \supp X \subseteq P\times Q\}\,,\\
\COL  \ &\coloneqq\ \{X \in \{-1,0,1\}^{n\times n} \mid \exists\, b \in P:\ \supp X \subseteq P \times \{b\}\}\,,\\
\ROW  \ &\coloneqq\ \{X \in \{-1,0,1\}^{n\times n} \mid \exists\, a \in Q:\ \supp X \subseteq \{a\} \times Q\}\,,\\
\UNIT \ &\coloneqq\ \{X \in \{-1,0,1\}^{n\times n} \mid |\supp X| \le 1\}\,.
\end{align*}
In words, $\NE$~consists of the north-east-supported matrices; $\COL$~consists of the north-west-supported matrices with at most one nonzero column; $\ROW$ consists of the south-east-supported matrices with at most one nonzero row; and $\UNIT$~consists of the signed matrix units and~$0$.
Each family includes the zero matrix.
Define
$
  S\ \coloneqq \ \NE \;\cup\; \COL \;\cup\; \ROW \;\cup\; \UNIT\,.
$
The following proposition complements \cref{thm:main-bound}.
\begin{proposition} \label{prop:lower}
The set $S \subseteq \{-1,0,1\}^{n \times n}$ is a finite irreducible integer matrix semigroup with at least $3^{\lfloor n^2/4 \rfloor}$~elements.
\end{proposition}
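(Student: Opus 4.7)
I would verify size, irreducibility, and closure under multiplication in that order, with closure being the main technical work.

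The cardinality bound is immediate: the elements of $\NE$ are in bijection with functions $P \times Q \to \{-1,0,1\}$, so
\[
 |S| \;\ge\; |\NE| \;=\; 3^{|P|\,|Q|} \;=\; 3^{\lfloor n/2\rfloor\,\lceil n/2\rceil} \;=\; 3^{\lfloor n^2/4\rfloor}.
\]

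For irreducibility I would exploit that $\UNIT$ contains every signed standard matrix unit $\pm \vec{e}_i \vec{e}_j^{\top}$. If $\V \subseteq \Q^n$ is any nonzero $S$-invariant subspace and $\vec{v} \in \V$ has $v_j \ne 0$, then $\vec{e}_i \vec{e}_j^{\top} \vec{v} = v_j \vec{e}_i \in \V$ for every $i$, forcing $\V = \Q^n$.

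Closure is the core of the argument. The unifying observation I would build on is that every $X \in S$ has support contained in a rectangle $A \times B \subseteq \{1,\dots,n\}^2$, where $(A,B)$ depends only on the family of $X$: $(P,Q)$ for $\NE$; $(P,\{b\})$ with $b \in P$ for $\COL$; $(\{a\},Q)$ with $a \in Q$ for $\ROW$; and $(\{i\},\{j\})$ for $\UNIT$. For $X, Y \in S$ with support rectangles $A \times B$ and $A' \times B'$, we have $\supp(XY) \subseteq A \times B'$ and
\[
 (XY)_{ij} \;=\; \sum_{k \in B \cap A'} X_{ik}\, Y_{kj}.
\]
I would then check that for each of the sixteen ordered pairs of families the intersection $B \cap A'$ is either empty (and then $XY = O_n$) or a singleton. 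Indeed, $B$ is $P$, $Q$, or a singleton, and so is $A'$; since $P \cap Q = \emptyset$ and no pair of our families has $B$ and $A'$ both equal to $P$ or both equal to $Q$, the intersection is always a singleton whenever it is nonempty. Consequently every nonzero entry of $XY$ is a single product of two $\{-1,0,1\}$-values, so $XY \in \{-1,0,1\}^{n \times n}$.

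It then remains a short bookkeeping exercise to check that the surviving support $A \times B'$ (when nonempty) always fits into one of $\NE, \COL, \ROW, \UNIT$. The vanishing cases $\NE \cdot \NE$, $\NE \cdot \COL$, $\ROW \cdot \NE$, $\ROW \cdot \COL$, $\COL \cdot \ROW$ follow directly from $P \cap Q = \emptyset$. In each remaining case $A \times B'$ has one of four shapes: $P \times Q$ (in $\NE$); $P \times \{j\}$ (in $\COL$ if $j \in P$, in $\NE$ if $j \in Q$); $\{i\} \times Q$ (in $\ROW$ if $i \in Q$, in $\NE$ if $i \in P$); or $\{i\} \times \{j\}$ (in $\UNIT$). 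The main obstacle is purely organizational---laying out the sixteen-case tabulation so that exhaustiveness is clear---since no individual case is harder than evaluating a single product of two $\pm 1$ entries.
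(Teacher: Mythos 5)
Your proposal is correct and takes essentially the same approach as the paper: the paper also verifies closure via a sixteen-case support analysis (presented as a $4\times 4$ multiplication table), proves irreducibility by applying signed matrix units from $\UNIT$ to an arbitrary nonzero $S$-invariant subspace, and gets the cardinality from $|\NE|$. Your ``support rectangle'' observation — that $\supp(XY) \subseteq A \times B'$ with $(XY)_{ij} = \sum_{k \in B \cap A'} X_{ik} Y_{kj}$, and that $B\cap A'$ is always at most a singleton for matrices from these families — is the same mechanism the paper invokes (``the support constraints force the summation to have at most one nonzero term''), just packaged as a single unifying lemma rather than verified entry-by-entry in the table.
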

\begin{proof}
Clearly, $S$~is finite.
To argue that $S$~is closed under multiplication, consider the following multiplication table.
\[
\begin{array}{c|cccc}
\mathord{\cdot}                   & \NE           & \COL  & \ROW          & \UNIT         \\ \hline
\NE                               & \{O_n\}         & \{O_n\} & \NE           & \NE \cup \COL \\
\COL                              & \NE           & \COL  & \{O_n\}         & \COL \cup \NE \\
\ROW                              & \{O_n\}         & \{O_n\} & \ROW          & \UNIT \\
\UNIT                             & \NE \cup \ROW & \UNIT & \NE \cup \ROW & \UNIT
\end{array}
\]
For example, the entry in row~$\NE$ and column~$\ROW$ is~$\NE$, to indicate that $\NE \cdot \ROW \subseteq \NE$.
To show this, let $X \in \NE$ and $Y \in \ROW$.
Since $Y \in \ROW$, there is $a \in Q$ with $\supp Y \subseteq \{a\} \times Q$.
Moreover, $\supp X \subseteq P \times Q$.
It follows that $\supp (X Y) \subseteq P \times Q$; i.e., for $(i,j) \not\in P \times Q$ we have $(X Y)_{i j} = 0$.
For any $(i,j) \in P \times Q$,
\[
 (X Y)_{i j} \ = \ \sum_{k=1}^n X_{i k} Y_{k j} \ = \ X_{i a} Y_{a j} \ \in \ \{-1,0,1\}\,.
\]
It follows that $X Y \in \NE$.
The rest of the multiplication table above is shown similarly.
In particular, in every product the support constraints force the summation $(X Y)_{i j} = \sum_{k} X_{i k} Y_{k j}$ to have at most one nonzero term; so the entries remain in~$\{-1,0,+1\}$.

For irreducibility, let $\{\vec{0}\} \ne \V \subseteq \Q^n$ be $S$-invariant.
Since $\V \ne \{\vec{0}\}$ and $\V$~is closed under scalar multiplication, there is $\vec{v} \in \V$ with ${\vec v}_j = 1$ for some $1 \le j \le n$.
Let $1 \le i \le n$.
It suffices to show that $\vec{e}_i \in \V$, where $\vec{e}_i \in \{0,1\}^n$ denotes the $i$th coordinate vector.
To that end, let $E_{i j} \in \UNIT \subseteq S$ be the matrix whose only nonzero entry is a~$1$ at position $(i,j)$.
Then $\vec{e}_i = E_{i j} \vec{v} \in \V$, as $\V$~is $S$-invariant.

Finally,
$
|S| \ge |\NE| = \left|\{-1,0,1\}^{P \times Q}\right| = 3^{|P| |Q|} = 3^{\lfloor n^2/4\rfloor}
$.
\end{proof}


The following proposition complements \cref{thm:aper-bound}.
\begin{proposition} \label{prop:lower-nonnegative}
The set $S \cap \{0,1\}^{n \times n}$ is an aperiodic irreducible semigroup of matrices with entries in $\{0, 1\}$. It has at least $2^{\lfloor n^2/4 \rfloor}$~elements.
\end{proposition}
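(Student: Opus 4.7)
The plan is to piggy-back on \cref{prop:lower}: all four subparts are quick once one observes that passing from $\{-1,0,1\}$ to $\{0,1\}$ does not destroy any of the structure exploited there.

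First, for closure, I would reuse the multiplication table from the proof of \cref{prop:lower}: the critical fact established there is that in every product $XY$ with $X,Y \in S$, each entry $(XY)_{ij} = \sum_k X_{ik} Y_{kj}$ has at most one nonzero summand (because of the support constraints in $\NE,\COL,\ROW,\UNIT$). Hence if $X,Y$ have entries in $\{0,1\}$, so does $XY$, and $S \cap \{0,1\}^{n \times n}$ is a subsemigroup of~$S$. For irreducibility, I would simply rerun the argument of \cref{prop:lower} verbatim: given a nonzero $S$-invariant $\V \subseteq \Q^n$, pick $\vec{v} \in \V$ with $\vec{v}_j = 1$ and use the matrix unit $E_{ij} \in \UNIT \cap \{0,1\}^{n\times n}$ to obtain $\vec{e}_i = E_{ij} \vec{v} \in \V$ for every $i$.

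For aperiodicity I would verify case-by-case that every element $X$ of $S \cap \{0,1\}^{n \times n}$ satisfies $X^3 = X^2$. If $X \in \NE$, then since $\supp X \subseteq P \times Q$ and $P \cap Q = \emptyset$, the sum defining $(X^2)_{ij}$ is empty, so $X^2 = O_n$. If $X \in \COL$ with $\supp X \subseteq P \times \{b\}$, then $(X^2)_{ij}$ is nonzero only when $j = b$ and $X_{bb} = 1$, in which case $X^2 = X$; otherwise $X^2 = O_n$. The case $X \in \ROW$ is symmetric, and $X \in \UNIT$ gives $X = E_{ij}$, so $X^2 \in \{X, O_n\}$ depending on whether $i=j$. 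In every case, $X^3 = X^2$, which for a finite semigroup is equivalent to aperiodicity.

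Finally, the lower bound on the cardinality is immediate: the family $\NE \cap \{0,1\}^{n\times n}$ consists of all matrices with arbitrary $\{0,1\}$ entries on $P \times Q$ and zeros elsewhere, giving
\[
 |S \cap \{0,1\}^{n \times n}| \ \ge\ |\NE \cap \{0,1\}^{n \times n}| \ =\ 2^{|P| \cdot |Q|} \ =\ 2^{\lfloor n/2 \rfloor \lceil n/2 \rceil} \ =\ 2^{\lfloor n^2/4 \rfloor}\,.
\]
None of the steps is really an obstacle; the only mildly delicate part is the aperiodicity check, and even there the case analysis is short because $P \cap Q = \emptyset$ collapses the $\NE$ case and the $\COL/\ROW$ cases degenerate to an at-most-one-diagonal-entry condition.
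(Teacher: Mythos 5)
Your proof is correct. The closure, irreducibility, and counting steps mirror the paper's (the paper's closure argument is even a touch slicker: $X, Y \ge 0$ forces $XY \ge 0$, and $XY \in S \subseteq \{-1,0,1\}^{n\times n}$ then gives $XY \in \{0,1\}^{n\times n}$ without invoking the ``at most one nonzero summand'' observation — though yours is equally valid). The aperiodicity argument, however, is a genuinely different route. You verify the pointwise identity $X^3 = X^2$ for every $X \in S \cap \{0,1\}^{n\times n}$ by a four-way case split on which family $X$ lies in, and then invoke the standard fact that a finite semigroup in which every element $x$ satisfies $x^{k+1}=x^k$ for some $k$ has only trivial subgroups. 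The paper instead works directly with the definition: it takes a subgroup $K$ with identity $E$, notes $E$ is idempotent, rules out $E \in \NE$, and for $E \in \COL \cup \UNIT$ (merged into one case) writes $E = \vec{e}\vec{b}^\top$ and derives $X = EXE = (\vec{b}^\top X\vec{e})\,E$, so every $X \in K$ is a nonnegative-integer multiple of $E$, forcing $X = E$ because entries are in $\{0,1\}$. The paper's approach only needs to classify the idempotent $E$, not every element, and handles rank-$\le 1$ matrices in a unified way via the outer-product factorization; your approach is more elementary (no idempotent/subgroup machinery, just powers) but requires checking all four families, and its correctness rests on the cited ``eventually idempotent $\Leftrightarrow$ aperiodic'' characterization, which you state a bit loosely (it is ``$\forall x\,\exists k: x^{k+1}=x^k$'' that is equivalent to aperiodicity; your uniform $k=2$ is of course sufficient). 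Both arguments are sound.
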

\begin{proof}
Define $S_{\ge 0} \coloneqq S \cap \{0,1\}^{n \times n}$.
Since $S$~is closed, it follows that $S_{\ge 0}$~is closed; i.e., $S_{\ge 0}$ is a semigroup.
The element count and the irreducibility argument from \cref{prop:lower} carry over to~$S_{\ge 0}$ analogously.
It remains to show that $S_{\ge 0}$ is aperiodic.

Let $K$~be a subgroup of~$S_{\ge 0}$ with identity~$E$.
Then $E E = E$; i.e., $E$~is idempotent.
We need to show that $K = \{E\}$.
If $O_n \in K$ then $K = \{O_n\} = \{E\}$.
So we assume that $O_n \not\in K$; in particular, $E \ne O_n$.

If $E \in \NE$, then $E = E E = O_n$, contradicting our assumption.

Suppose $E \in \COL \cup \UNIT$.
Then $E$~is supported on some column $b \in \{1, \ldots, n\}$; i.e., writing $\vec{b} \in \{0,1\}^n$ for the $b$th coordinate vector, we have $E = \vec{e} \vec{b}^\top$ for some $\vec{e} \in \{0,1\}^n$.
Let $X \in K$.
Since $E$~is the identity in~$K$,  we have
\[
 X \ = \ E X \ = \ E X E \ = \ (\vec{e} \vec{b}^\top) X (\vec{e} \vec{b}^\top) \ = \ \vec{e} (\vec{b}^\top X \vec{e}) \vec{b}^\top \ = \ (\vec{b}^\top X \vec{e}) \vec{e} \vec{b}^\top \ = \ (\vec{b}^\top X \vec{e}) E\,;
\]
i.e., $X \in \{0,1\}^{n \times n} \setminus \{O_n\}$ is a nonnegative integer multiple of~$E$.
It follows that $X = E$.
Since $X \in K$ was arbitrary, we conclude that $K = \{E\}$.

If $E \in \ROW$, the argument is similar.
\end{proof}

\subsection{Zero-free irreducible semigroups}\label{sub:lower-zero-free}
The examples from the previous lower-bound constructions all contain the zero matrix. In fact, most matrices in these semigroups are nilpotent, so the presence of the zero matrix might seem essential. We now show that one can still obtain zero-free irreducible semigroups of size $2^{\Omega(n^2)}$.

The argument has two ingredients. We first construct a large zero-free strongly connected semigroup of $0$-$1$ matrices by a block-based construction. We then pass to signed matrices and use strong connectivity to obtain irreducibility.

For a semigroup $S \subseteq \{0,1\}^{n \times n}$, let us write $\Gamma(S)$ for the directed graph on $\{1,\ldots,n\}$ with an edge $p \to q$ if there exists $X \in S$ with $X_{q p} = 1$. We call $S$ \emph{strongly connected} if $\Gamma(S)$ is strongly connected. Since $S$ is a semigroup of $0$-$1$ matrices, this is equivalent to requiring that for every $p,q \in \{1,\ldots,n\}$ there exists $X \in S$ with $X_{q p} = 1$.

We now describe the block construction informally. Fix a partition of $\{1,\ldots,n\}$ into blocks $B_1,\ldots,B_m$. A matrix in our family is obtained by considering each row block $B_i$ separately and choosing a single column block $B_{\sigma(i)}$ from which this row block receives its nonzero entries. Once $B_{\sigma(i)}$ has been chosen, every column in $B_{\sigma(i)}$ carries exactly one nonzero entry inside the rows indexed by~$B_i$, whereas columns outside $B_{\sigma(i)}$ contribute nothing to that row block. The row inside~$B_i$ where this nonzero entry is placed may depend on the column, so each row block is controlled by a function from its chosen column block into~$B_i$. Under multiplication, the active column block for~$B_i$ in the left factor is followed through the right factor, and the corresponding functions compose. This keeps the family closed under multiplication, while the freedom in choosing the blockwise maps yields exponentially many matrices and still allows one to connect any position to any other.

\begin{proposition}\label{prop:lower-zero-free-strong}
Let $\{1,\ldots,n\} = B_1 \sqcup \cdots \sqcup B_m$ be a partition into nonempty blocks, and write
\[
  b_i \ \coloneqq \ |B_i| \qquad (1 \le i \le m)
\]
for the block sizes. For every function $\sigma : \{1,\ldots,m\} \to \{1,\ldots,m\}$ and every family of functions
\[
  f_i : B_{\sigma(i)} \to B_i \qquad (1 \le i \le m)
\]
we define a matrix $A(\sigma,f) \in \{0,1\}^{n \times n}$ by
\[
  A(\sigma,f)_{u v} = 1
  \iff
  \bigl(u \in B_i,\ v \in B_{\sigma(i)},\ u = f_i(v) \text{ for some } i \in \{1,\ldots,m\}\bigr).
\]
Let $\U(B_1,\ldots,B_m)$ denote the set of all matrices $A(\sigma,f)$.
Then $\U(B_1,\ldots,B_m)$ is a finite zero-free strongly connected semigroup, and
\[
  |\U(B_1,\ldots,B_m)|
  =
  \prod_{i=1}^m \left( \sum_{t=1}^m b_i^{b_t} \right).
\]
\end{proposition}
\begin{proof}
Fix
\[
  A = A(\sigma,f), \qquad B = A(\tau,g)
\]
from $\U(B_1,\ldots,B_m)$. For $u \in B_i$ and $z \in B_t$ we have
\[
  (A B)_{u z}
  =
  \sum_{y=1}^n A_{u y} B_{y z}
  =
  \sum_{j : \tau(j)=t} A_{u, g_j(z)}\,.
\]
Since $g_j(z) \in B_j$, the definition of $A$ shows that at most one summand can be nonzero, namely the one with $j = \sigma(i)$. Hence
\[
  (A B)_{u z} = 1
  \iff
  \bigl(\tau(\sigma(i)) = t \text{ and } u = f_i(g_{\sigma(i)}(z))\bigr).
\]
Therefore $A B = A(\tau \circ \sigma,h)$, where
\[
  h_i \ \coloneqq \ f_i \circ g_{\sigma(i)} : B_{(\tau \circ \sigma)(i)} \to B_i
\]
for all $i \in \{1,\ldots,m\}$. So $\U(B_1,\ldots,B_m)$ is closed under multiplication.

Every matrix in $\U(B_1,\ldots,B_m)$ is nonzero: for any $i \in \{1,\ldots,m\}$ and any $v \in B_{\sigma(i)}$ we have
\[
  A(\sigma,f)_{f_i(v),v} = 1\,.
\]
So the semigroup is zero-free.

To show strong connectivity, fix $p,q \in \{1,\ldots,n\}$. Let $p \in B_t$ and $q \in B_i$. Choose $\sigma$ with $\sigma(i)=t$, and choose $f_i : B_t \to B_i$ with $f_i(p)=q$. Choosing the remaining values of $\sigma$ and the remaining maps $f_j$ arbitrarily yields a matrix $A(\sigma,f)$ with $A(\sigma,f)_{q p}=1$.

It remains to count the matrices. Fix $i \in \{1,\ldots,m\}$. One may first choose $\sigma(i)=t \in \{1,\ldots,m\}$ and then choose an arbitrary function $B_t \to B_i$, of which there are $b_i^{b_t}$. These choices are independent over~$i$, so there are at most
\[
  \prod_{i=1}^m \left( \sum_{t=1}^m b_i^{b_t} \right)
\]
possible matrices.
On the other hand, the matrix $A(\sigma,f)$ determines $\sigma(i)$ and $f_i$ for every~$i$: indeed, the set of columns with a nonzero entry in the rows indexed by~$B_i$ is precisely $B_{\sigma(i)}$, and once $\sigma(i)$ is known, each column $v \in B_{\sigma(i)}$ has exactly one $1$ in the block~$B_i$, namely in row~$f_i(v)$. So the displayed upper bound is attained.
\end{proof}

The next proposition shows that the same block construction remains closed after one freely chooses signs on all nonzero entries.

For $X \in \{0,1\}^{n \times n}$, let
\[
  \Sigma(X)
  \ \coloneqq \ 
  \{Y \in \{-1,0,1\}^{n \times n} \mid \supp Y = \supp X\}
\]
denote the set of all signings of~$X$.
For a semigroup $S \subseteq \{0,1\}^{n \times n}$, we call the semigroup generated by $\bigcup_{X \in S} \Sigma(X)$ the \emph{signed version} of~$S$.

\begin{proposition}\label{prop:lower-zero-free-signed-family}
Let $\{1,\ldots,n\} = B_1 \sqcup \cdots \sqcup B_m$ be as in \cref{prop:lower-zero-free-strong}. For every function $\sigma$ and every family of functions
\[
  f_i : B_{\sigma(i)} \to B_i,
  \qquad
  \varepsilon_i : B_{\sigma(i)} \to \{-1,1\}
  \qquad (1 \le i \le m)
\]
we define a matrix $A(\sigma,f,\varepsilon) \in \{-1,0,1\}^{n \times n}$ by
\[
  A(\sigma,f,\varepsilon)_{u v}
  =
  \begin{cases}
    \varepsilon_i(v) &\text{if } u \in B_i,\ v \in B_{\sigma(i)},\ u = f_i(v) \text{ for some } i,\\
    0 &\text{otherwise.}
  \end{cases}
\]
Let $\widetilde{\U}(B_1,\ldots,B_m)$ denote the set of all matrices $A(\sigma,f,\varepsilon)$. Then $\widetilde{\U}(B_1,\ldots,B_m)$ is a finite zero-free semigroup, it is the signed version of $\U(B_1,\ldots,B_m)$, and
\[
  |\widetilde{\U}(B_1,\ldots,B_m)|
  =
  \prod_{i=1}^m \left( \sum_{t=1}^m (2 b_i)^{b_t} \right).
\]
\end{proposition}
\begin{proof}
Let
\[
  A = A(\sigma,f,\varepsilon), \qquad B = A(\tau,g,\delta)
\]
from $\widetilde{\U}(B_1,\ldots,B_m)$. For $u \in B_i$ and $z \in B_t$ we have
\[
  (A B)_{u z}
  =
  \sum_{y=1}^n A_{u y} B_{y z}
  =
  \sum_{j : \tau(j)=t} A_{u, g_j(z)} \cdot \delta_j(z)\,.
\]
Again only the summand with $j = \sigma(i)$ can be nonzero. Hence
\[
  (A B)_{u z}
  =
  \begin{cases}
    \varepsilon_i(g_{\sigma(i)}(z)) \, \delta_{\sigma(i)}(z)
    &\text{if } \tau(\sigma(i)) = t \text{ and } u = f_i(g_{\sigma(i)}(z)),\\
    0 &\text{otherwise.}
  \end{cases}
\]
Therefore $A B = A(\tau \circ \sigma,h,\eta)$, where
\[
  h_i \ \coloneqq \ f_i \circ g_{\sigma(i)}
  \qquad\text{and}\qquad
  \eta_i(z) \ \coloneqq \ \varepsilon_i(g_{\sigma(i)}(z))\, \delta_{\sigma(i)}(z)
\]
for all $i$ and all $z \in B_{(\tau \circ \sigma)(i)}$. So $\widetilde{\U}(B_1,\ldots,B_m)$ is a semigroup.

As in the proof of \cref{prop:lower-zero-free-strong}, every matrix $A(\sigma,f,\varepsilon)$ is nonzero. Indeed, for every $v \in B_{\sigma(i)}$ we have
\[
  A(\sigma,f,\varepsilon)_{f_i(v),v} = \varepsilon_i(v) \in \{-1,1\}.
\]

Every signing of every matrix in $\U(B_1,\ldots,B_m)$ belongs to $\widetilde{\U}(B_1,\ldots,B_m)$ by construction. Conversely, every product of such signings stays in $\widetilde{\U}(B_1,\ldots,B_m)$ by the closure argument above. So $\widetilde{\U}(B_1,\ldots,B_m)$ is exactly the signed version of $\U(B_1,\ldots,B_m)$.

For the cardinality, fix $i \in \{1,\ldots,m\}$. One may first choose $\sigma(i)=t$ and then, for every element of~$B_t$, choose independently a signed image in~$B_i$, meaning a target in~$B_i$ together with a sign. Hence there are $(2 b_i)^{b_t}$ possibilities. As in the proof of \cref{prop:lower-zero-free-strong}, these choices are independent over~$i$, and the resulting parametrization is injective. This yields the displayed formula.
\end{proof}

The relevance of signed versions is that strong connectivity forces irreducibility.

\begin{proposition}\label{prop:lower-zero-free-signed-irreducible}
Let $S \subseteq \{0,1\}^{n \times n}$ be a strongly connected semigroup, and let $\widetilde{S}$ be its signed version. Then $\widetilde{S}$ is irreducible.
\end{proposition}
\begin{proof}
Let $\{\vec{0}\} \ne \V \subseteq \Q^n$ be $\widetilde{S}$-invariant. Choose $\vec{v} \in \V$ nonzero and pick $p \in \{1,\ldots,n\}$ with ${\vec v}_p \ne 0$.
Let $q \in \{1,\ldots,n\}$.
Since $S$ is strongly connected, there exists $X \in S$ with $X_{q p}=1$.
Choose two matrices $Y^+,Y^- \in \Sigma(X)$ that agree in every entry except at position $(q,p)$, where
\[
  Y^+_{q p} = 1, \qquad Y^-_{q p} = -1\,.
\]
Then $Y^+,Y^- \in \widetilde{S}$, and hence
\[
  (Y^+ - Y^-) \vec{v} \in \V\,.
\]
The matrix $Y^+ - Y^-$ has exactly one nonzero entry, namely~$2$ at position $(q,p)$, so
\[
  (Y^+ - Y^-) \vec{v} = 2 {\vec v}_p \, \vec{e}_q\,.
\]
Since ${\vec v}_p \ne 0$, it follows that $\vec{e}_q \in \V$.
As $q$ was arbitrary, we obtain $\V = \Q^n$.
\end{proof}

We can now combine the previous propositions.

\begin{proposition}\label{prop:lower-zero-free}
For every integer $n \ge 1$ there exists a finite zero-free irreducible integer matrix semigroup $S \subseteq \{-1,0,1\}^{n \times n}$ with at least
\[
  2^{2 \lfloor n/2 \rfloor \lfloor n/4 \rfloor}
\]
elements.
\end{proposition}
\begin{proof}
For $n=1$, the singleton $S = \{I_1\}$ has the required properties.
So assume $n \ge 2$.
Let
\[
  N \ \coloneqq \ \left\lfloor \tfrac{n}{2} \right\rfloor,
  \qquad
  k \ \coloneqq \ \left\lfloor \tfrac{n}{4} \right\rfloor.
\]
Choose a partition of $\{1,\ldots,n\}$ into one block of size~$N$, exactly $k$ blocks of size~$2$, and the remaining blocks singletons.
Let $\U_n$ and $\widetilde{\U}_n$ be the corresponding semigroups from \cref{prop:lower-zero-free-strong,prop:lower-zero-free-signed-family}.

By \cref{prop:lower-zero-free-strong}, the semigroup $\U_n$ is strongly connected.
By \cref{prop:lower-zero-free-signed-family}, the semigroup $\widetilde{\U}_n$ is its signed version and is zero-free.
Hence, by \cref{prop:lower-zero-free-signed-irreducible}, the semigroup~$\widetilde{\U}_n$ is irreducible.

It remains to estimate its size.
For each of the $k$ blocks of size~$2$, the corresponding factor in the formula from \cref{prop:lower-zero-free-signed-family} is
\[
  \sum_{t=1}^m (2 \cdot 2)^{b_t}
  =
  \sum_{t=1}^m 4^{b_t}
  \ge
  4^N\,.
\]
All remaining factors are at least~$1$, so
\[
  |\widetilde{\U}_n|
  \ge
  (4^N)^k
  =
  2^{2 N k}
  =
  2^{2 \lfloor n/2 \rfloor \lfloor n/4 \rfloor}.
\]
This completes the proof.
\end{proof}

\section{Conclusions and open problems}\label{sec:conclusions}
Our $3^{n^2}$ bound on the cardinality of finite irreducible rational matrix semigroups (\cref{thm:main-bound}) breaks the barrier of $2^{\OO(n^2 \log n)}$ suggested in previous works \cite{Schutzenberger62,Berstel2011,AlmeidaSteinberg09,BumpusHKST20}.
Up to a constant in the exponent our bound is tight (\cref{prop:lower}).
As discussed in the introduction, the largest finite rational $n \times n$ matrix groups are known explicitly, using the classification of finite simple groups.
It would be similarly intriguing to identify the largest finite irreducible rational $n \times n$ matrix semigroups.
By our results, they have $2^{\Theta(n^2)}$ elements.

While we now have a good understanding of the maximal cardinality of finite irreducible matrix semigroups, for their diameter there is still a gap between the best known lower bound of $2^{n + \Theta(\sqrt{n \log n})}$ ~\cite{Panteleev15} and the upper bound of~$2^{\Theta(n^2)}$ implied by our work. As mentioned in the introduction, the gap for the mortality threshold is even bigger, since only polynomial lower bounds are known.

It would be also interesting to understand if the upper bound from~\cref{thm:main-bound} can be made more precise if it is also allowed to depend on the number of generators. The examples from our lower bounds have exponentially many generators. For transformation semigroups (equivalently, semigroups of matrices with entries in $\{0, 1\}$ with exactly one nonzero entry in every row) this question was studied in~\cite{Holzer2004}. The diameter of transformation semigroups with a bounded number of generators was studied in~\cite{Ryzhikov2025DiamArxiv}. The cardinality of aperiodic transformation semigroups was studied in~\cite{Brzozowski2015}.



\bibliographystyle{alphaurl}
\bibliography{references}

\appendix
\section{Semigroup Theory} \label{sec:background-semigroup}

We draw on some background from semigroup theory, and follow \cite{Howie1995}. We denote the zero of a semigroup by $0$.
A semigroup without zero is called \emph{simple} if it has no proper ideals.
A semigroup~$S$ with zero is called \emph{0-simple} if
\begin{enumerate}
\item[(i)] its only ideals are the zero ideal $\{0\}$ and~$S$;
\item[(ii)] $S^2 \ne \{0\}$.
\end{enumerate}

A \emph{minimal} ideal of a semigroup is an ideal that is minimal within the set of all ideals.
A \emph{0-minimal} ideal of a semigroup with zero is an ideal that is minimal within the set of all nonzero ideals.
Every finite semigroup has a minimal ideal.
Every finite semigroup with zero $S \ne \{0\}$ also has a 0-minimal ideal.
We have the following proposition.
\begin{proposition}[{\cite[Proposition~3.1.3]{Howie1995}}] \label{prop:3.1.3}
If $T$~is a \mbox{(0-)}minimal ideal of a semigroup then either $T^2 = \{0\}$ or $T$~is a \mbox{(0-)}simple semigroup.
\end{proposition}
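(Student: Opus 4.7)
The plan is to treat the minimal and 0-minimal cases in parallel and split only where a zero is genuinely needed. Writing $S$ for the ambient semigroup, I will first note that $T^2$ is itself an $S$-ideal contained in $T$, since $S T^2 = (ST) T \subseteq T \cdot T = T^2$ and symmetrically on the right. Consequently, unless we are in the case $T^2 = \{0\}$ (which can only arise for a 0-minimal ideal), the \mbox{(0-)}minimality of $T$ among the (nonzero) $S$-ideals forces $T^2 = T$. Iterating yields $T = T^n$ for every $n \ge 1$; I will use the instance $T = T^3$.

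Assume from now on that $T^2 \ne \{0\}$. The goal then reduces to verifying the structural axiom of \mbox{(0-)}simplicity: every (nonzero) ideal $I$ of the semigroup $T$ equals $T$. Let $I$ be such an ideal and pick some $i \in I$, choosing $i \ne 0$ in the 0-case. Because $T$ is an $S$-ideal, the principal $S$-ideal $S^1 i S^1$ is contained in $T$, and it is nonzero (it contains $i$). By \mbox{(0-)}minimality of $T$ this forces $S^1 i S^1 = T$, and hence also $S^1 I S^1 = T$.

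The core calculation is then
\[
T \ = \ T^3 \ = \ T \cdot (S^1 I S^1) \cdot T \ \subseteq \ (T S^1)\, I \,(S^1 T) \ \subseteq \ T I T \ \subseteq \ I,
\]
where the second-to-last inclusion uses that $T$ is an $S$-ideal (so both $T S^1$ and $S^1 T$ lie inside $T$), and the final one uses that $I$ is an ideal of the semigroup $T$. This forces $I = T$ and completes the verification of axiom (i) of \mbox{(0-)}simplicity. Axiom (ii) in the 0-simple case is just the standing assumption $T^2 \ne \{0\}$.

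I do not foresee a real obstacle. The only subtle points are remembering to pick $i \ne 0$ in the 0-minimal case, so that $S^1 i S^1$ is genuinely a \emph{nonzero} $S$-ideal (otherwise 0-minimality gives no information), and noting that in the pure minimal case $T$ contains no zero, so the branch $T^2 = \{0\}$ cannot occur and the argument above directly produces simplicity.
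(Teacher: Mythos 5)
Your proof is correct. Note, however, that the paper does not actually prove this proposition: it is stated as a citation of Proposition~3.1.3 in Howie's book and used as a black box, so there is no in-paper proof to compare against. Your argument is essentially the standard textbook one: observe that $T^2$ is an $S$-ideal inside $T$, so \mbox{(0-)}minimality forces $T^2 = T$ (hence $T = T^3$) unless $T^2 = \{0\}$; then, for a (nonzero) ideal $I$ of $T$ and a (nonzero) $i \in I$, use that $S^1 i S^1$ is a nonzero $S$-ideal inside $T$ to get $S^1 I S^1 = T$, and conclude $T = T^3 = T(S^1 I S^1)T \subseteq T I T \subseteq I$. All the inclusions you invoke check out, the special handling of the zero element is done where it matters (choosing $i \ne 0$ so that 0-minimality applies), and the remark that the $T^2 = \{0\}$ branch is vacuous for a semigroup without zero is accurate. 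This matches Howie's proof, so there is nothing to flag.
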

An idempotent~$e \ne 0$ is called \emph{primitive} if for every idempotent $f$
\[
 e f  =  f e  =  f  \ne  0 \quad \text{implies} \quad e  =  f\,.
\]
A semigroup is called \emph{completely} \mbox{(0-)}simple if it is \mbox{(0-)}simple and has a primitive idempotent.
We have the following proposition.
\begin{proposition}[{\cite[Proposition~3.2.1]{Howie1995}}] \label{prop:3.2.1}
Every finite \mbox{(0-)}simple semigroup is completely \mbox{(0-)}simple.
\end{proposition}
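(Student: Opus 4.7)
The plan is to exhibit a primitive idempotent in any finite $(0$-$)$simple semigroup $S$, which yields complete $(0$-$)$simplicity by definition. I would organize this around three steps: produce a nonzero idempotent, totally order the nonzero idempotents by a natural relation, and then argue that any minimal element of this order must be primitive.

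\textbf{Step 1 (existence of a nonzero idempotent).} For the simple case, any $s \in S$ has an eventually periodic power sequence $s, s^2, \ldots$, so some $s^k$ is idempotent; since $S$ has no zero, $s^k$ is automatically nonzero. For the $0$-simple case, condition (ii) gives $S^2 \ne \{0\}$, and since $S^2$ is an ideal, $0$-simplicity forces $S^2 = S$ and hence $S^n = S$ for every $n \ge 1$. I would then show that $S$ cannot be a nil semigroup: a finite nil semigroup is nilpotent in the sense that $S^n = \{0\}$ for some $n$, which contradicts $S^n = S \ne \{0\}$. Therefore some $a \in S$ satisfies $a^k \ne 0$ for all $k$, and the eventually periodic part of the sequence $a, a^2, \ldots$ contains a nonzero idempotent.

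\textbf{Steps 2 and 3 (minimal nonzero idempotent is primitive).} On the set $E^{*} \coloneqq E(S) \setminus \{0\}$ of nonzero idempotents I would introduce the natural relation
\[
 f \le e \quad \iff \quad e f = f e = f.
\]
Reflexivity is idempotence; antisymmetry follows since $f \le e$ and $e \le f$ give $e = e f = f$; transitivity from $g \le f \le e$ proceeds via $e g = e (f g) = (e f) g = f g = g$ and, symmetrically, $g e = g$. Since $E^{*}$ is finite and nonempty by Step~1, pick a $\le$-minimal element $e_0$. If $f \in E(S)$ and $e_0 f = f e_0 = f \ne 0$, then $f \in E^{*}$ and $f \le e_0$, so minimality forces $f = e_0$. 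This is exactly the primitivity condition, completing the proof.

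The substantive obstacle is Step~1 in the $0$-simple case, where I need the classical fact that a finite nil semigroup is nilpotent; once a nonzero idempotent is in hand, the remainder is a short order-theoretic argument exploiting finiteness and idempotence. Everything else --- the partial order, its minimal element, and the primitivity verification --- is routine and uses no further structure of $S$.
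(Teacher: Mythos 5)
The paper states this proposition as a citation to Howie (Proposition~3.2.1) and does not give its own proof, so there is nothing to compare against directly. Your argument is correct and is, in essence, the standard textbook route (and very close to what appears in Howie). Step~1 in the simple case is immediate. In the $0$-simple case, the observations that $S^2$ is an ideal and that $0$-simplicity then forces $S^2=S$, hence $S^n=S$ for all $n$, are fine; from there, the classical fact that a finite nil semigroup is nilpotent (a short pigeonhole argument on prefix products: if $a_1\cdots a_{m+1}\neq 0$ with $m=|S|$, two prefixes coincide, say $p_i=p_iu$, whence $p_i=p_iu^k=0$ for $k$ with $u^k=0$) does exactly what you need to produce $a$ with $a^k\neq 0$ for all $k$, and the eventual periodicity of $(a^k)_k$ yields a nonzero idempotent. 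Your verification that $f\le e \iff ef=fe=f$ is a partial order on nonzero idempotents is correct, and the primitivity of a $\le$-minimal idempotent is a direct unwinding of the definitions. This is a complete and valid proof of the cited proposition.
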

In particular, every finite \mbox{(0-)}simple semigroup has a primitive nonzero idempotent.

A \emph{Rees 0-matrix semigroup} $M^0[G; I, \Lambda; P]$ is a semigroup with zero $(I \times G \times \Lambda) \cup \{0\}$, where $G$~is a group, $I, \Lambda$ are nonempty index sets, and the ``sandwich'' matrix $P = (p_{\lambda i})$ is a $\Lambda \times I$ matrix with entries in the 0-group $G^0 (= G \cup \{0\})$ such that no row or column of~$P$ consists entirely of zeros.
Multiplication is defined by
\begin{align*}
  (i, a, \lambda) (j, b, \mu) \ &= \ \begin{cases} (i,a p_{\lambda j} b, \mu) & \text{ if } p_{\lambda j} \ne 0, \\
                                                   0                          & \text{ if } p_{\lambda j} = 0,
                                     \end{cases} \\
  (i,a,\lambda) 0 \ &= \ 0 (i,a,\lambda) \ = \ 0 0 \ = \ 0\,.   
\end{align*}
A \emph{Rees matrix semigroup} $M[G; I, \Lambda; P]$ is analogously defined without zero.
The following theorem is an important result of semigroup theory: one can represent completely \mbox{(0-)}simple semigroups as Rees \mbox{(0-)}matrix semigroups.
\begin{theorem}[The Rees-Suschkewitsch Theorem {\cite[Theorems 3.2.3 and 3.3.1]{Howie1995}}] \label{thm:Rees}
Every Rees \mbox{(0-)}matrix semigroup is completely \mbox{(0-)}simple, and every completely \mbox{(0-)}simple semigroup is isomorphic to a Rees \mbox{(0-)}matrix semigroup.
\end{theorem}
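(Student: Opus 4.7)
The statement combines two classical assertions, and I would treat them separately since the arguments are quite different in flavor.

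For the first direction, that every Rees \mbox{(0-)}matrix semigroup $M^0[G;I,\Lambda;P]$ is completely \mbox{(0-)}simple, I would first verify 0-simplicity. Since $P$ has no zero row or column, for any nonzero element $(i,a,\lambda)$ and any target $(i',a',\lambda')$ I can pick indices $j,\mu$ with $p_{\mu i}\neq 0$ and $p_{\lambda j}\neq 0$, and then choose group elements $b,c\in G$ making $(i',b,\mu)(i,a,\lambda)(j,c,\lambda')=(i',a',\lambda')$ by direct computation. This shows the principal ideal of any nonzero element is the whole semigroup, so the only ideals are $\{0\}$ and $S$. Nondegeneracy $S^2\neq\{0\}$ follows from the same nonzero entry of $P$. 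Next, I would identify all nonzero idempotents: a short calculation shows $(i,a,\lambda)$ is idempotent exactly when $p_{\lambda i}\neq 0$ and $a=p_{\lambda i}^{-1}$. To see primitivity of any such idempotent $e=(i,p_{\lambda i}^{-1},\lambda)$, suppose $f=(j,b,\mu)$ is a nonzero idempotent with $ef=fe=f$; inspecting the multiplication rule forces $i=j$ and $\lambda=\mu$, which together with the idempotence of $f$ forces $b=p_{\mu j}^{-1}$, hence $e=f$. The semigroup without zero case is strictly easier and follows by suppressing every occurrence of the zero.

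For the converse, I would follow the standard route via Green's relations. Let $S$ be completely \mbox{(0-)}simple and fix a primitive nonzero idempotent $e$; the first fact to establish is that $G := eSe\setminus\{0\}$ is a group with identity $e$ (this is essentially the content of the ``maximal subgroup at $e$'' construction used in \Cref{sub:G} of this paper). Let $\{R_i\}_{i\in I}$ enumerate the nonzero $\mathcal{R}$-classes of $S$ and $\{L_\lambda\}_{\lambda\in\Lambda}$ the nonzero $\mathcal{L}$-classes, with $e\in R_{i_0}\cap L_{\lambda_0}$. Using 0-simplicity one shows that all nonzero elements lie in a single $\mathcal{D}$-class, so each intersection $R_i\cap L_{\lambda_0}$ and $R_{i_0}\cap L_\lambda$ is nonempty; pick representatives $u_i\in R_{i_0}\cap L_i$ and $v_\lambda\in R_\lambda\cap L_{\lambda_0}$, normalized so that $u_{i_0}=v_{\lambda_0}=e$. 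Define the sandwich matrix by $p_{\lambda i}:=v_\lambda u_i$ (which lies in $G\cup\{0\}$ since $v_\lambda u_i\in eSe$), and verify that $P$ has no zero row or column by using primitivity together with the fact that every $\mathcal{H}$-class in the $\mathcal{D}$-class of $e$ that contains an idempotent is a group. Finally, define
\[
\theta:M^0[G;I,\Lambda;P]\to S,\qquad (i,g,\lambda)\mapsto u_i g v_\lambda,\qquad 0\mapsto 0,
\]
and check that $\theta$ is a bijection onto $S$ (using Green's lemma to show every nonzero element factors uniquely in this form) and a homomorphism (a direct calculation: $\theta((i,a,\lambda))\theta((j,b,\mu)) = u_i a v_\lambda u_j b v_\mu = u_i a p_{\lambda j} b v_\mu$, which matches the Rees rule and vanishes precisely when $p_{\lambda j}=0$).

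The main obstacle is the converse direction, and within it the careful manipulation of Green's relations: showing that the nonzero part of $S$ forms a single $\mathcal{D}$-class (a consequence of 0-simplicity together with the existence of a primitive idempotent), that each relevant intersection $R_i\cap L_\lambda$ contains a group-like or zero element controlled by $p_{\lambda i}$, and that the chosen representatives $u_i,v_\lambda$ yield a well-defined bijection. This is precisely Green's lemma plus the structural fact that every nonzero $\mathcal{H}$-class in the $\mathcal{D}$-class of $e$ is in bijection with $G$ via left and right multiplication by the representatives. The routine calculations (idempotent identification, primitivity, multiplication table) are straightforward; the delicate part is pinning down the index sets and the normalization of representatives so that the sandwich matrix is well defined and the isomorphism formula works on the nose.
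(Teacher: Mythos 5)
The paper itself does not prove this statement; it is cited verbatim from Howie's textbook (Theorems~3.2.3 and~3.3.1 of \cite{Howie1995}) and used as a black box in \cref{sec:background-semigroup} to derive \cref{lem:group}. So there is no ``paper's own proof'' to compare against, and the relevant question is simply whether your sketch is sound. It is, and it is essentially the standard argument in Howie: in the forward direction you verify 0-simplicity by showing every nonzero element generates $S$ as an ideal, identify the idempotents as $(i, p_{\lambda i}^{-1}, \lambda)$, and check primitivity by inspection of the multiplication rule; in the converse you work through Green's relations, choose coset representatives $u_i, v_\lambda$, set $p_{\lambda i} \coloneqq v_\lambda u_i$, and verify $(i,g,\lambda) \mapsto u_i g v_\lambda$ is an isomorphism.

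Two small cautions. First, there is an indexing slip in ``$u_i \in R_{i_0} \cap L_i$ and $v_\lambda \in R_\lambda \cap L_{\lambda_0}$'': the subscripts $i$ and $\lambda$ range over disjoint index sets, so $L_i$ and $R_\lambda$ are not meaningful. What you want is $u_i \in R_i \cap L_{\lambda_0}$ and $v_\lambda \in R_{i_0} \cap L_\lambda$ (or the transposed convention), which makes $v_\lambda u_i \in e S e$ and $u_i g v_\lambda \in R_i \cap L_\lambda$ via Green's lemma, as intended. Second, you gesture to the paper's \cref{sub:G} (and implicitly \cref{lem:group}) to justify that $e S e \setminus \{0\}$ is a group --- but in the paper that fact is \emph{derived from} the Rees-Suschkewitsch theorem, so leaning on it here would be circular. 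You need the independent argument, which is straightforward: for $x, y \in e S e \setminus \{0\}$ one shows $xy \ne 0$ and produces an inverse for $x$ in $e S e$ using 0-simplicity (there exist $a, b$ with $a x b = e$), primitivity of $e$, and the standard fact that $e x e \ne 0$ implies $x \mathcal{H} e$ in a completely 0-simple semigroup. With these repairs your sketch is correct and follows the canonical route.
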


We use the Rees-Suschkewitsch theorem to derive the following lemma.
\begin{lemma} \label{lem:group}
Let $S$ be a completely \mbox{(0-)}simple semigroup, and let $e \in S \setminus \{0\}$ be an idempotent.
Then $e S e \setminus \{0\}$ is a group with identity~$e$.
\end{lemma}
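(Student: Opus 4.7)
The plan is to invoke the Rees–Suschkewitsch theorem (\Cref{thm:Rees}) and work inside a concrete Rees \mbox{(0-)}matrix representation. Without loss of generality I assume $S = M^0[G; I, \Lambda; P]$; the completely simple case $S = M[G; I, \Lambda; P]$ is obtained by suppressing every ``zero disjunct'' in what follows. The idempotent $e \neq 0$ corresponds to a triple $(i, a, \lambda)$ with $i \in I$, $\lambda \in \Lambda$, $a \in G$. The condition $e^2 = e$ forces $p_{\lambda i} \neq 0$ (otherwise $e^2 = 0 \neq e$), and then it reads $a p_{\lambda i} a = a$ in $G$, which yields $p_{\lambda i} = a^{-1}$. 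This identity will do essentially all the subsequent work.

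Next I would compute $eSe$ explicitly. For $x = (j, b, \mu) \in S$, the definition of multiplication gives
\[
 e x e \ = \ \begin{cases} (i,\, a\, p_{\lambda j}\, b\, p_{\mu i}\, a,\, \lambda) & \text{if } p_{\lambda j} \neq 0 \text{ and } p_{\mu i} \neq 0, \\ 0 & \text{otherwise.} \end{cases}
\]
Since no row or column of $P$ consists entirely of zeros, we can pick $j_0$ and $\mu_0$ with $p_{\lambda j_0}, p_{\mu_0 i} \in G$; letting $b$ range over $G$, the middle coordinate $a p_{\lambda j_0}\, b\, p_{\mu_0 i} a$ ranges over all of $G$. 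Hence $eSe \setminus \{0\} = \{(i, c, \lambda) \mid c \in G\}$.

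Finally, I would verify directly that this set forms a group with identity $e$. Using $p_{\lambda i} = a^{-1}$, the multiplication collapses to
\[
(i, c_1, \lambda)(i, c_2, \lambda) \ = \ (i,\, c_1 a^{-1} c_2,\, \lambda),
\]
which stays in $eSe \setminus \{0\}$ (closure, with no stray zeros). The element $e = (i, a, \lambda)$ acts as a two-sided identity, since $(i, a, \lambda)(i, c, \lambda) = (i, a a^{-1} c, \lambda) = (i, c, \lambda)$ and symmetrically on the right. Inverses are given by the explicit formula $(i, c, \lambda)^{-1} = (i,\, a c^{-1} a,\, \lambda)$, which multiplies out to $e$. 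In fact the bijection $(i, c, \lambda) \mapsto a^{-1} c$ is a group isomorphism $eSe \setminus \{0\} \to G$, giving a satisfying interpretation of the result.

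There is no real obstacle here: once one commits to the Rees matrix representation, the argument is a bookkeeping exercise driven by the equation $p_{\lambda i} = a^{-1}$. The only subtle point is to justify that $eSe \setminus \{0\}$ is actually nonempty and closed under multiplication, and this is precisely the place where the hypothesis that $P$ has no zero row or column is used.
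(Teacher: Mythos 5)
Your proof is correct and takes essentially the same route as the paper's: reduce to a Rees \mbox{(0-)}matrix semigroup via the Rees--Suschkewitsch theorem, extract $p_{\lambda i} = a^{-1} \neq 0$ from idempotence, compute $eSe \setminus \{0\} = \{(i,c,\lambda) \mid c \in G\}$ using that $P$ has no zero row or column, and then read off the group structure (the paper's element $p$ is your $a^{-1}$, and the paper's isomorphism $g \mapsto gp$ is the mirror image of yours). The only cosmetic difference is that the paper establishes the group property via the isomorphism $\phi$, while you verify the identity and inverse axioms directly and mention the isomorphism as a corollary.
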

\begin{proof}
By \cref{thm:Rees} we can assume without loss of generality that $S$~is a Rees \mbox{(0-)}matrix semigroup $M[G; I, \Lambda; P]$ (or $M^0[G; I, \Lambda; P]$).
Let $e \ne 0$ be a nonzero idempotent.
Write $(i, p^{-1}, \lambda) \coloneqq e$.
Since $e^2 = e \ne 0$, we must have $p_{\lambda i} \ne 0$.
Thus, $(i, p^{-1}, \lambda)^2 = (i,p^{-1} p_{\lambda i} p^{-1}, \lambda)$, and so $p = p_{\lambda i} \ne 0$.

We show that
\begin{equation}
e S e \setminus \{0\} \ = \ \{(i, g, \lambda) \mid g \in G\}\,. \label{eq:group}
\end{equation}
Indeed, the inclusion $\mathord{\subseteq}$ follows from the definition of multiplication in a Rees \mbox{(0-)}matrix semigroup, since any nonzero product $e (j, h, \mu) e$ has the form $(i, g, \lambda)$ for some $g \in G$.
Towards the reverse inclusion, since the $\lambda$-row and $i$-column of~$P$ are not all zeros, there are $j$ with $p_{\lambda j}\neq 0$ and $\mu$ with $p_{\mu i}\neq 0$.
For any $g\in G$, set
\[
h \ \coloneqq \ p_{\lambda j}^{-1} p g p p_{\mu i}^{-1}.
\]
Then
\[
e (j,h,\mu) e
\ = \ (i, p^{-1} p_{\lambda j} h p_{\mu i} p^{-1}, \lambda)
\ = \ (i, g, \lambda)\,.
\]
This proves~\cref{eq:group}.

It follows from~\cref{eq:group} that $e S e \setminus \{0\}$ is a subsemigroup of~$S$; indeed, since $p = p_{\lambda i} \ne 0$, we have $(i,g,\lambda) (i,h,\lambda) = (i,g p h,\lambda) \in e S e \setminus \{0\}$.

Define
\[
 \phi : e S e \setminus \{0\} \to G  \qquad \text{with} \qquad \phi((i, g, \lambda)) \ \coloneqq \ g p\,.
\]
Then $\phi$~is injective (right-cancellation in~$G$).
It follows from~\cref{eq:group} that $\phi$~is surjective; indeed, for any $g \in G$ we have $(i,g p^{-1},\lambda) \in e S e \setminus \{0\}$ and
$\phi((i,g p^{-1},\lambda))=g$.
In fact, $\phi$~is an isomorphism; indeed,
\begin{align*}
 \phi((i,g,\lambda) (i,h,\lambda)) \ &= \ \phi((i, g p h, \lambda)) \ = \ g p h p \ = \  \phi((i,g,\lambda)) \phi((i,h,\lambda)) \qquad \text{and}\\
 \phi(e) \ &= \ \phi((i,p^{-1},\lambda)) \ = \ p^{-1} p \ = \ 1\,.
\end{align*}
It follows that $e S e \setminus \{0\}$ is a group with identity~$e$.
\end{proof}


\section{Proofs from \texorpdfstring{\Cref{sec:upper-two}}{Section~\ref{sec:upper-two}}}

The following lemma, which is basic group theory, will be used twice.
\begin{lemma} \label{lem:group-ker}
Let $G$ be a group, $H$ a semigroup, and let $\varphi: G \to H$ be a surjective homomorphism.
Set $e_H \coloneqq \varphi(1_G)$.
Then $H$~is a group with identity~$e_H$.
Moreover, if $\{g \in G \mid \varphi(g) = e_H\} = \{1_G\}$, then $\varphi$~is a group isomorphism.
\end{lemma}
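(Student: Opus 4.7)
The plan is to verify first that $H$ acquires a group structure from $G$ via the surjection $\varphi$, and then use triviality of the fibre over $e_H$ to deduce injectivity in the second claim. Since $H$ is given to be a semigroup, associativity is already in hand, so the work consists of exhibiting an identity and inverses, plus an injectivity argument.

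For the identity, I would take an arbitrary $h \in H$ and use surjectivity to write $h = \varphi(g)$ for some $g \in G$. Then $e_H h = \varphi(1_G)\varphi(g) = \varphi(1_G \cdot g) = \varphi(g) = h$, and symmetrically $h e_H = h$. Hence $e_H$ is a two-sided identity of $H$. For inverses, given $h = \varphi(g) \in H$, set $h' \coloneqq \varphi(g^{-1})$; then $h h' = \varphi(g g^{-1}) = \varphi(1_G) = e_H$ and $h' h = \varphi(g^{-1} g) = e_H$. This establishes that $H$ is a group with identity $e_H$.

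For the second assertion, assume $\{g \in G \mid \varphi(g) = e_H\} = \{1_G\}$. Suppose $\varphi(g_1) = \varphi(g_2)$. Since $H$ is now known to be a group and $\varphi$ is a homomorphism, we may compute $\varphi(g_1 g_2^{-1}) = \varphi(g_1)\varphi(g_2)^{-1} = e_H$, so by the hypothesis $g_1 g_2^{-1} = 1_G$, i.e.\ $g_1 = g_2$. Thus $\varphi$ is injective, and combined with surjectivity it is a bijective semigroup homomorphism between groups, hence a group isomorphism.

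There is no real obstacle here; the only point requiring mild care is that the first conclusion (that $H$ is a group) must be invoked before speaking of $\varphi(g_2)^{-1}$ in $H$ in the injectivity step, which is why I would order the two parts exactly as above rather than try to prove injectivity directly from the kernel condition without first knowing $H$ has inverses.
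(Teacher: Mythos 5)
Your proof is correct and follows essentially the same route as the paper's: verify that $e_H$ is a two-sided identity and that $\varphi(g^{-1})$ supplies inverses, then use the trivial-fibre hypothesis to deduce injectivity from $\varphi(g_1 g_2^{-1}) = e_H$. The only cosmetic difference is that the paper's injectivity step rewrites $\varphi(g_1)\varphi(g_2^{-1})$ as $\varphi(g_2)\varphi(g_2^{-1})$ and stays entirely in terms of images of elements of $G$, avoiding the need to invoke $\varphi(g_2)^{-1} \in H$, whereas you invoke the just-established group structure of $H$; your ordering note is precisely right for your version.
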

\begin{proof}[Proof of \Cref{lem:group-ker}]
For any $g \in G$, we have $\varphi(g) \varphi(g^{-1}) = \varphi(g g^{-1}) = \varphi(1_G) = e_H$ and $\varphi(g^{-1}) \varphi(g) = e_H$.
Since $\varphi$~is surjective, it follows that every element of~$H$ has a two-sided inverse.
Also, for any $g\in G$,
\[
e_H \varphi(g) \ = \ \varphi(1_G) \varphi(g) \ = \ \varphi(1_G g) \ = \ \varphi(g) \qquad \text{and} \qquad
\varphi(g) e_H \ = \ \varphi(g 1_G) \ = \ \varphi(g)\,,
\]
so $e_H$~is a two-sided identity.
Hence $H$ is a group.

Assume that $\{g \in G \mid \varphi(g) = e_H\} = \{1_G\}$.
To show that $\varphi$~is injective, let $\varphi(g_1) = \varphi(g_2)$.
Then
\[
 \varphi(g_1 g_2^{-1}) \ = \ \varphi(g_1) \varphi(g_2^{-1}) \ = \ \varphi(g_2) \varphi(g_2^{-1}) \ = \ \varphi(g_2 g_2^{-1}) \ = \ \varphi(1_G) \ = \ e_H\,.
\]
By the assumption, $g_1 g_2^{-1} = 1_G$.
Thus, $g_1 = g_2$.
It follows that $\varphi$~is injective.
Hence, it is a group isomorphism.
\end{proof}

\begin{proof}[Proof of \cref{lem:bound-on-group-size}]
By a folklore result, any finite subgroup of~$\GL_n(\Q)$ is conjugate to a finite subgroup of~$\GL_n(\Z)$; see, e.g., \cite[Theorem~1.6]{KuzmanovichPavlichenkov02}.
So it suffices to show the lemma for integer matrix groups.
Let $G$ be a finite subgroup of~$\GL_n(\Z)$.

Fix an \emph{odd} prime $p \ge 3$ and write $\nu_p : \Z^{n \times n} \to (\Z/p \Z)^{n \times n}$ for the map that reduces each entry mod~$p$.
Minkowski~\cite{Minkowski87} (cf.~\cite[Proposition~1.3]{KuzmanovichPavlichenkov02}) proved in~1887 that for any matrix $X \in \Z^{n \times n}$ with $\nu_p(X) = I_n$ (where $I_n$~is the $n \times n$ identity matrix), if there is a prime~$q$ with $X^q = I_n$ then $X = I_n$.

We use \cref{lem:group-ker} to show that the restriction of~$\nu_p$ to~$G$,
\[
 \nu_p|_G: G \to \nu_p(G)\,,
\]
is a group isomorphism from~$G$ to $\nu_p(G)$.
Towards a contradiction, suppose that there is $g \in G \setminus \{I_n\}$ with $\nu_p(g) = I_n$.
Since $G$~is finite, the order of~$g$, say~$m$, is finite, and we have $g^m = I_n$ with $m > 1$.
Let $q$~be a prime factor of~$m$.
Then $h \coloneqq g^{m/q}$ has order~$q$, i.e., $h^q = I_n$.
Since $\nu_p$~is a homomorphism, $\nu_p(h) = \nu_p(g^{m/q}) = \nu_p(g)^{m/q} = I_n^{m/q} = I_n$.
Minkowski's result mentioned above implies $h = I_n$, contradicting that $h$~has order $q \ge 2$.
Thus, $I_n$~is the only matrix from~$G$ that $\nu_p$~maps to~$I_n$.
Hence, by \cref{lem:group-ker}, $G$~is isomorphic to $\nu_p(G)$.

Take $p=3$.
Since $\nu_3(G) \subseteq (\Z/3\Z)^{n \times n} \setminus \{O_n\}$, we have $|G| = |\nu_3(G)| \le 3^{n^2}-1$.
\end{proof}


\begin{proof}[Proof of \Cref{lem:Hb-bound}]
For any $h_1, h_2 \in H_b$ we have $h_1 h_2 \in H_b$, as $h_1 h_2 \vec{y} = h_1 \vec{y} = \vec{y}$ holds for all $\vec{y} \in \cL_b$.
Let $h \in H_b$.
We show that $h^{-1} \in H_b$.
Indeed, for all $\vec{y} \in \cL_b$ we have $h^{-1} \vec{y} = h^{-1} (h \vec{y}) = (h^{-1} h) \vec{y} = \vec{y}$.
We conclude that $H_b$~is a group.
It is finite, as $G \supseteq H_b$ is finite.

Let $\M_b \subseteq \Q^{r}$ be a vector space such that $\Q^r = \cL_b \oplus \M_b$ (so that $\dim \M_b = r - \ell_b = w_b > 0$ using \cref{lem:lw}).
In a basis adapted to $\Q^r = \cL_b \oplus \M_b$, every $h \in H_b$ has the block form
\[
 h \ = \ \begin{pmatrix} I_{\ell_b} & N \\ 0 & Q \end{pmatrix} \qquad \text{where } Q \in \GL_{w_b}(\Q),  \ N : \M_b \to \cL_b\,,
\]
as $h$~fixes every $\vec{y} \in \cL_b$.

For $h \in H_b$ write $\pi(h) \coloneqq Q \in \GL_{w_b}(\Q)$; i.e., $\pi(h)$ is the restriction of~$h$ to~$\M_b$.
We show that $\pi : H_b \to \pi(H_b)$ is a group isomorphism from $H_b$ to~$\pi(H_b)$.
Trivially, $\pi$~is surjective onto~$\pi(H_b)$.
It is also a homomorphism, as
\[
\pi \left( \begin{pmatrix} I_{\ell_b} & N_1 \\ 0 & Q_1 \end{pmatrix} \begin{pmatrix} I_{\ell_b} & N_2 \\ 0 & Q_2 \end{pmatrix} \right) 
\ = \ Q_1 Q_2 \ = \ 
\pi \begin{pmatrix} I_{\ell_b} & N_1 \\ 0 & Q_1 \end{pmatrix} \pi \begin{pmatrix} I_{\ell_b} & N_2 \\ 0 & Q_2 \end{pmatrix}\,.
\]
Towards an application of \cref{lem:group-ker}, let $h \in H_b$ with $\pi(h) = I_{w_b}$, i.e., 
\[
 h \ = \ \begin{pmatrix} I_{\ell_b} & N \\ 0 & I_{w_b} \end{pmatrix} \qquad \text{for some matrix~$N$.}
\]
By an easy induction, it follows that
\[
 h^m \ = \ \begin{pmatrix} I_{\ell_b} & m N \\ 0 & I_{w_b} \end{pmatrix} \qquad \text{for all $m \ge 0$.}
\]
Since $G$~is a finite group, we have $h^m = I_r$ for some $m \ge 1$.
Thus, $m N = 0$.
Dividing by~$m$, we obtain $N = 0$; hence $h = I_r$.
Using \cref{lem:group-ker} it follows that $\pi : H_b \to \pi(H_b)$ is a group isomorphism and that $\pi(H_b)$~is a finite subgroup of~$\GL_{w_b}(\Q)$.

Thus, $H_b$ is isomorphic to a finite subgroup of~$\GL_{w_b}(\Q)$.
By \Cref{lem:bound-on-group-size}, $|H_b| + 1 \le 3^{w_b^2}$.
\end{proof}

\end{document}